\newtheorem{theorem}{Theorem}
\newcommand{\diff}{\mathop{}\!\text{d}}
\newtheorem{remark}{Remark}
\title{A Langevin sampling algorithm inspired by the Adam optimizer}
\author{ 
Benedict Leimkuhler        \\
	School of Mathematics\\
	University of Edinburgh\\
	Edinburgh, UK EH9 3FD \\
\texttt{b.leimkuhler@ed.ac.uk} 
\And
Ren\'e Lohmann\\
	School of Mathematics\\
	University of Edinburgh\\
	Edinburgh, UK EH9 3FD \\
    \texttt{r.lohmann@ed.ac.uk}
	\And
Peter A. Whalley \\
	Department of Statistics\\
	ETH Z\"{u}rich\\
        Z\"{u}rich, Switzerland\\
	\texttt{pwhalley@ethz.ch}
}
\begin{document}
\maketitle
\begin{abstract}
We present a framework for adaptive-stepsize MCMC sampling based on time-rescaled Langevin dynamics, in which the stepsize variation is dynamically driven by an additional degree of freedom. 
The use of an auxiliary relaxation equation allows accumulation of a moving average of a local monitor function and provides for precise control of the timestep while circumventing the need to modify the drift term in the physical system. Our algorithm is straightforward to implement and can be readily combined with any off-the-peg fixed-stepsize Langevin integrator.  As a particular example, we consider control of the stepsize by monitoring the norm of the log-posterior gradient, which takes inspiration from the Adam optimizer, the stepsize being automatically reduced in  regions of steep change of the log posterior and increased on plateaus, improving numerical stability and convergence speed. As in Adam, the stepsize variation depends on the recent history of the monitor function, which enhances stability and improves accuracy compared to more immediate control approaches. We demonstrate the potential benefit of this method--both in accuracy and in stability--in numerical experiments including Neal's funnel and a Bayesian neural network for classification of  MNIST data. 
\end{abstract}

\keywords{Sampling methods, computational statistics, Adam, Langevin dynamics,  adaptive or variable stepsize, Bayesian sampling, neural network }

\section{Introduction}
Monte Carlo sampling schemes are ubiquitous in modern-day science, engineering, and finance. They are used to quantify risk and uncertainty, parameterize statistical models, calculate thermodynamic quantities in physical models, and explore protein configurational states. The aim in sampling is to generate independent and identically distributed (i.i.d.) realizations $x_i$ of a random variable $X \in \mathbb{R}^d$ distributed according to a given probability law, which we assume is defined in terms of a positive, smooth density $\pi$.   Such samples can then be used to estimate probabilities of certain events or expectations of functions, to calculate uncertainties, assess or compare models, or to explore optimality of parameterizations in machine learning applications. One of the most powerful and widely used categories of sampling methods are Markov Chain Monte Carlo schemes (MCMC \cite{MCMCmachinelearning,HMC_Neal}), which generate the samples using Markov Chains that are ergodic with respect to the target probability measure.   The original MCMC scheme is the Metropolis-Hastings (MH) algorithm \cite{1953JChPh..21.1087M,1970Bimka..57...97H} which uses random proposals in conjunction with an accept-reject procedure (the MH-criterion) to generate the Markov Chain.   The MH framework is very general and allows many alternatives for proposal generation.  The use of Metropolis correction may, however,  add substantial computational burden and the rejection steps can slow convergence to the target distribution. Even with a warm-start assumption the convergence rate is known to scale polynomially with respect to dimension \cite{roberts1998optimal,wu2022minimax}.



In this article we consider Langevin dynamics-based methods.  We assume the target probability density can be defined in terms of a $C^2(\mathbb{R}^d)$ energy function $U:  \mathbb{R}^d \to \mathbb{R}$ which grows sufficiently rapidly as $x\rightarrow \infty$ so that 
\[
\pi(x) =  Z^{-1}\exp(-\beta U(x))
\]
is  Lebesgue integrable on $\mathbb{R}^d$.  Here $\beta$ is the reciprocal temperature and $Z \equiv \int_{\mathbb{R}^d} \exp(-\beta U(x)) dx$ is a normalizing constant so that $\pi$ integrates to one.  In the ``overdamped'' form of Langevin dynamics the It\^{o} stochastic differential equation
\begin{equation} \label{eq:odl}
{\rm d} x = -\nabla U(x) {\rm d}t + \sqrt{2 \beta^{-1}} {\rm d}W_t
\end{equation}
is used to generate the paths, where $(W_{t})_{t \geq 0}$ is a $d$-dimensional standard Brownian motion.  In practice, it is found that introducing a momentum vector $p$ can enhance the efficiency of the sampling process and the ``underdamped'' Langevin dynamics system is often used instead:
\begin{eqnarray} \label{eq:udl_1}
{\rm d} x & = & p {\rm d} t,\\
{\rm d} p & = & -\nabla U(x) {\rm d}t - \gamma p {\rm d}t + \sqrt{2\gamma \beta^{-1}} {\rm d}W_t. \label{eq:udl_2}
\end{eqnarray}

Convergence analysis of \eqref{eq:odl} and \eqref{eq:udl_1}-\eqref{eq:udl_2} is well understood in the continuous setting \cite{pavliotis2014stochastic}, but these continuous systems need to be replaced by discrete processes in practical applications. For this purpose a numerical discretization is introduced.  For example, \eqref{eq:odl} can be discretized using the Euler-Maruyama method:
\[
x_{n+1} = x_n - \Delta t \nabla U(x_n) + \sqrt{2\Delta t \beta^{-1}} \xi_{n+1},
\]
where $\Delta t > 0$ is the stepsize and $\xi_{n+1} \sim \mathcal{N}(0,I_{d})$.
Similar discretizations may be introduced to solve \eqref{eq:udl_1}-\eqref{eq:udl_2}.  For example one method maps
$(x_n ,p_n)$ to $(x_{n+1},p_{n+1})$ by
\begin{eqnarray*}
p_{n+1} & = &  cp_{n} - \Delta t \nabla U(x_n) + \sqrt{(1-c^2)\beta^{-1}}\xi_{n+1},\\
x_{n+1} & = & x_n + \Delta t p_n,
\end{eqnarray*}
where $c=\exp(-\Delta t \gamma)$ and $\xi_{n+1} \sim \mathcal{N}(0,I_{d})$; this is referred to as the ``OBA'' method using the naming convention first introduced in \cite{LeMa2013}.
While the momenta must be carried forward during computations, only the  sequence of $x$ variables needs to be stored.  

By eliminating the stop-and-go aspect of MH methods (i.e., rejected steps), SDE schemes promise faster convergence, but they have some important drawbacks. First, a numerical method such as those mentioned above introduces bias with respect to the target distribution.  Effectively we sample a perturbed distribution with density $\hat{\pi}^{\Delta t}$ replacing $\pi$.  The bias can be controlled by choosing a sufficiently small discretization stepsize $\Delta t$, where the quality of the numerical integrator governs the size of the bias and its stepsize-dependent scaling \cite{LeMaSt2016}.  Of course, the stepsize cannot be reduced arbitrarily, as smaller steps lead to more strongly correlated successive states. This, in turn, requires longer trajectories to achieve the same level of exploration, increasing the computational cost.

The second well-known drawback of LD methods is that such methods tend to suffer from numerical instabilities in cases where $\nabla U$ has a large Lipschitz constant, forcing the stepsize to be reduced.  The stepsize restriction will be governed by the largest curvature; in the case of multimodal target distributions each basin may have a different Hessian eigenstructure and thus introduce different stability constraints.  As the sampling path ventures from the vicinity of one minimum to the vicinity of another, the stepsize restrictions change.  In common practice a single fixed stepsize is used which must be chosen  small enough to mitigate these issues. This comes at the cost of requiring a larger number of integrator steps to generate a new (sufficiently decorrelated) sample, thus increasing the computational cost in comparison to what would seem intuitive. 

In high-dimensional sampling applications, individual integrator steps may be extremely expensive due to the costly evaluation of $\nabla U$.
Nowhere is this more apparent than in machine learning, where a reduction of computing cost per iteration is typically achieved by replacing the evaluation of the true gradient $\nabla U$ with a cheaper stochastic approximation (usually realized by data subsampling, see e.g., \cite{RobinsMonro,SGLD,SGHMC,DingPaper,shaw2025randomised}). This, however, adds additional perturbations to the dynamics which can again be controlled by decreasing the stepsize (see e.g., \cite{BenAmosCovControlled} for a simple model of how gradient noise relates to stepsize). 
Thus, substantial effort has been made to design novel LD-based MCMC methods with improved accuracy, stability, or efficiency, see, for example \cite{BBK,quasisymplecticMelchionna,BuPa2007,adaptiveLangevin,LeMa2013,SymmetricMiniBatchSplitting,SGLD,SGHMC,DingPaper,DaRi2020,shaw2025random, jumps2025}. It should be mentioned that there are also dynamics-based MCMC schemes that come with an MH-correction step, such as Hamiltonian Monte Carlo (HMC) and relatives \cite{HMC_original,HMC_Neal,HOROWITZgHMC,randomHMC,RiemannHMC,MALA,MALT}. However, due to the expensive evaluation of $U$ and potentially low acceptance rates, they are often deemed too inefficient for large scale applications \cite{SGLD,SGHMC,betancourt_HMC_subsampling}.  Efforts to address the issue \cite{AMAGOLD,minibatchMH} may sacrifice  stability or robustness compared to standard procedures and have not been widely adopted.  This article proposes an unadjusted LD-based sampling scheme and is thus in line with the philosophy of foregoing the MH-criterion.

One approach to tackle some of the mentioned issues comes in the form of adaptive stepsize methods, which are widely used in integration of deterministic systems, e.g., $\frac{\diff }{dt} x = \phi(x) $.  Variable stepsize procedures try to estimate the local discretization error being introduced at each step and adjust the stepsize down or up to maintain a certain prescribed local error tolerance (see, e.g., Chapter 3 of \cite{Lambert}).  The error estimate may be based on finite difference approximation, or extrapolation, or the use of  specialized `embedded' integration schemes (e.g., Runge-Kutta Fehlberg methods \cite{RKF}).  These techniques have also been adapted for weak approximation (sampling) using stochastic differential equations \cite{Ro2004,Va2010,Sz2001}. Implementations of this approach are in widespread use in modern simulation software. 
The molecular dynamics package OpenMM \cite{OpenMM} uses a variable stepsize scheme in which local errors are estimated on-the-fly via simple Euler discretization (or Euler-Maruyama discretization for their LD integrator; see \cite{OpenMM}). Additionally, there has been recent developments of adaptive stepsize methods, which allow for high strong order approximation in \cite{jelinvcivc2024single} and \cite{foster2023convergence}. We also mention the development of recent methodology for local adaptation of stepsize and other parameters within Metropolis-based algorithms, for example, HMC and more general PDMP-based algorithms (see \cite{bou2024incorporating,bou2024gist,chevallier2025towards}).

An alternative approach to variable stepsize changes the stepsize through Sundman time transformation:
\begin{equation}\label{eq:Sundman_transform}
\frac{\diff x}{\diff \tau} = R(x)\phi(x), \quad \frac{\diff t}{\diff \tau}=R(x), 
\end{equation}
with a scalar-valued function $R:\mathbb{R}^{d} \to \mathbb{R}$, which is uniformly bounded such that $0<m<R(x)<M<\infty$ for all $x \in \mathbb{R}^{d}$.
These types of transformations are commonly used in classical mechanics (see the recent work \cite{Ca2022} and the references therein, as well as their illustrative examples). The Sundman transform can be used to turn fixed-stepsize numerical integrators into adaptive stepsize schemes improving integration stability and efficiency.  Specifically, one may discretize the equation \eqref{eq:Sundman_transform} using a fixed step $\Delta \tau$ in the `fictive' time variable $\tau$, then interpret this as equivalent to variable steps in `real' time according to (at timestep $n$)
\[
\Delta t_n \approx R(x_n) \Delta \tau.
\]
Recently, the idea was applied to Langevin dynamics in \cite{alix} and more general Markov processes \cite{bertazzi2025sampling}.  In \cite{alix}, a suggested transform kernel was given by $R(x)=\tilde{R}\left(\|\nabla U(x)\|^{-1}\right)$, with a boundedness-ensuring function $\tilde{R}: \mathbb R_{+} \to \left[m,M \right] $ with $\lim_{u\to \infty}\tilde{R}(u)=m<M=\lim_{u\to 0}\tilde{R}(u)$. As in the ODE case, the corresponding numerical schemes accomplish enhanced stability and efficiency on the one- and two-dimensional test examples considered. The framework we present in this article builds on this idea, but we introduce an alternative mechanism for stepsize adaptation. Rather than using the current state of the variable of interest $x$ to adjust the stepsize, we introduce an auxiliary variable $\zeta \in \mathbb{R}$, evolving via a suitably chosen dynamical equation 
\begin{equation}
\frac{\diff}{\diff \tau} \zeta = f(x,p,\zeta)
\label{eq:zeta_dynamics}.
\end{equation}
We focus here on the following natural choice:
\[
f(x,p,\zeta) = -\alpha \zeta +  g(x,p),
\]
with parameter $\alpha>0$ and a monitor function $g$. This type of dynamics effectively computes a moving average of $g(x,p)$ over the recent history, with the driving function $g$ determining on what basis the stepsize is to be changed. We then apply a Sundman transformation which is expressed as a function of $\zeta$
\[
{\rm d}t = \psi(\zeta){\rm d} \tau.
\]  
As the time-rescaling alters the rate at which samples are acquired, these samples cannot be used directly for computing Gibbs-Boltzmann averages, but as we shall see it is straightforward to reweight the data in order to calculate any such quantities.  Employing a bounded Sundman transformation $\psi$ controls the stability of the reweighting process. 

For both the introduction of the general dynamics \eqref{eq:zeta_dynamics} and the choice of $g$, we draw inspiration from optimization, where numerical challenges arise that mirror those associated to MCMC samplers. 
A classic, general purpose optimization method is the stochastic gradient descent (SGD) method of Robins and Monro \cite{RobinsMonro}, and it remains a popular choice due to its simplicity and robustness in practice. In recent years, there has been extensive work on adaptive-stepsize variants of SGD \cite{AdaGrad,RMSProp,ADADELTA,Adam,NAdam,AMSGrad,AdaFom}. Most of these schemes are modifications or extensions of the methods AdaGrad \cite{AdaGrad,AdaGrad2} or RMSProp \cite{RMSProp}. 
By far the most widely known and used of these is the Adam optimizer (short for adaptive moment estimation \cite{Adam}), which uses estimates of mean and variance of the gradients (computed as moving averages) to adjust the stepsize, reducing it in relation to the steepness of the landscape. Adam is known to improve efficiency in certain cases \cite{Adam}, often reaching the minimum in many fewer steps than SGD, and  is popular for model training in various settings such as natural language processing \cite{Adam_example_languagemodel,Adam_example_languagemodel2,Adam_example_languagemodel3}, Bayesian neural networks \cite{Adam_example_BNN,Adam_example_BNN2}, and computer vision \cite{Adam_example_computervision,Adam_example_computervision2,Adam_example_computervision3}. 

By combining the methodology of time-transformed SDEs with Adam's approach to adjust the stepsize based on a moving average over recent history, we derive a flexible framework for adaptive-stepsize sampling, which we call ''SamAdams'' (\underline{sam}pling with \underline{ada}ptive \underline{m}oderated \underline{s}tepsize). The procedure can
be combined with state-of-the art integrators for Langevin dynamics. In particular, it is possible to transform the splitting integrators mentioned above into adaptive algorithms. Adaptivity improves numerical stability compared to relying on a constant stepsize, as demonstrated in Fig. \ref{fig:star_trajectory} where a fixed stepsize Langevin integrator is compared with its adaptive counterpart. 
\begin{figure}[ht!]
\begin{center}
\includegraphics[width=1\textwidth]{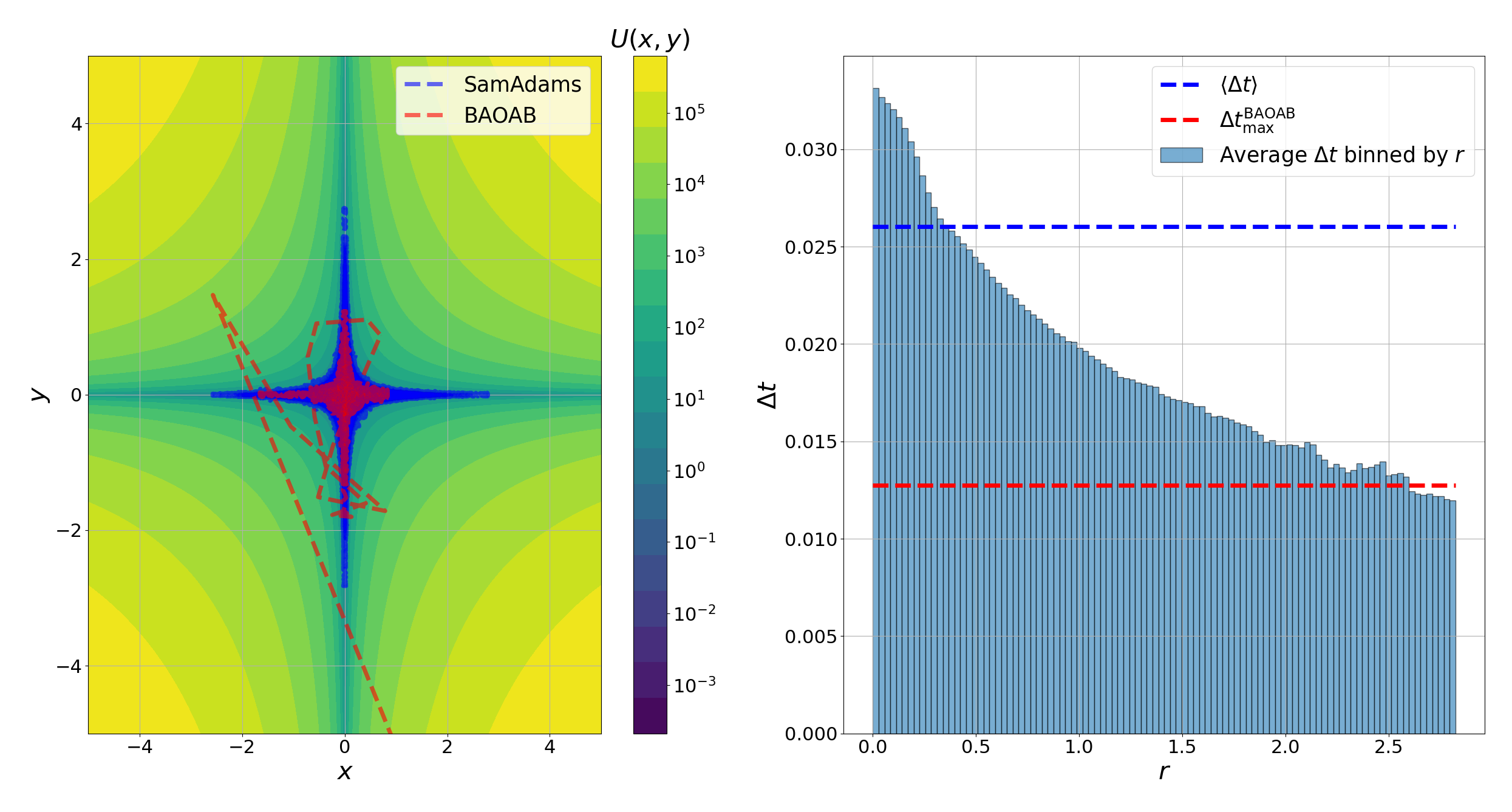}
\caption{Sampling trajectories of a constant-stepsize integrator (BAOAB) and our adaptive-stepsize scheme (SamAdams) on a star-shaped landscape $U(x,y)= x^2 + 1000x^2y^2 + y^2$. \textbf{Left:} Potential $U(x,y)$ with trajectories. BAOAB was run at the mean stepsize used by SamAdams (obtained by averaging over all iterations). \textbf{Right:} Stepsize values $\Delta t$ used by SamAdams are binned by distance to the origin $r=\sqrt{x^2+y^2}$ together with the mean stepsize (blue dashed line) and the maximum stable stepsize for BAOAB (red dashed line).  SamAdams uses a small stepsize only at the outer points of the stable domain.}
\label{fig:star_trajectory}
\end{center}
\end{figure}
While the fixed stepsize method becomes unstable in the tips of the star-like potential, SamAdams remains stable by automatically reducing its stepsize  in those regions. As evident by the $\Delta t$ histograms, the smallest $\Delta t$ values are adopted in the tips of the star. There, they are similar in size to the stability threshold of BAOAB, which neatly confirms the understanding that the stability of a constant-stepsize scheme is determined by those landscape regions of largest steepness and curvature where the forces and force fluctuations are largest.  Since most of the probability mass is located close to the origin where the experienced forces are small, the adaptive scheme is able to use larger stepsizes during most of the simulation, enabling a larger mean stepsize $\langle \Delta t \rangle$, with consequent increase in computational efficiency. The ability to use larger stepsizes while maintaining sampling quality would greatly benefit those sampling applications where the sampling error is dominated by a lack of exploration of the loss landscape $U$ (rather than other sources of error such as discretization bias or model specification). Prominent examples are large-scale Bayesian neural networks \cite{BNNs1,BNNs2,BNNs3,BNNs4,BNNs5} or molecular dynamics simulations \cite{MD1,MD2,MD3,MD4,OpenMM}.

For the purposes of disambiguation we mention that Adam and variants further improve optimizer efficiency by independently rescaling the coordinates of the system through the mechanism of individual timesteps.  We don't address this important aspect here but instead focus on the use of a moving average to stabilize the timestep selection in a pure Langevin dynamics framework. Some sampling methods that incorporate anisotropic coordinate transformation in order to enhance performance are RMHMC \cite{RiemannHMC}, the ensemble quasi-Newton method \cite{EQN}, and the recently proposed AdamMCMC method \cite{Bi2024_adammcmc}.

The rest of this article is structured as follows. Section \ref{sec:SundmanSDEs} discusses the Sundman-transformed SDE our new framework is built on and a reweighting scheme for physical observables from trajectories that evolve in rescaled time. The following section is addressed to numerical discretization of the equations of motion.  In Section \ref{sec: SamAdams} we discuss designing a Sundman kernel so that the transformed SDE adopts a stepsize adaptation resembling the device in Adam.  Finally, Section \ref{sec:numerics} contains our numerical results.

\section{Sundman-transformed SDEs and Averaging}\label{sec:SundmanSDEs}

The notion of rescaling of time is a familiar one in studies of gravitational
$N$-body problems, where it is used in analytical as well as numerical
treatments. Let an autonomous ordinary differential equation be given
of the form 
\begin{equation}
\frac{\diff x_{t}}{\diff t} = \phi(x_t).\label{eq:ode}
\end{equation}
We use the notation $x_{t}:=x(t)$ to compactly indicate the independent variable. In particular, the classical Sundman transformation\cite{Sundman1912} replaces $t$ by
a new variables $\tau$ which is defined by an ordinary differential
equation of the form 
\[
\frac{\diff t}{\diff \tau}=\psi(x_{\tau}),
\]
with $x_{\tau}:=x_{t(\tau)}$, so that \eqref{eq:ode} becomes, using chain rule, 
\begin{align*} 
\frac{\diff x_{\tau}}{\diff \tau }  = & \frac{\diff x_{t}}{\diff t} \frac{\diff t}{\diff \tau } = \psi(x_{\tau}) \phi(x_{\tau}).
\end{align*} 
This type of time-rescaling can be used to facilitate numerical integration
of ODEs. As explained in \cite{Stoffer1995,Ben_adaptive_verlet},
if the Sundman transformation is suitably chosen to normalize the system or at
least lessen the variation in the magnitude of $\phi$ and its derivatives, the rescaled
system can often be integrated using fixed stepsize with the result
that errors or instabilities associated with directly integrating
\eqref{eq:ode} are eliminated or reduced.

Time-rescaling changes the effective frequencies of a system with
oscillatory components. For example if we introduce a simple constant
Sundman transformation ${\rm d}t/{\rm d}\tau=a$ into a 1D harmonic
oscillator with frequency $\omega$, $\diff q_{t}/\diff t=p_t$; $\diff p_{t}/ \diff t=-\omega^{2}q_t$,
we obtain the system $\diff q_{\tau}/\diff \tau=ap_{\tau}$; $\diff p_{\tau}/\diff \tau=-a\omega^{2}q_{\tau}$. This
new system has frequency $a\omega$. The same effect can be
obtained by rescaling only the momentum equation or only the position
equation. Since the stable timestep for integration of oscillatory
dynamics typically depends on the fast frequencies we can potentially improve stability efficiently by adjusting the time-rescaling dynamically.
In nonlinear systems and systems with multiple oscillatory modes,
a configuration-dependent Sundman transformation can be used to modify
the dynamics so that all components are propagated with an effective
smaller stepsize in regions of high oscillation. For example, the Adaptive
Verlet method \cite{Ben_adaptive_verlet} uses just such a time-rescaling,
treating the local time-rescaling factor  as an additional
dependent variable of the system and evolving this in various ways
to enhance numerical performance.

A simpler form of adaptivity that, in our experience, often works well, is to use the Sundman transformation to
adjust the stepsize at the beginning of each step using $\Delta t_{n}:=\Delta\tau R(x_{n})$.
$\Delta\tau$ represents a fixed stepsize for the rescaled system
and $R(x)$ indicates how the step should be adjusted depending on current configuration $x$. As in \cite{alix}, by applying the Sundman transform to an SDE and then discretizing with fixed stepsize,  we can obtain an adaptive stepsize sampling method in the original time variable.


Using the Sundman transform $\diff t / \diff \tau = R(x_{\tau})$, we can write down 
a  time-rescaled version of \eqref{eq:udl_1}-\eqref{eq:udl_2}: 
\begin{align}
\diff x_{\tau}    &= R(x_{\tau}) p_{\tau} \diff \tau,  \\
\diff p_{\tau}    &= - R(x_{\tau})\nabla U(x_{\tau})\diff \tau - \gamma  R(x_{\tau}) p_{\tau} \diff \tau + \sqrt{\frac{2\gamma R(x_{\tau})}{\beta}}\diff W_{\tau}.  
\end{align} 
In \cite{alix}, it was shown that the canonical distribution $\rho_{\beta}$ is no longer invariant under the process. To correct the measure that article introduced a drift correction 
\[
\Delta(x) := \frac{\gamma}{\beta} \nabla R(x).
\]
This method adds some complexity in the form of an additional gradient computation.

Here, we propose a different way of introducing the timestep adaptation. Rather than letting the transform function directly depend on the configuration $x$, we introduce an artificial dynamical control variable $\zeta \in \mathbb{R}$, and a Sundman rescaling function $\psi$ such that the time adaption is governed by $\diff t / \diff \tau = \psi(\zeta_{\tau})$ with $\diff \zeta_{\tau}=f(x_{\tau},p_{\tau},\zeta_{\tau})\diff \tau$ for some scalar-valued function $f$ on the augmented phase space variables $(x,p,\zeta)$. 

It makes intuitive sense to choose the function $f$ as the sum of dissipative and driving terms, i.e., 
\begin{align}\label{eq: zeta_dynamics}
f(x,p,\zeta) = - \alpha \zeta +  g(x,p) ,
\end{align}
with hyperparameter $\alpha>0$ representing the {\em attack rate} of a relaxation process.  The {\em monitor} or {\em driving function} $g(x,p)$ can be any positive smooth function and ultimately decides on what basis a small time increment $\diff t$ is varied. The rescaling function $\psi$, over a crucial interval, mimics a reciprocal power of $\zeta$.  If the monitor function takes on larger values, $\zeta$ will tend to increase and the time interval $\diff t$ will shrink (implying smaller stepsizes during numerical integration). If the monitor function is small (meaning the solution is locally smooth and easy to integrate), the rescaling will lead to an increase in stepsize. This corresponds to the paradigm to take {\em as large a stepsize as possible but as small as necessary}. 
Making the stepsize adaptation depend on the recent history of a driving function is one reason why the Adam optimizer became so successful.

Since one can employ many different monitor functions $g(x,p)$ and Sundman transform kernels $\psi(\zeta)$, we obtain a flexible framework for adaptive stepsize sampling, given by\footnote{A version of this system for generic SDEs is given in Appendix \ref{sup:sec:general_SamAdams}.}
\begin{align}
\diff x_{\tau}    &= \psi(\zeta_{\tau}) p_{\tau} \diff \tau,   \label{eq: full_framework1}  \\  
\diff p_{\tau}    &= - \psi(\zeta_{\tau})\nabla U(x_{\tau})\diff \tau - \gamma  \psi(\zeta_{\tau}) p_{\tau} \diff \tau + \sqrt{2\gamma \beta^{-1}\psi(\zeta_{\tau}) }\diff W_{\tau}, \label{eq:p_transformed}\\ 
\diff \zeta_{\tau}&= - \alpha \zeta_{\tau}\diff \tau +  g(x_{\tau},p_{\tau}) \diff \tau,   \label{eq: zeta_eq}\\ 
\diff t           &= \psi(\zeta_{\tau})\diff \tau. \label{eq: full_framework2}   
\end{align} 
Note that in the Adam optimizer, the stepsizes are adapted based on moving averages (see Appendix \ref{sup:sec:adam_derivation} for a detailed discussion). The dynamics \eqref{eq: zeta_eq} also effectively computes a moving average of the driving function $g$ over the recent history, exponentially weighted with rate $\alpha$, which is then used to modify the stepsize $\Delta t$ (see also Appendix \ref{sup:sec:stepsize_variation}). Due to the stepsize adaptation through moving averages and the fact that, as we will show in Sec. \ref{sec: SamAdams}, one can design transform kernel $\psi$ and driving function $g$ in a way to resemble the Adam equations, the dynamics \eqref{eq: full_framework1}-\eqref{eq: full_framework2} can be understood as an Adam-inspired sampling framework. We name this framework \textbf{SamAdams} (sampling with adaptively moderated stepsizes). Although we will focus on particular choices for $\psi$ and $g$, it is important to emphasize that the framework is very general.  We next address the ergodicity of process \eqref{eq: full_framework1}-\eqref{eq: full_framework2}, how to obtain canonical averages from it, and how to simulate it in practice.

\subsection{Ergodic Averages}\label{sec:computing_averages}
Due to the time-rescaling and the additional $\zeta$-dynamics, the invariant measure of \eqref{eq: full_framework1}-\eqref{eq: full_framework2} no longer coincides with the invariant measure of underdamped Langevin dynamics \eqref{eq:udl_1}-\eqref{eq:udl_2}, i.e., the canonical measure $\pi_{\beta}$. 
However, under mild assumptions on $U$ and $\psi(u)$, we can correct this error by reweighting the samples with the corresponding values of the transform kernel $\psi(\zeta_{\tau})$. To see this, assume the dynamics \eqref{eq: full_framework1}-\eqref{eq: full_framework2} is ergodic with invariant measure $\Pi_{\tau}$. Assume we have samples $(x_{\tau_{i}},p_{\tau_{i}},\zeta_{\tau_{i}})$ from the solution of \eqref{eq: full_framework1}-\eqref{eq: full_framework2} at times $\tau_{i}:=i\Delta \tau$, $i\in\mathbb{N}$, for some stepsize $\Delta \tau>0$.  It follows that under mild assumptions on an observable $\phi:\mathbb{R}^{2d} \to \mathbb{R}$,
\begin{align*} 
\lim_{N \to \infty} \frac{\sum^{N}_{i=1}\phi(x_{\tau_i},p_{\tau_i})\psi(\zeta_{\tau_i})}{\sum^{N}_{i=1}\psi(\zeta_{\tau_i})} &= \lim_{N \to \infty}\frac{\frac{1}{N}\sum^{N}_{i=1}\phi(x_{\tau_i},p_{\tau_i})\psi(\zeta_{\tau_i})}{\frac{1}{N}\sum^{N}_{i=1}\psi(\zeta_{\tau_i})} \\ 
&= \frac{\mathbb{E}_{\Pi_{\tau}}(\phi \psi)}{\mathbb{E}_{\Pi_{\tau}}(\psi)} \\ 
&= \lim_{T \to \infty} \frac{\frac{1}{\tau(T)}\int^{\tau(T)}_{0}\phi(x_{\tau},p_{\tau})\psi(\zeta_{\tau})d\tau}{\frac{1}{\tau(T)}\int^{\tau(T)}_{0}\psi(\zeta_{\tau})d\tau}, \\ 
&= \lim_{T \to \infty} \frac{1}{T}\int^{T}_{0} \phi(x_{\tau(t)}, p_{\tau(t)})dt\\ 
&= \mathbb{E}_{\pi_{\beta}}(\phi), 
\end{align*} 
where the second and third lines follow from ergodicity and the fourth line from identifying $\psi\diff\tau=\diff t$. 
We remark that the process \eqref{eq: full_framework1}-\eqref{eq: full_framework2} can be shown to be ergodic under minimal
assumptions as in \cite{sachs2017langevin} which
treats a more general case. Similar techniques can be found in \cite{mattingly2002ergodicity,phillips2024numerics}. In particular, under sufficient smoothness, the assumption that the time-rescaling $\psi$ is uniformly bounded from below and above and the force is convex outside a ball, one can show ergodicity of the time-rescaled process \eqref{eq: full_framework1}-\eqref{eq: full_framework2} by considering the family of Lyapunov functions considered in \cite{mattingly2002ergodicity}. Resulting in an ergodicity result with a convergence rate which depends on the uniform lower bound on $\psi$. More sophisticated techniques would be required to show an improved convergence rate of the Adam sampler, akin to the adaptive stepsize results available in the optimization literature (see, for example, \cite{defossez-2022} for Adam or Adagrad). 

We have thus shown how to obtain canonical averages as time averages over a single trajectory. In practice, when employing constant-stepsize MCMC to sample $\pi_{\beta}$, rather than obtaining averages through a time average along a single long trajectory, one often draws multiple trajectories in parallel (allowing for parallel computation) and approximates 
\begin{equation}
\mathbb{E}_{\pi_{\beta}}(\phi(x_{t},p_{t}))\approx\frac{1}{N}\sum_{i=1}^N\phi(x^i_{t},p^i_{t}),  
\end{equation}
for $N$ independent trajectories and $t$ large enough to have $\text{Law}(x_t,p_t)\approx \pi_{\beta}$. The superscript refers to the trajectory index. The same is possible for the time-rescaled dynamics \eqref{eq: full_framework1}-\eqref{eq: full_framework2}. For sufficiently large $\tau(t)$, we have that $\text{Law}(x_{\tau}, p_{\tau},\zeta_{\tau})\approx\Pi_{\tau}$. 
In this case, we have for $N$ independent trajectories, $N$ large enough,
\begin{equation}
\frac{\sum_{i=1}^{N}\phi(x^i_{\tau}, p^i_{\tau})\psi(\zeta^i_{\tau})}{\sum_{i=1}^{N}\psi(\zeta^i_{\tau})}\approx\frac{\mathbb{E}_{\Pi_{\tau}}(\phi \psi)}{\mathbb{E}_{\Pi_{\tau}}(
\psi
)}=\mathbb{E}_{\pi}(\phi).
\end{equation} 
Thus, when simulating the time-rescaled dynamics, one can use both time- and trajectory- averages to approximate canonical averages just like in constant-stepsize MCMC. Obtaining canonical averages can thus be illustrated by the diagram below. 
\begin{figure}[h]
\begin{center}
\includegraphics[width=5in]{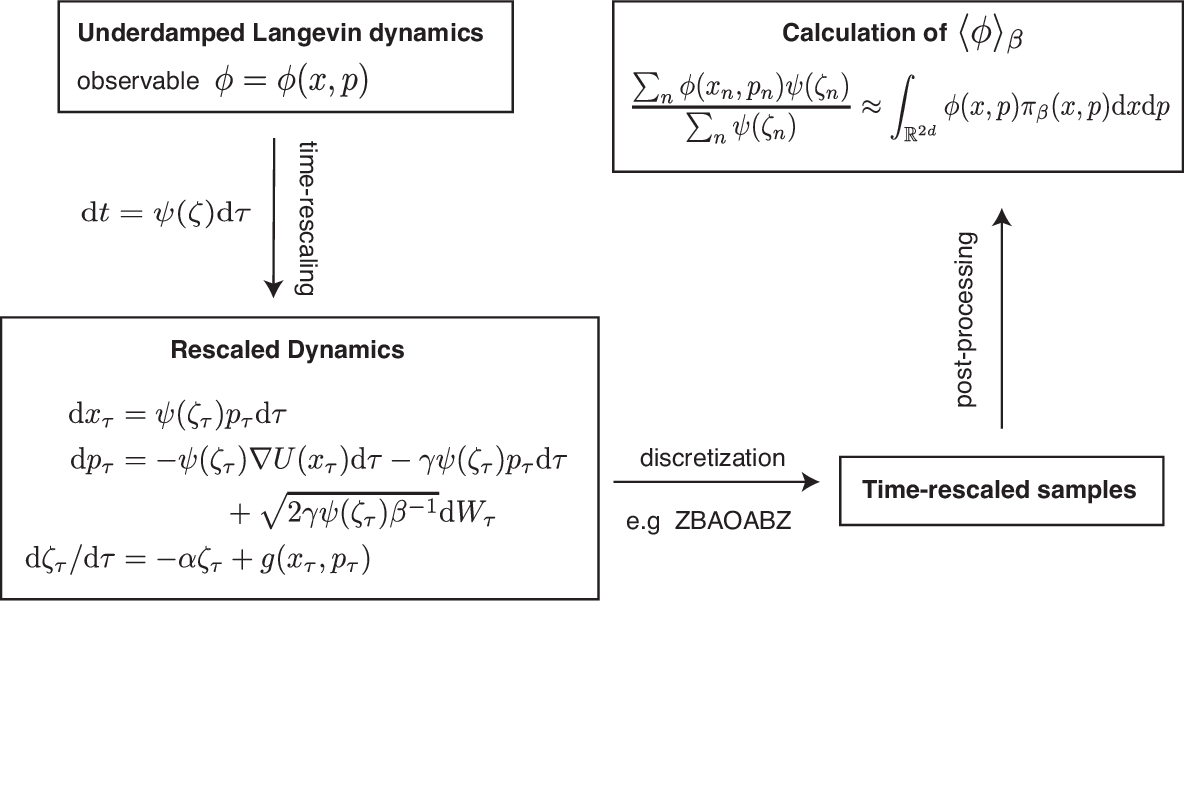}
    \caption{SamAdams sampling procedure.}
\end{center}
\end{figure}
The subscript $n$ in the upper right-hand corner of the diagram can either denote different points in time along a single trajectory (time-average) or a trajectory index when averaging over samples of different trajectories taken at the same time (trajectory-average). 

\section{Numerical Integration}\label{sec: integration}
In order to simulate SamAdams, we need to discretize the continuous-time process \eqref{eq: full_framework1}
to \eqref{eq: full_framework2} with a suitable numerical integrator. For convenience we adopt the framework of \cite{Sk2002,BuPa2007,Bou-RabeeOwhadi2010,LeMa2013} in which a symplectic splitting of the Hamiltonian part of the underdamped Langevin system is composed with a map that exactly preserves the momentum distribution.   A frequent
choice for the stochastic part is
\[
\Phi_{\Delta t}^{{\rm O}}(x,p)=\bigg(x,\ \exp(-\gamma\Delta t)p+\sqrt{(1-\exp(-2\gamma\Delta t))\beta^{-1}}\xi\bigg),
\]
where $\xi$ is a random vector with each component independently drawn
from ${\mathcal N}(0,1)$. Since $\Phi_{\Delta t}^{{\rm O}}$ preserves
the momentum distribution regardless of $\Delta t$, it is reasonable
to introduce this in a way which parallels our previous derivation.
Resolving the Hamiltonian part can be done by splitting the dynamics
into a drift at constant momentum $\diff x_{t}/\diff t=p_t$; $\diff p_{t} / \diff t=0$ and a momentum
kick $\diff x_{t} /\diff t=0$; $\diff p_{t}/\diff t=-\nabla U(x_t)$ using the two propagators 
\begin{align*}
\Phi_{\Delta t}^{{\rm A}}(x,p) & =(x+\Delta tp,\ p),  \\
\Phi_{\Delta t}^{{\rm B}}(x,p) & =(x,\ p-\Delta t\nabla U(x)).
\end{align*}
Different integrators for Langevin dynamics can be obtained by composing
the A-, B-, and O-maps in different ways, e.g., giving symmetric OBABO \cite{BuPa2007} and BAOAB\cite{LeMa2013} schemes; for example $\Phi^{\text{OBABO}}_{\Delta t}:=\Phi_{\Delta t/2}^{{\rm O}}\circ\Phi_{\Delta t/2}^{{\rm B}}\circ\Phi_{\Delta t}^{{\rm A}}\circ\Phi_{\Delta t/2}^{{\rm B}}\circ\Phi_{\Delta t/2}^{{\rm O}}$.
In \cite{LeMa2013,LeMaSt2016} these integrators have been shown to provide second order approximation of averages with respect to the Gibbs-Boltzman  distribution.  Because of the special form of these composition methods, only a single evaluation of the expensive gradient term is needed at each step, improving efficiency.  Alternatives such as BABO (half step of B, whole step of A, half step of B, whole step of O) sacrifice the symmetry but maintain the same second order accuracy with respect to the invariant measure and similarly require only a single gradient per step.   Another good prospective Langevin scheme is the symmetric UBU discretization \cite{ubu}.

What remains is to incorporate the stepsize adaptation described by \eqref{eq: zeta_eq} and \eqref{eq: full_framework2}.
In a splitting approach, the auxiliary variable $\zeta$ evolves using \eqref{eq: zeta_eq} with fixed $x$ and $p$, thus the solution is just
\begin{equation}
\zeta(\tau)=e^{-\alpha\tau}\zeta(0)+g(x,p)\int_{0}^{\tau}e^{-\alpha(\tau-s)}{\rm d}s.\label{eq:zeta_solution}
\end{equation}
Hence we may introduce an additional map 
\begin{equation}\label{eq: Z-map}
\Phi^{\rm Z}(x,p,\zeta)=\left (x,\ p,\ 
\exp(-\Delta\tau\alpha)\zeta+
\frac{1}{\alpha}\left ( 
1-e^{-\Delta\tau\alpha}
\right )
g(x,p) \right ).
\end{equation}
Once $\zeta$ has been calculated, the new stepsize $\Delta t$ is  updated via $\Delta t:=\psi(\zeta)\Delta\tau$. In Sec. \ref{sec: SamAdams} we introduce two suitable choices for the transform kernel $\psi$. 

With $x,p$ held fixed and thus $g(x,p)\equiv g$, we may rewrite the update for $\zeta$ as follows, 
\[
\zeta_{n+1} = \hat{\Phi}^{\rm Z}(x,p,\zeta_n) \equiv \rho \zeta_{n} + \alpha^{-1}(1-\rho) g(x,p),
\]
where $\rho = \exp(-\alpha \Delta \tau)$, which is fixed along a sampling path.

The choice  $\rho\approx 1$ maintains a strong dependence on the stepsize history, whereas $\rho\approx 0$ represents a rapid damping.    We also define
\[
\hat{\Phi}^{Z}_a(x,p,\zeta)=\rho^a \zeta+ \alpha^{-1}(1-\rho^a)g(x,p),
\]
to allow for taking partial steps of the $\zeta$ flow.

\subsection{The Algorithm}\label{sec:algorithm}
Using the splitting components from the previous section, we can write down algorithms to simulate SamAdams. With slight notational ambiguity, we use integer subscripts to denote timesteps in artificial time, thus $x_n \approx x_{\tau}(\tau_n)$, etc. Algorithm \ref{alg:AdamSampler} presents the ZBAOABZ method that we have used in all our numerical experiments.
\begin{algorithm}[ht]
\caption{SamAdams (ZBAOABZ)}
\label{alg:AdamSampler} 
\begin{algorithmic} 
\State{Given: $\hat{\Phi}^{\text{BAOAB}}_{\Delta t}$ the (fixed-stepsize) BAOAB integrator, ${\psi}$ a suitable Sundman transformation.}
\State{Given: parameters $n_{\rm max}$, $\Delta\tau$, and $n_{\text{meas}}$.}
\State{Given: initial conditions $x_{0}$, $p_{0}$, $\zeta_{0}$. Set $\mu_0 = {\psi}(\zeta_0)$.}
 \For {$n=0:n_{{\rm max}}$} 
 \vspace{5pt}
 \If{$n\ \text{mod}\ n_{\text{meas}}==0$} \State{Collect sample
$(x_{n}, p_n, 
\mu_n)$}.   \Comment{$\mu_n$ needed for reweighting.}
\EndIf 
\vspace{5pt}
 \State $\zeta_{n+\frac{1}{2}}:=\hat{\Phi}^Z_{1/2}(x_{n}, p_n,\zeta_{n})$.
 \Comment{Evolve $\zeta$.}
\vspace{5pt} 
\State $\Delta t_{n+1}:={\psi}(\zeta_{n+\frac{1}{2}})\Delta \tau$. \Comment{Modify stepsize.}
\vspace{5pt}
\State $({x}_{n+1},{p}_{n+1}):=\hat{\Phi}^{\text{BAOAB}}_{\Delta t_{n+1}}(x_{n},p_{n})$. \Comment{BAOAB step with stepsize $\Delta t_{n+1}$.}
\vspace{5pt}
\State $\zeta_{n+1}:=\hat{\Phi}^{Z}_{1/2}(x_{n+1},p_{n+1},\zeta_{n+\frac{1}{2}})$. 
\Comment{Evolve $\zeta$.}
\vspace{5pt}
\State $\mu_{n+1}:={\psi}(\zeta_{n+1})$. \Comment{Calculate weight.}
\vspace{5pt}
\EndFor 
\end{algorithmic} 
\end{algorithm}
The implementation of the underlying Langevin integrator, here BAOAB (denoted by $\hat{\Phi}^{\text{BAOAB}}_{\Delta t}$ in Alg. \ref{alg:AdamSampler}), is as in the fixed-stepsize setting.  There is no additional significant cost overhead above the usual cost of fixed stepsize integration, since the force evaluation needed to update Z is already performed in the BAOAB iteration. However, one could use other fixed-stepsize Langevin dynamics integrators as well.

Palindromic letter sequences like ZBAOABZ indicate a symmetric introduction of $\zeta$ updates around the fixed-stepsize method BAOAB.
Compared to the alternative ZBAOAB, the symmetric inclusion of the two Z-half-steps has a very significant consequence, the effect of which can be seen in experiments: the weights $\mu_n$ that are output with the samples are more accurate than if we for example relied on the stepsize that was used to advance the previous time-step, see Appendix \ref{sup:sec:symm_Z}.  

We note that initialization of Algorithm \ref{alg:AdamSampler} requires a choice for $\zeta_0$. There are different approaches to this depending on the particular choices for $\psi$ and $g$, and we discuss reasonable options in Sec. \ref{sec:numerics}. Independently of $\zeta_0$, it is expected that there will be some equilibration of the timestep control in the first few steps of integration.

A generic version of Algorithm \ref{alg:AdamSampler} for overdamped (first order) dynamics is given Appendix \ref{sup:sec:general_SamAdams}.

\subsection{Convergence and Order of Accuracy}
The order of accuracy of a numerical method for SDEs can be studied in different contexts.  In some disciplines, the quantity of interest is the strong accuracy, defined as the fidelity of the numerical solution to the solution of the SDE associated to some  realization of the Wiener process.  For sampling, the more relevant quantity is the accuracy of the approximation of the distribution generated by SDE solutions at a specified time (weak accuracy). An article of G. Vilmart \cite[Proposition 6.1]{Vilmart2014} provides a rigorous proof of the weak convergence of general splitting methods, which involve components which can be integrated exactly (in the weak sense) but which may contain multiplicative noise, as is the case for ZBAOABZ and other such integrators for SamAdams. Specifically, we can state a theorem regarding finite time approximation based on this work (see Theorem \ref{th:conv}).

\begin{theorem} \label{th:conv}
   Consider the system (\ref{eq: full_framework1})-(\ref{eq: full_framework2}) and assume that $\psi$, $\sqrt{\psi}$, $\nabla U$, $g$ are $C^{6}$ functions with all partial derivatives bounded, further assume that $r < \psi < M$ for some $r,  M > 0$. Then consider a splitting of the form ZBAOABZ and generating a sequence of points $(x_n,p_n,\mu_n)$ based on Algorithm \ref{alg:AdamSampler}.  Let $\chi(\cdot) \in C^{6}$ be an observable function of $x,p$ where all partial derivatives have polynomial growth, then for all $\Delta \tau k \leq T$
   \begin{align}
       \left| \frac{\mathbb{E}(\chi(x_{n},p_{n})\mu_{n})}{\mathbb{E}(\mu_{n})} - \frac{\mathbb{E}(\chi(X(n\Delta\tau),P(n\Delta\tau))\psi(\zeta(n\Delta\tau))}{\mathbb{E}(\psi(\zeta(n\Delta\tau))}\right| \leq C\Delta \tau^{2},
   \end{align}
   where $C >0$ is independent of $\Delta \tau >0$ and $(X(\cdot),P(\cdot),\zeta(\cdot))$ is the solution to \eqref{eq: full_framework1}-\eqref{eq: zeta_eq}.   
\end{theorem}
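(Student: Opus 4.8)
The plan is to regard $(x,p,\zeta)$ as a single augmented diffusion governed by \eqref{eq: full_framework1}--\eqref{eq: zeta_eq} and to notice that both the numerator and the denominator in the statement are weak expectations of observables of this augmented system: the numerator corresponds to the observable $(x,p,\zeta)\mapsto\chi(x,p)\psi(\zeta)$ and the denominator to $(x,p,\zeta)\mapsto\psi(\zeta)$, with $\mu_n=\psi(\zeta_n)$ by construction in Algorithm~\ref{alg:AdamSampler}. The heavy lifting is then supplied by Vilmart's weak-convergence theorem for splitting methods with multiplicative noise \cite[Proposition 6.1]{Vilmart2014}, which I would invoke separately for each of these two observables to obtain, at the common time $n\Delta\tau\le T$,
\[
\bigl|\mathbb{E}(\chi(x_n,p_n)\mu_n)-\mathbb{E}(\chi(X,P)\psi(\zeta))\bigr|\le C_1\Delta\tau^2,\qquad
\bigl|\mathbb{E}(\mu_n)-\mathbb{E}(\psi(\zeta))\bigr|\le C_2\Delta\tau^2,
\]
and then to combine these two order-two estimates into the stated bound on the ratio.

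The first substantive task is to verify the hypotheses of \cite[Proposition 6.1]{Vilmart2014} for the ZBAOABZ composition, namely that each elementary block is an exact weak integrator of its piece of the split generator. The A- and B-maps are the exact flows of the drift and kick; the Z-map integrates \eqref{eq: zeta_eq} exactly via the closed form \eqref{eq:zeta_solution}; and the O-map, taken with $\zeta$ frozen at the value $\zeta_{n+1/2}$ produced by the first half Z-step, is a constant-coefficient Ornstein--Uhlenbeck update sampled exactly in distribution, so that the multiplicative factor $\sqrt{\psi(\zeta)}$ enters only through a frozen coefficient and each block effectively runs with physical step $\Delta t=\psi(\zeta_{n+1/2})\Delta\tau$. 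The regularity assumptions---$\psi,\sqrt{\psi},\nabla U,g$ of class $C^6$ with bounded partial derivatives and $r<\psi<M$---are precisely what is needed to control the generator and its repeated action on test functions in Vilmart's framework; in particular, boundedness of $\nabla^2 U$ together with $\psi<M$ renders the augmented drift globally Lipschitz, yielding existence and uniqueness of the continuous process together with uniform-in-$n$ moment bounds of all orders on $(x_n,p_n,\zeta_n)$ for $n\Delta\tau\le T$. The palindromic placement of the two half Z-steps around the time-symmetric BAOAB core is what cancels the odd-order terms and upgrades the $O(\Delta\tau^3)$ local error to global weak order two, mirroring the standard analysis of BAOAB.

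With the two order-two estimates in hand, the final step is the elementary ratio bound. Writing $a_n=\mathbb{E}(\chi\,\mu_n)$, $b_n=\mathbb{E}(\mu_n)$, $a=\mathbb{E}(\chi\,\psi(\zeta))$ and $b=\mathbb{E}(\psi(\zeta))$, I would use
\[
\frac{a_n}{b_n}-\frac{a}{b}=\frac{(a_n-a)\,b-a\,(b_n-b)}{b_n\,b},
\]
bounding the denominator below through $b_n,b\ge r>0$ (a direct consequence of $\psi>r$) and $|a|$ above through the polynomial growth of $\chi$, the boundedness $\psi<M$, and the uniform moment bounds from the previous step. Substituting $|a_n-a|\le C_1\Delta\tau^2$ and $|b_n-b|\le C_2\Delta\tau^2$ then yields the claimed estimate with a constant $C>0$ independent of $\Delta\tau$.

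The main obstacle I anticipate is not the ratio manipulation, which is routine, but the careful matching of the scheme to Vilmart's regularity framework in the multiplicative-noise setting: one must confirm that freezing $\zeta$ throughout the BAOAB core still constitutes an exact weak solution of the correctly split generator, that the repeated composition preserves the polynomial-growth class to which $\chi\psi$ and $\psi$ belong, and---the technical heart underpinning both Vilmart's theorem and the ratio estimate---that the globally Lipschitz drift delivers the requisite uniform-in-$n$ moment bounds.
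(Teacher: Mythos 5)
Your proposal is correct and follows essentially the same route as the paper: both reduce the ratio error to two applications of Vilmart's Proposition 6.1 (to the observables $\chi\psi$ and $\psi$ for the Strang splitting between the Z-flow and the symmetric BAOAB core), and both close the argument with the same cross-multiplied ratio identity bounded below by the uniform lower bound $\psi > r$. Your additional remarks on verifying Vilmart's hypotheses and on uniform moment bounds simply make explicit details that the paper leaves implicit.
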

\begin{proof}
We first remark that $\mathbb{E}(\mu_{n}) > r$ and $\mathbb{E}(\psi(\zeta(n\Delta\tau))) > r$ for all $n \in \mathbb{N}$ due to the uniform bound assumption on $\psi$. Let $x = \mathbb{E}(\chi(x_{n},p_{n})\mu_{n})$, $y = \mathbb{E}(\mu_{n})$, $w = \mathbb{E}(\chi(X(n\Delta\tau),P(n\Delta\tau))\psi(\zeta(n\Delta\tau))$ and $z = \mathbb{E}(\psi(\zeta(n\Delta\tau))$. Then we have
\begin{align*}
      \left|\frac{x}{y} - \frac{w}{z}\right| &= \left|\frac{(x-w)z + w(z-y)}{yz} \right| \leq \frac{|x-w||z| + |w||z-y|}{|y| |z|},
   \end{align*}
    and we have that $|x-w| \leq C\Delta \tau^{2}$ and $|z-y| \leq C\Delta \tau^{2}$ by \cite[Proposition 6.1]{Vilmart2014} applied to the Strang splitting between $Z$ and the symmetric kinetic Langevin dynamics integrator (for example BAOAB), then due to the uniform lower bound on $\psi$ we have the required result.
\end{proof}
\begin{remark}
    Theorem \ref{th:conv} also applies to for example ZUBUZ (or replacing the middle with any other weak order two integrator for underdamped Langevin dynamics, for example, OBABO or ABOBA).
\end{remark}
\begin{remark}
If, in addition, the process (\ref{eq: full_framework1})-(\ref{eq: full_framework2}) is ergodic, then 
$$\lim_{n\to\infty}\frac{\mathbb{E}(\chi(X(n\Delta\tau),P(n\Delta\tau))\psi(\zeta(n\Delta\tau))}{\mathbb{E}(\psi(\zeta(n\Delta\tau))} = \mathbb{E}_{\pi_{\beta}}(\chi)$$
following from Section \ref{sec:computing_averages}.
\end{remark}

The weak convergence can in principle be studied in the asymptotic sense as the time interval tends to infinity, i.e., we may consider the asymptotic evolution of the weak error. The challenge then is to establish the convergence rate in a suitable framework and to study the systematic bias that is introduced due to discretization. 
Theoretical study of the geometric convergence of Alg. \ref{alg:AdamSampler} and the order of accuracy of the stationary distribution of the numerical method will be explored in future work, using techniques that are by now well developed in the setting of Langevin dynamics \cite{Bou-RabeeOwhadi2010,LeMaSt2016,durmus2021uniform,DaRi2020,monmarche2021high,leimkuhler2023contraction}.  In Appendix \ref{sup:sec:integration_order} we examine the order of accuracy numerically. 

It should be noted that ZBAOABZ does not have the additional desirable properties of BAOAB such as its property of quartic accuracy for configuration variables at high friction and its exactness for Gaussian targets \cite{benMDbook},
but the primary motivation for  sampling methods is typically to provide stability and fast exploration of the state space. In settings where the energy landscape is complex, there can be considerably higher error due to lack of exploration than due to the bias arising from the numerical discretization. We illustrate this in the examples. 

\section{Adam-inspired Monitor Function and Sundman Transform} \label{sec: SamAdams}
The time-rescaled Langevin dynamics \eqref{eq: full_framework1}-\eqref{eq: full_framework2} yields a family of samplers, each member specified by a particular choice of driving function $g(x,p)$ and Sundman transform kernel $\psi(\zeta)$. The choice of the monitor function, the auxiliary dynamics and the restriction function collectively decide the performance of the method, but have no impact on its theoretical foundation.  In this section, we propose a choice that allows to adapt some of the advantages of the Adam optimizer to the realm of sampling. 

In \cite{da2020general} it was demonstrated that the
Adam optimizer can be interpreted as the Euler discretization of a
certain system of ODEs, namely
\begin{align} 
\diff x_{\tau} &= \frac{p_{\tau}}{\sqrt{\zeta_{\tau} + \epsilon}}\diff \tau,           \label{eq:adam_ode1} \\ 
\diff p_{\tau} &= - \nabla U(x_{\tau})\diff \tau - \gamma p_{\tau}\diff \tau,                    \label{eq:adam_ode2} \\ 
\diff \zeta_{\tau} &= [\nabla U(x_{\tau})]^2\diff \tau - \alpha \zeta_{\tau}\diff \tau,          \label{eq:adam_ode3} 
\end{align} 
with $x_{\tau},p_{\tau},\zeta_{\tau}\in\mathbb{R}^{d}$,
$\alpha>0$, $\gamma>0$, and the algebraic operations are to be understood elementwise. To see how a discretization of this ODE leads to the Adam optimizer, see Appendix \ref{sup:sec:adam_derivation} and \cite{KaterinaThesis}. 

Assuming $\beta^{-1}$=0 in the time-rescaled process \eqref{eq: full_framework1}-\eqref{eq: full_framework2}, one observes that the choices
\begin{align}
\psi(\zeta)     & :=\frac{1}{\sqrt{\zeta+\epsilon}},\label{eq:Adam_transfo1} \\
g(x,p)       & :=\|\nabla U(x) \|^2,  \label{eq:Adam_transfo2}
\end{align}
mimic the Adam ODE \eqref{eq:adam_ode1}-\eqref{eq:adam_ode3}. 

\begin{remark}
        The choice \ref{eq:Adam_transfo2} for the function $g$ is not globally Lipschitz, and does not satisfy the assumptions of Theorem \ref{th:conv} because of the square. However, one could also consider $g$ to be the norm of $\nabla U$ with power $r$ (which is then globally Lipschitz for $r\leq 1$, but has a singularity at points where $\nabla U$ vanishes). To fulfil the assumptions, you could also use instead the smoothed version $(\|\nabla U\|^2+\epsilon)^{s/2}$,  $s \leq 1$.
        
        In practice, we could instead consider $\psi(\zeta)$ rather than $\zeta$ as a variable in the reweighting and the SDE, which is uniformly bounded from above and below. The time-rescaled process \eqref{eq: full_framework1}-\eqref{eq: full_framework2} could be expressed in this way.
\end{remark}

With \eqref{eq:Adam_transfo2} the $\zeta$-dynamics \eqref{eq: zeta_dynamics} becomes
\begin{align}\label{eq:zeta_dynamics_final}
\diff \zeta_{\tau} = - \alpha \zeta_{\tau} \diff \tau+  \|\nabla U(x_{\tau}) \|^2\diff \tau.
\end{align}
Similar to Adam, the solution $\zeta(\tau)$ then computes the weighted average of $\|\nabla U\|^{2}$  over recent history, as determined by the exponential weight $\alpha$, see \eqref{eq:zeta_solution}-\eqref{eq: Z-map} and the discussion in Appendix \ref{sup:sec:stepsize_variation}. 

The resulting dynamics differs in structure from that of Adam in two key ways:
\begin{enumerate}
    \item 
In Adam, the choice is made to leave the ODE for the momentum untransformed. In other words, the Adam ODE \eqref{eq:adam_ode1}-\eqref{eq:adam_ode3} is an incomplete Sundman transform.
Viewing the dynamics \eqref{eq:adam_ode1}-\eqref{eq:adam_ode3} in real time $t$ leads to
\begin{align*} 
\diff x_{t} & = p_t\diff t,                                            \\ 
\diff p_{t} & = - \frac{1}{\psi}\nabla U(x_t)\diff t - \frac{\gamma}{\psi} p_t \diff t,  
\end{align*} 
which implies that Adam uses an effective configuration-dependent momentum evolution.  By contrast, our method adapts the timescales of position and momentum components in a symmetric way.
\item Adam uses a vectorial $\zeta\in\mathbb{R}^{d}$, which in the language of time-rescaling corresponds to an individual adaptive stepsize per degree of freedom. The squared Euclidean norm in \eqref{eq:Adam_transfo2}
is replaced by an elementwise squaring in \eqref{eq:adam_ode3}.  Incorporating such individual timesteps in a sampling context introduces additional complication in both the theoretical foundation and practical implementation, and is left for future work.
\end{enumerate}

While we could employ the Sundman kernel \eqref{eq:Adam_transfo1}, we follow the idea of \cite{Ben_adaptive_verlet,alix} and introduce a filter in $\psi$ in order to restrict the value to a specified interval.  Two (similar) choices for Sundman transformation are 
\begin{equation}\label{sundman}
{\psi}^{(1)}(\zeta)=m\frac{\zeta^r+M}{\zeta^r+m}, \hspace{0.3in} 
{\psi}^{(2)}(\zeta)=m\frac{\zeta^r+M/m}{\zeta^r+1},
\end{equation}
for two constants $0<m<M<\infty$. We see that, for either choice, 
${\psi}(0)=M$ and $\psi(\infty)=m$ such that $m$ and $M$ serve as bounds on the Sundman transform kernel and hence on the adaptive stepsize, which then satisfies $\Delta t \in [\Delta t_{\rm min}, \Delta t_{\rm max}]$, where $\Delta t_{\rm min} =m\Delta \tau$ and $\Delta t_{\rm max} = M\Delta \tau$. In particular, $M/m$ gives the maximum factor by which the timestep can be dilated relative to the minimum stepsize. Using $\psi^{(1)}$ or $\psi^{(2)}$ can improve stability compared to \eqref{eq:Adam_transfo1} and allows the user to exert more control on the effective stepsizes $\Delta t$. The power $r>0$ can additionally be used to adjust the dependency to influence the distribution of stepsizes used in simulation. 
We note that for $\zeta\gg 0$,\footnote{To see this, divide numerator and denominator of $\psi^{(2)}$ by $\zeta^r$ and expand in a geometric series.}
\[
\psi^{(2)}(\zeta) \sim  m + (Mm-m^2)/\zeta^r.
\]
Hence for $m\approx 0$, $Mm=1$, we see that this is asympotically related to $\zeta^{-r}$ for large $\zeta$, i.e., where the smaller steps are needed. For further discussion of how the $\zeta$-dynamics \eqref{eq:zeta_dynamics_final} together with the transform kernel $\psi$ influences the stepsize adaption, we refer to Appendix \ref{sup:sec:stepsize_variation}.  


An Adam-like choice of $g$ is $g(x,p)=\|\nabla U(x)\|^2$, i.e., monitoring the force norm.  The closest fidelity to \eqref{eq:Adam_transfo1} is thus found with $r=1/4$ in the Sundman transformation, since scaling the full vector field by $a$ can be related to scaling half the system only by $a^2$.  
For flexibility, we propose to use 
\begin{equation}\label{eq:monitoring_function}
g(x,p)=\Omega^{-1} \|\nabla U(x)\|^s, 
\end{equation}
monitoring the force norm raised to some possibly fractional, positive power $s$ and scaled by a normalization factor $\Omega^{-1}$. Effectively, this choice combined with raising $\zeta$ to the $r$th power  in  the restriction function is tantamount to controlling the timestep based on the $r\cdot s$ power of the gradient norm.  
 
Where it is not a physical parameter of the modelling task, the temperature $\beta^{-1}$ allows the method to behave more like an optimizer ($\beta^{-1}\approx 0$) or a sampling scheme ($\beta^{-1} > 0$). In a Bayesian sampling context, $U(x)$ is the negative log-posterior and we may take $\beta^{-1}=1$ to sample the posterior or $\beta^{-1} <1$ to implement annealed importance sampling \cite{Neal2001}.  Due to the various points mentioned above, SamAdams with $s\cdot r=1/2$ does not strictly reduce to Adam for $\beta^{-1} \rightarrow 0$. In our experiments we found that the optimal choices of $r$ and $s$, as well as other coefficients, were problem-class dependent, but within a specific class of models those selections were relatively easy to decide.

\subsection{Other Monitor Functions}
The choice to make SamAdams relate to the Adam optimizer is not unique: possible alternatives include basing $g$ on only the prior in a Bayesian sampling setting or  basing $g$ on some subset of the variables (which are known to exhibit high levels of variability).  It would also be possible to introduce higher derivative information derived from the potential energy function such as the trace or determinant of the Hessian matrix. We have not substantially explored such options. 
For noisy gradient evaluations, one might also design a monitor function based on the estimation of gradient noise such that stepsizes are decreased whenever the injected gradient noise is large. This idea is motivated in Appendix \ref{sup:sec: LogisticRegression}, which shows results for logistic regression with stochastic gradients, during which the adaptive stepsize strongly reacts to the employed batch size. 
\section{Numerical Experiments\label{sec:numerics}}
In all our numerical experiments, we use the monitor function \eqref{eq:monitoring_function} for different $\Omega$ and $s$ (mostly $s\in\{1,2\}$), and the Sundman kernel $\psi$ may either be $\psi^{(1)}$ or $\psi^{(2)}$ from \eqref{sundman}, with different values for $m$, $M$, and $r$. We note that the detailed finetuning of these hyperparameters is usually not necessary. As discussed in Sec. \ref{sec:algorithm}, there are various ways in which $\zeta_0$ may be initialized, and we make use of both presented variants.
We examine the behavior of SamAdams in terms of individual trajectories as well as ergodic averages. In the case of the latter, we typically run multiple independent trajectories and first compute the intra-trajectory averages (i.e., time averages using the reweighting from Sec. \ref{sec:computing_averages}) and then the inter-trajectory average. To measure sampling biases, we compute weak errors of observables $\phi(q)$, $q:=(x,p)$, defined by
\begin{equation}\label{eq:weak_errors}
\lim_{n\to\infty}\Big|\mathbb{E}\big[\phi\big(q_n\big)\big]-\mathbb{E}\big[\phi\big(q(n\Delta t)\big)\big] \Big|,
\end{equation}
where $q_n$ gives the iterates obtained by the simulation. The expectation of the numerical iterates, $\mathbb{E}\big[\phi\big(q_n\big)\big]$, is estimated by time- and trajectory averages obtained from simulation, where we sometimes use the notation $\langle \phi \rangle$ to denote that empirical average. We also write $\langle\Delta t\rangle$ to denote the empirical mean of the adaptive stepsize used by SamAdams in a given experiment. We make use of the two temperature observables often monitored in molecular dynamics\cite{benMDbook}:
\begin{align}
T_{\text{kin}}  &:= \frac{1}{N_{\text{dof}}} \|p\|^2, 
  && \makebox[6.5cm][l]{(kinetic temperature)} \\
T_{\text{conf}} &:= \frac{1}{N_{\text{dof}}} x \cdot \nabla U(x), 
  && \makebox[6.5cm][l]{(configurational temperature)}
\end{align}
with $N_{\text{dof}}$ as the number of degrees of freedom. The expectation of these quantities is known and given by the system temperature, $\mathbb{E}(T_{kin})=\mathbb{E}(T_{conf})=T=\beta^{-1}$.
We also use other observables in our low-dimensional experiments whose expectations can be computed via numerical quadrature. Thus, the ground truths $\lim_{n \to \infty}\mathbb{E}[\phi(q(n\Delta t))]$ of the observable means are either known or easily obtainable. For the Bayesian neural network problems we do not have access to the true posterior means, but we measure sampling quality in terms of negative loss likelihood (given by the loss function) and posterior-averaged classification accuracies.

We integrate the SamAdams dynamics using the  ZBAOABZ scheme (see Alg. \ref{alg:AdamSampler}). When using our proposed Sundman kernels \eqref{sundman}, the algorithm has several hyperparameters on top of the usual Langevin friction $\gamma$ and temperature $T$, namely $\alpha$, $\Delta \tau$, $m$, $M$, $r$, and indeed the monitor function $g$. The values for $\Delta \tau$, $m$, and $M$ are relatively easy to choose based on the desired stepsize range (since $\Delta t\in (m\Delta\tau,\ M\Delta \tau]$). The parameter $r$ also has intuitive meaning: it governs the sensitivity of the transform kernel $\psi$ (and hence the stepsize $\Delta t$) to changes in $\zeta$ (see Fig. \ref{sup:fig:sundman_transform} in Appendix \ref{sup:sec:stepsize_variation}). We set it to values in $\{0.25,\ 0.5,\ 1 \}$ and did not need to finetune it on a given experiment. The $\zeta$ relaxation rate $\alpha$ and the choice of the monitor function $g$, however, are critical for good performance. When using our proposed monitoring function class \eqref{eq:monitoring_function}, one needs to specify the exponent $s$ and scaling factor $\Omega$. Once again, $s$ could just be taken to be 1 or 2 in practice without the need to finetune it, but picking the right scale $\Omega$ requires some care. In \ref{sec:alpha_omega_study} and Appendix \ref{sup:sec:stepsize_variation} we discuss the influence of $\alpha$ and $\Omega$ in more depth.

As mentioned in Sec. \ref{sec:algorithm}, there are various ways to initialize $\zeta$. is by simply setting $\zeta_0=0$. Since $\psi(\zeta_0)=M$ for either of the given choices of the filter function, we have $\Delta t_0=M\Delta \tau$=$\Delta t_{\max}$. The advantage of this approach is that the initial stepsize is known, independently of the value of the driving function $g(x_0,p_0)$. The disadvantage is that the stepsizes $\Delta t_n$ can then never become larger than $\Delta t_0$, which may prevent efficiency gains in settings where the simulation is started in a critical area, requiring small stepsizes, and then moves into more benign areas, allowing for larger stepsizes. This approach is particularly suitable for cases in which one can pick the initial configuration $(x_0,p_0)$ such that $g(x_0,p_0)$ is small. In low-dimensional models this is often easy, and we have followed this procedure below. For higher-dimensional models one may first need to run an optimization scheme. Another way of initializing $\zeta$ is by setting $\zeta_0=g(x_0,p_0)$. That way, the initial stepsize $\Delta t_0$ is fully determined by the driving function at the initial conditions, which is natural given that $\Delta t_n$ with $n>0$ will depend on the current and past values of the driving function. This allows the algorithm to pick the ``right'' initial stepsize automatically, which will typically not lead to $\Delta t_0=\Delta t_{\max}$. The downside of this approach is that the user has no way to enforce a certain initial stepsize. It's all up to the algorithm to pick one in the interval $(m\Delta \tau,\ M\Delta \tau]$. 

The experiments compare ZBAOABZ to the constant-stepsize equivalent BAOAB. Since BAOAB is a high-quality sampling method in and of itself, able to outperform many of the conventionally used unadjusted schemes\cite{LeMa2013}, showing that ZBAOABZ is able to outperform BAOAB is equivalent to pushing the limits of state-of-the-art sampling.

\subsection{Asymmetric Double Well}\label{sec:toymodel}
We first demonstrate the stepsize adaptation of SamAdams on a one-dimensional double well problem where one well is much narrower than the other. The potential is given by $U(x)=\frac{b}{L}(x+1)^2(x-L)^6$ with $b=1.5$ and $L=2$. We pick a small temperature $T\equiv \beta^{-1}=0.4$ to make the transition across the barrier reasonably rare. Fig. \ref{fig:toy_results}a) shows the potential and the density. The hyperparameters of monitoring function \eqref{eq:monitoring_function} are $s=2$ and $\Omega=1$. We initialize $(x_0,p_0)=(2,0)$ and SamAdams variable $\zeta_0=g(x_0,p_0)=\|\nabla U(x_0) \|^2$. We use Sundman kernel $\psi=\psi^{(1)}$ with $m=0.1$,  $M=10$,  $r=0.25$. Langevin friction and $\zeta$ friction are given by $\gamma=\alpha=1$. 
Since a Langevin path will encounter greater forces in the narrower potential well, the aim is for SamAdams to decrease its stepsize $\Delta t$ accordingly. That this is indeed the case can be seen in Fig. \ref{fig:toy_results}b), which shows the $x$- and $\Delta t$-values of a single SamAdams trajectory. It can be seen that occupation in the narrow well (negative $x$-values) leads to a restriction of the stepsize $\Delta t$ to small values. 
\begin{figure}[!htb]
\begin{centering}
\includegraphics[width=1\textwidth]{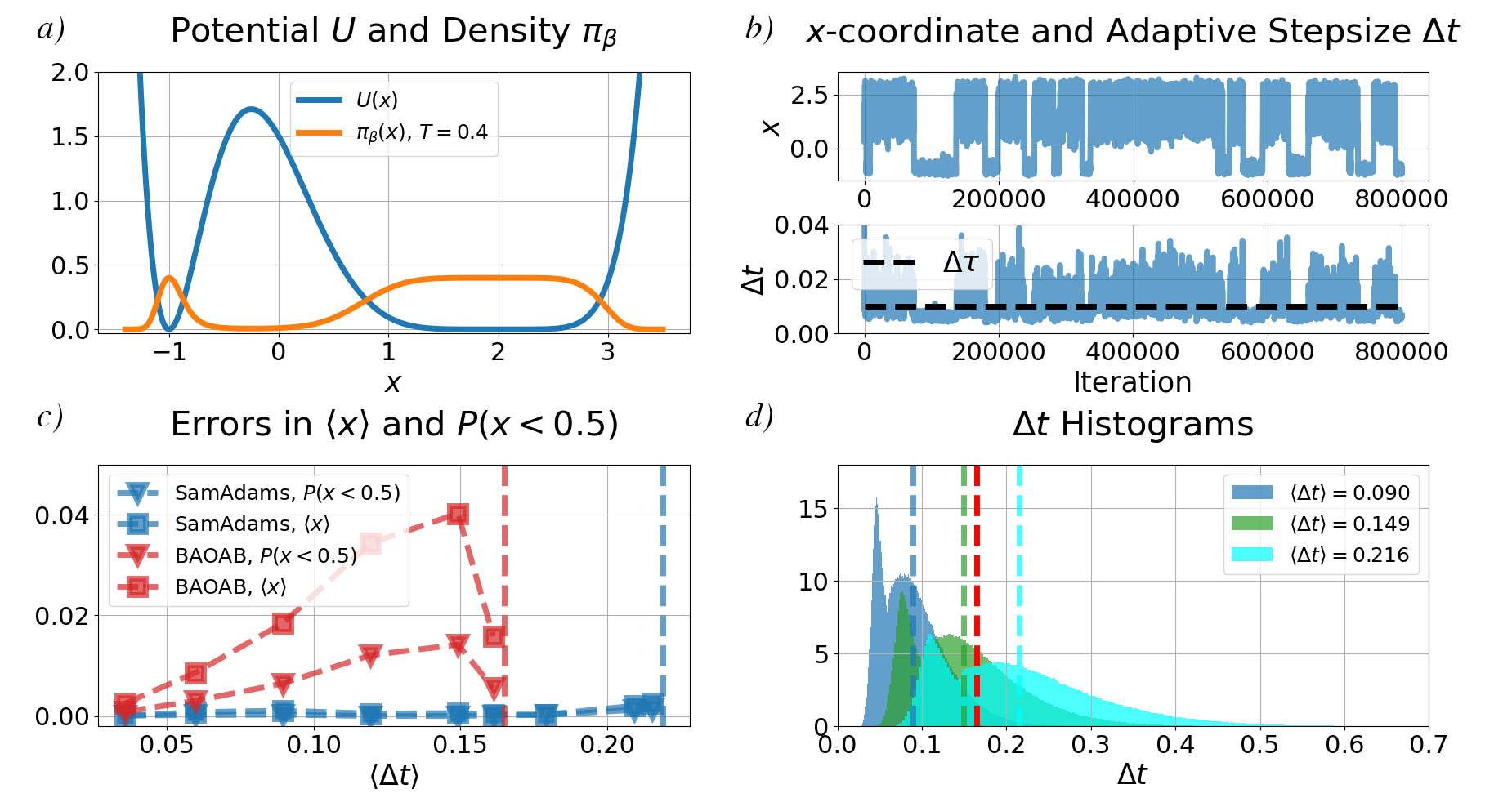} 
\end{centering}
\caption{\label{fig:toy_results} Sampling experiments on a 1D toy model (see text). \textbf{a)} Potential and Gibbs density for employed temperature $T=0.4$. \textbf{b)} $x$-coordinate and adaptive stepsize $\Delta t$ for SamAdams along a single trajectory. The black dashed line gives the value of virtual stepsize $\Delta \tau$ which is adaptively increased or reduced to yield the real stepsize $\Delta t$. \textbf{c)} Absolute mean errors of two observables, the $x$-coordinate and the occupation frequency of area $x<0.5$ against (mean) stepsize $\Delta t$. The different values for SamAdams were obtained by varying $\Delta \tau$ from 0.03 to 0.2. \textbf{d)} $\Delta t$ histograms for SamAdams run at three different $\Delta \tau$. }
\end{figure}
Fig. \ref{fig:toy_results}c) shows weak errors of two observables, the $x$-coordinate and indicator function of the domain $x<0.5$ (which roughly corresponds to the occupation probabilities of the narrow well and the barrier), against stepsize (mean stepsize for the adaptive scheme).  We first ran SamAdams for different values $\Delta \tau$, measured the mean adaptive stepsizes $\langle \Delta t \rangle$ for each of these runs, and then ran BAOAB at stepsizes fixed to these values to obtain canonical averages at the same mean stepsize and compute cost.
Each point in the figure was generated by averaging over 300 independent trajectories for $5\cdot 10^7$ iterations (discarding the first 100,000 iterations as burn-in). We also plot vertical lines denoting the maximum (mean) stepsize at which the corresponding algorithm became unstable in at least one of the 300 trajectories (i.e., the stability threshold). We see that BAOAB does significantly worse than SamAdams, only reaching similar accuracies for the smallest stepsize examined. SamAdams' performance barely depends on $\langle \Delta t\rangle$ until very close to its stability threshold. We also observe that it is able to use larger steps than BAOAB. 

Fig. \ref{fig:toy_results}d) shows the $\Delta t$ histograms for three of the SamAdams runs. They have a bimodal structure, as might be expected from an asymmetric double well. All three of them are able to use stepsizes larger than the stability threshold of BAOAB, supporting the idea that the stability threshold for fixed-stepsize schemes depends on local variation of the loss landscape, rather than on the landscape as a whole. As for  the star potential of Fig. \ref{fig:star_trajectory}, it is enough to use small stepsizes in critical areas. Note how one of the $\Delta t$ histograms has a mean of $\langle \Delta t \rangle=0.216$, which is 31\% larger than BAOAB's threshold, which together with the low observable errors even at that stepsize implies a substantial increase in computational efficiency compared to BAOAB. The ability to use larger stepsizes (in some cases {\em much} larger stepsizes) than constant stepsize schemes while preserving sampling quality will also be observed in the examples of the next subsection.

\subsection{Planar Systems}
In the introduction, we already mentioned an example involving the ``star potential'', which has narrow corridors that can be difficult to sample efficiently.  Here we consider several other examples of 2-degree of freedom problems and explore the accuracy and stability of the new method in comparison with fixed stepsize integration.

\subsubsection{Neal's Funnel}
We consider a 2D variant of Neal's funnel \cite{Neal2001} (a 9D version will be taken up in the following subsection).   The potential energy function is
\[
U_{\rm Neal} (x,\theta) = 
\frac{x^2}{2e^\theta} + \frac{\epsilon}{2}(x^2+\theta^2).
\]
Our goal is to sample the canonical distribution at temperature $T=1$.  We used $\gamma=5$ and discarded $10^5$ steps as burn-in (equilibration). The initial position was taken to be $(0,5)$ in a relatively flat zone, and we set initial momenta and $\zeta_0$ to zero. 

Because canonical sampling may be intractable due to unconstrained domains, one often incorporates a term in the form of a prior (and associated potential) to maintain confinement of solutions; we have done this here by using a simple harmonic restraint.

In the funnel problem the domain shrinks to a narrow neck as $\theta\rightarrow -\infty$.  This creates a numerical challenge as the trajectory rattles back and forth against the walls of the channel; for a fixed moderate stepsize, at some point the solution will become unstable and jump out of funnel.  In practice these escaping trajectories often re-enter the larger domain (with $\theta>0$) and re-equilibrate at the target temperature, but the unstable behavior can damage the computation of observables.     An example of this type of behavior can be seen in Fig. \ref{funnel_instability} where a trajectory is shown together with the evolution of kinetic temperature and average potential.  As we can see, the two observables are severely degraded when the instability is encountered which would ultimately be seen as poor convergence.  In the right panel of Fig. \ref{funnel_instability}, we also see what happens when the stepsize is halved ($\Delta t=0.02$).  The instabilities are still very much in evidence if the trajectory is long enough (here $N=10^8$).  The stepsize would need to be below $\Delta t=0.01$ to completely eliminate the instability.\footnote{Discrete Langevin trajectories for potentials that are not globally Lipschitz are inherently unstable, due to the use of normally distributed random variables (see \cite{Talay2002,MiTr2005_paper} for some discussion); that said, in our experience, for typical systems the frequency of long excursions in fixed stepsize trajectories decreases rapidly as the stepsize drops below a certain well defined stability threshold. }  

\begin{figure}[!htb]
   \centering
\includegraphics[width=0.95\textwidth]{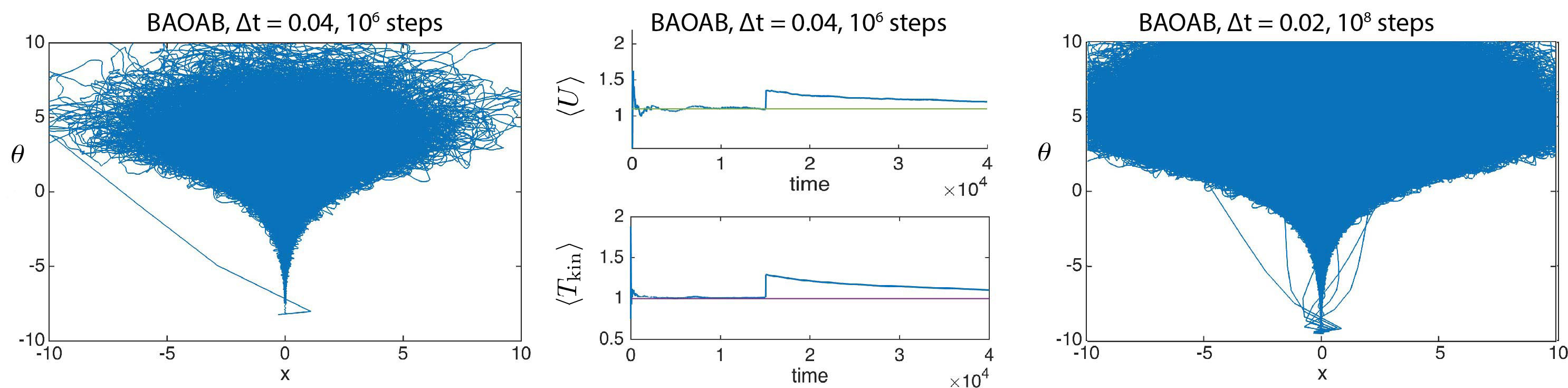}
\caption{\label{funnel_instability} \textbf{Left, Center:} A BAOAB trajectory with stepsize $\Delta t=0.04$ shows an unstable evolution in $10^6$ steps. The descent into the funnel leads to a spike in both kinetic temperature and mean potential energy.  Although shortlived, this type of event can, as here, corrupt long term averages. \textbf{Right:} At longer times, these instabilities are inevitable,  for stepsizes above or equal to $\Delta t= 0.015$. }
\end{figure}

By contrast, SamAdams (ZBAOABZ)  produces reliable, stable trajectories with much larger mean stepsize than any fixed stepsize method.  In Fig. \ref{funnel_SA} we show a trajectory with mean stepsize $\langle \Delta t \rangle = 0.16$ ($N=10^7$). (The details of stepsize variation are as follows:  $\Delta t_{\rm min} = m\Delta_{\tau} = 10^{-4}$, $\Delta t_{\rm max}=M\Delta_{\tau} = \Delta t_0 = 0.6$, $\zeta_0=0$,  $\alpha = 0.1$,  $r=0.5$ and $g(x,p)=\|\nabla U\|$.  We used the second form of the filter function $\psi=\psi^{(2)}$.)

A histogram computed using the samples obtained from SamAdams is virtually identical to the target distribution (Fig. \ref{funnel_dists}), despite requiring around 10\% of the computational effort needed if the corresponding fixed-stepsize method was used.  Note that since a histogram can be interpreted as an observable itself, the samples obtained via SamAdams have to be reweighted as described in Sec. \ref{sec:computing_averages} before plotting the histogram. Observables like the kinetic temperature or the mean potential energy as shown in Fig.\ref{funnel_SA} are approximated to three significant digits.
\begin{figure}[!htb]
   \centering
\includegraphics[width=0.7\textwidth]{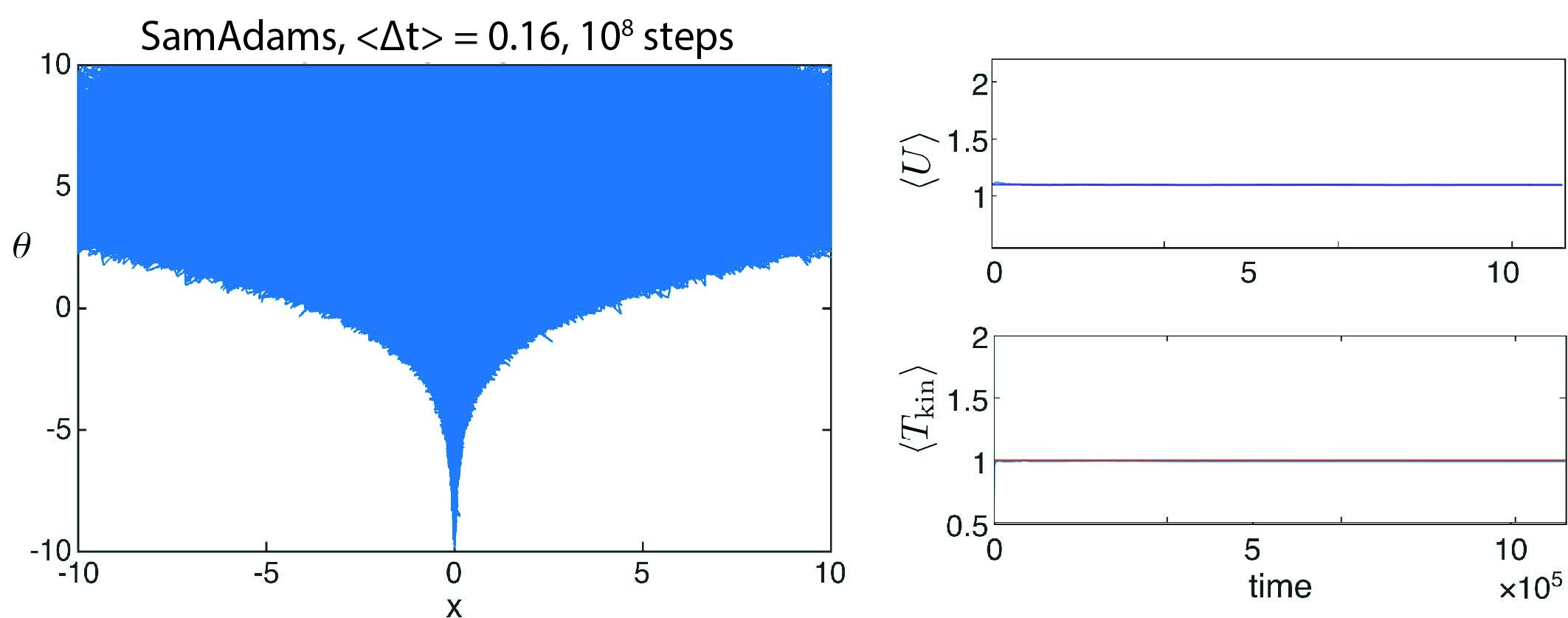}
\caption{\label{funnel_SA}A SamAdams trajectory with mean stepsize $\langle \Delta t\rangle =0.16$ corrects the instability of the fixed stepsize method.  The kinetic temperature and potential energy average converge to three significant digits of accuracy.}
\end{figure}
\begin{figure}[!htb]
   \centering
\includegraphics[width=0.7\textwidth]
{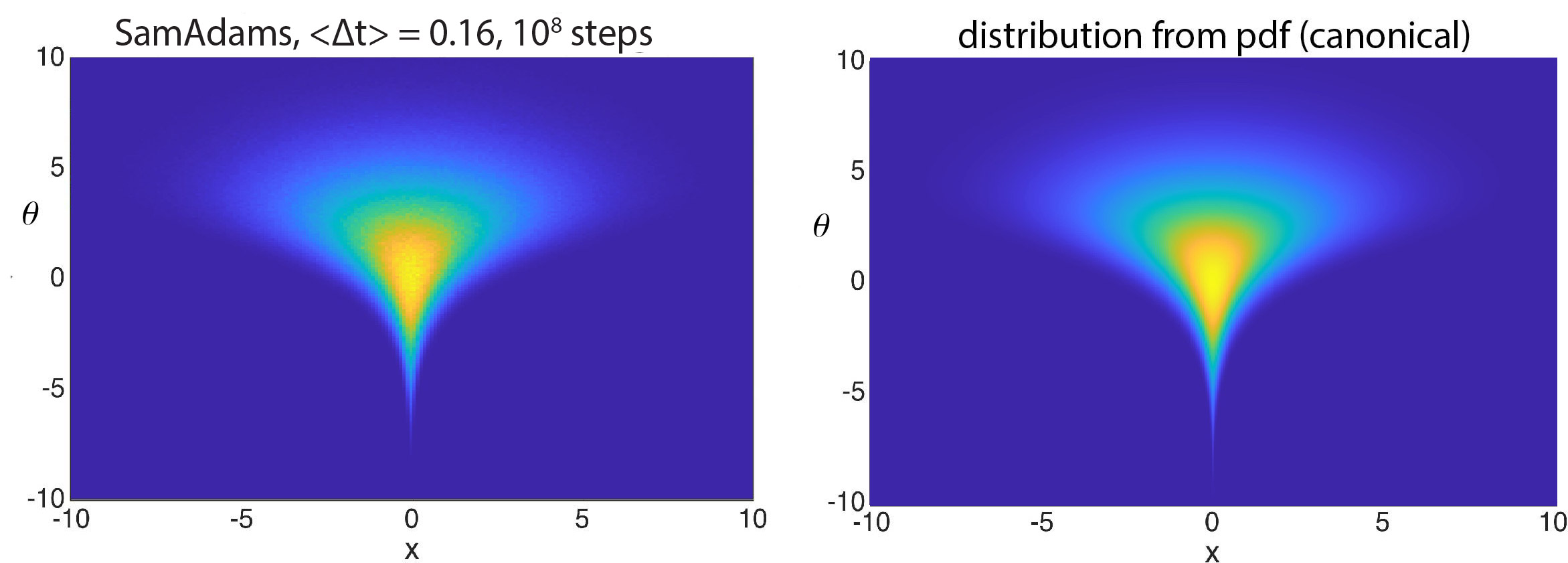}
\caption{\label{funnel_dists}Here we compare the (reweighted, see text) histogram of trajectory data for SamAdams to the actual canonical distribution. The distributions are visually very similar.}
\end{figure}

\subsubsection{Entropic Barrier}
Rare event sampling problems in many fields can be characterized by low energy basins connected by thin channels.  Diffusion through the narrow corridors typically requires small stepsizes since too-large stepsizes either lead to expulsion from the corridor or numerical instability. An entropic barrier problem was constructed to illustrate the challenge of such tasks.   The potential function  is
\[
U_{\rm channel}(x,y)= \frac{y^2}{1+10x^4}+0.001(x^2-9)^2.
\]
We simulate this with temperature set to $0.05$ to create a  challenging test.  We used friction $\gamma=5$ in this example which gave approximately optimal results. The initial point was taken at $(x,y)=(3,0)$ near the minimum on the right side, and the initial momenta were set to zero. The SamAdams parameters were like in the previous example, but we set $\Delta t_{\rm min}=0.0001$, $\Delta t_{\rm max}=\Delta t_0= 0.5$.
\begin{figure}[!htb]
   \centering
\includegraphics[width=0.9\textwidth]{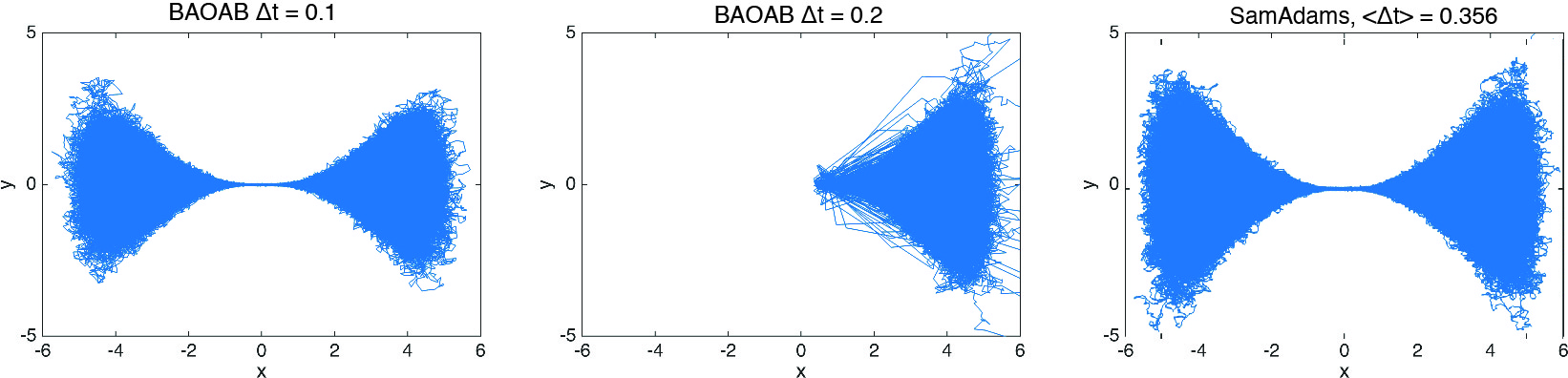}
\caption{\label{entropic_compare_1} Comparison of trajectories obtained using fixed and variable stepsize.  \textbf{Left:} BAOAB with fixed stepsize $\Delta t=0.1$  converges as expected (with around 2-3 barrier crossings per 1M steps). \textbf{Center:} BAOAB becomes unstable above $\Delta t=0.15$ and at $\Delta t=0.2$ shows no diffusion over the barrier. \textbf{Right:} A variable stepsize trajectory with $\langle \Delta t\rangle =0.356$, restoring the performance of small fixed stepsize.}
\end{figure}

\begin{center}
\begin{figure}[!htb]
\includegraphics[width=0.55\textwidth]{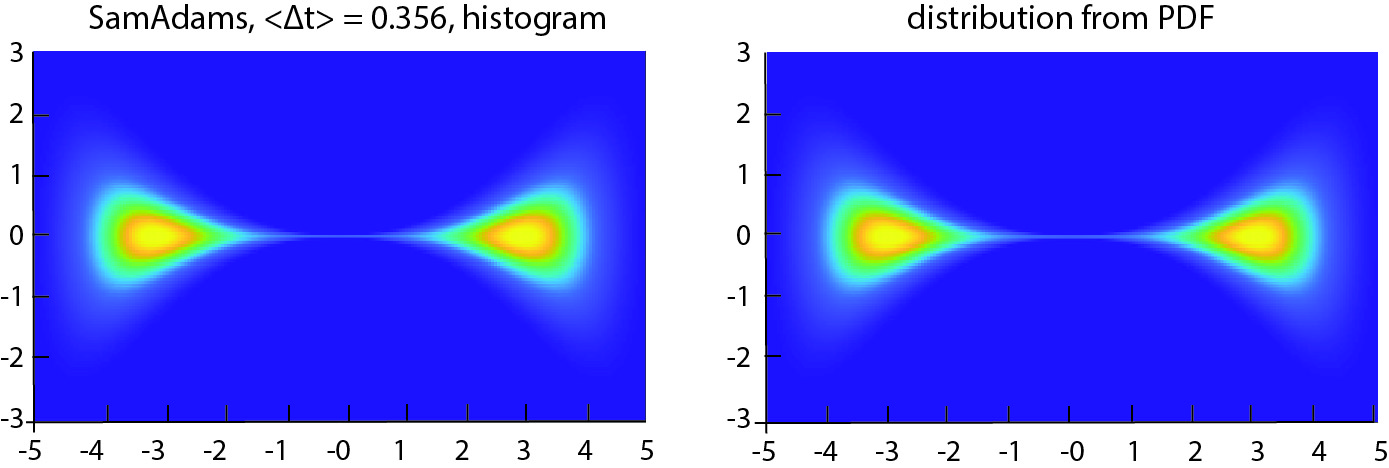}\hspace{0.1in}
\includegraphics[width=0.4\textwidth]{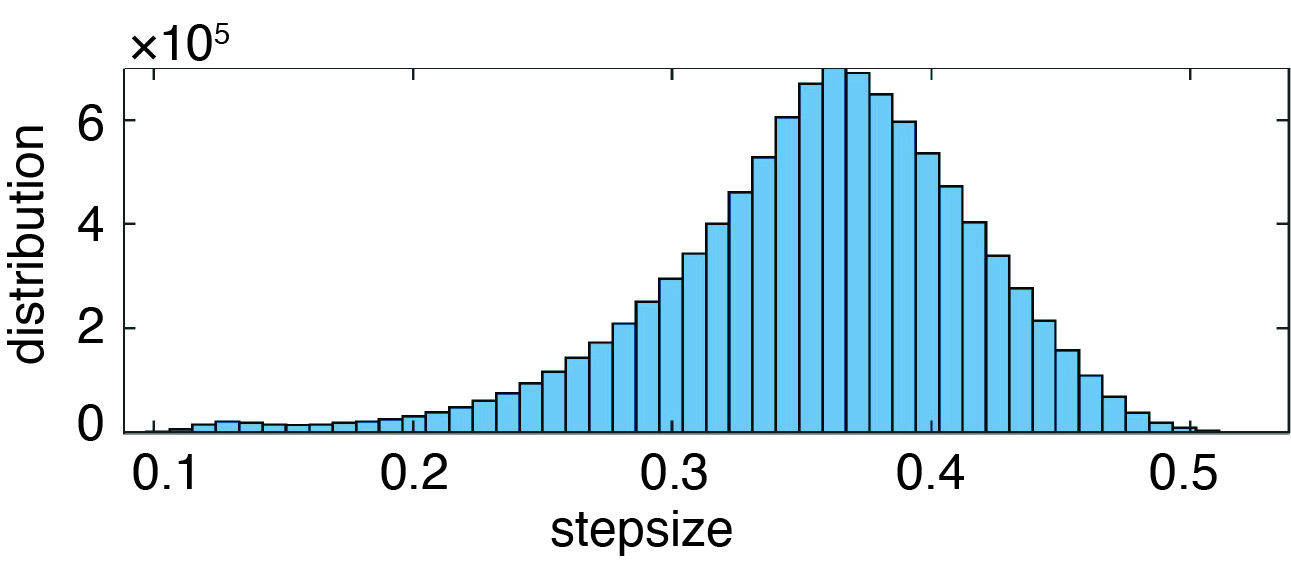}
\caption{\label{entropic_compare_2} SamAdams trajectory with large mean stepsize ($\langle \Delta t\rangle = 0.356$).  \textbf{Left:}  
Histogram of trajectory data. \textbf{Center:} Exact distribution. \textbf{Right:} 
 Distribution of adaptive stepsize $\Delta t$. Only a vanishingly small number of steps require a small (below 0.2) stepsize. Not visible in the histogram is the fact that a very small number of steps used a stepsize below 0.01.}   
\end{figure}
\end{center}

We found that the fixed stepsize integrator was stable up to a maximum stepsize of $\Delta t\approx 0.2$, for many trajectories,  but above $\Delta t=0.15$ there is a lot of error in the narrowest part of the channel and the number of full crossings from one basin to the other is significantly reduced.  Fixed stepsize trajectories with $\Delta t=0.1$ and $\Delta t=0.2$ are shown in Fig. \ref{entropic_compare_1}.  Also shown is a typical SamAdams trajectory with mean stepsize $\langle \Delta t\rangle=0.356$.   

In the left panel of Fig. \ref{entropic_compare_2} we show a histogram of the computed states for $\langle \Delta t\rangle=0.356$  which can be compared with the exact distribution in the central panel. In this example, with $10^8$ steps, the kinetic and configurational temperatures are accurate to within 1\% ($T_{\rm kin} = 0.04947$, $T_{\rm conf} = 0.04954$).  Of particular interest is the fact that the stepsize distribution shown at right in Fig. \ref{entropic_compare_2} has barely any mass below $\Delta t=0.2$, meaning that the small stepsizes are only needed at very rare instances of barrier crossing (precisely in a small neighborhood of the origin).

\subsubsection{Beale Potential}
In this 2D model there are again two basins, but they have a complicated curved shape. The rarity of transitions is actually still more extreme than in the entropic barrier problem. The two wells have very different depths and shapes.  Transitions between the wells happen in a narrow region around the origin, although we have noticed that numerical error can either eliminate corridors or create new ones.
With a 6th order exponential confinement term, the Beale potential takes the form
\begin{eqnarray*}
U_{\rm Beale}(x,y) & = & (1.5 - x + xy )^2 + ( 2.25 - x + xy^2)^2+ (2.625 - x+ xy^3)^2 \\
& & \hspace{0.3in} +0.3\exp \left (0.00001(x^6 + y^6) \right ).
\end{eqnarray*}
We used a temperature of $T=3$ to obtain sufficient barrier transitions.  The initial point was $(x,y)=(3,0)$, with zero momenta, and we use $\gamma=1$.

For fixed stepsize integration, the step needs to be below about 0.003.  Running at or above this threshold almost all runs of $10^8$ steps result in failure.   At the small value $\Delta t=0.0025$, $10^6$ steps are not enough to cover the distribution, as shown in the center panel of Fig. \ref{beale_dists}.
By contrast (see the right panel of Fig. \ref{beale_dists}) we were able to use SamAdams reliably with a mean step of $\langle \Delta t \rangle = 0.022$ which indicates a stability improvement of nearly an order of magnitude.  While there is some visible error in  the histogram for SamAdams (Fig. \ref{beale_dists}) it is much less than for BAOAB operating at its much smaller stable stepsize.  

\begin{center}
\begin{figure}[!htb]
\includegraphics[width=0.85\textwidth]{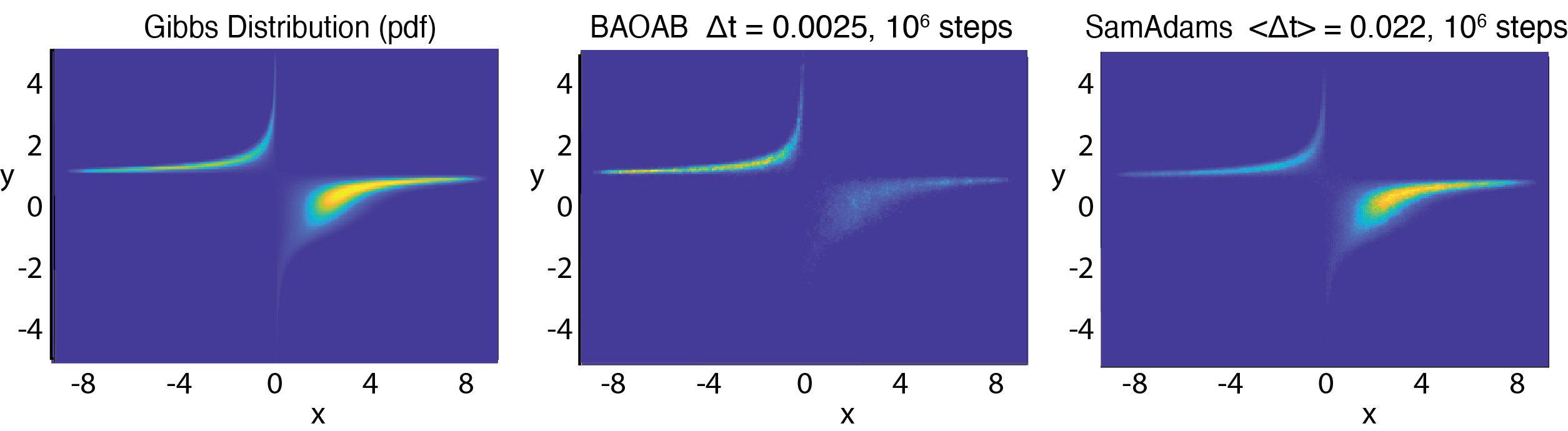}
\caption{\label{beale_dists}Sampling results for the Beale potential. \textbf{Left:} Gibbs distribution (ground truth). \textbf{Center:} Histogram for a fixed stepsize BAOAB run with $\Delta t=0.0025$ (larger stepsizes are unreliable). \textbf{Right:} Histogram obtained from a SamAdams trajectory with $\langle \Delta t\rangle = 0.022$. Settings: $\psi=\psi^{(2)}$ with $\Delta t_{\rm min}=m\Delta \tau=0.001$, $\Delta t_{\rm max}=M\Delta \tau=0.1$, $\Delta t_{0}=0.1$, $r=0.5$,  $\alpha=1$, $\zeta_0=0$, and monitor function $g(x,p) = \|\nabla U(x)\|$. }   
\end{figure}
\end{center}


We also perform more extensive runs to resolve the bias in canonical averages.  The procedure is similar to the case of the 1D toymodel in Sec. \ref{sec:toymodel}. To obtain different $\langle \Delta t \rangle$ for SamAdams, we varied $\Delta \tau$, keeping everything else fixed, and then ran BAOAB at fixed stepsizes set to the obtained $\langle \Delta t \rangle$ values for comparison. We averaged over 200 independent trajectories, where the number of iterations scaled linearly with $\Delta \tau$ (we used $10^8$ iterations at $\Delta \tau=0.001$, and varied $\Delta \tau$ from 0.00025 to 0.06). The computational cost of each BAOAB simulation was the same as for the corresponding SamAdams run. Fig. \ref{fig:beale_errors} shows the bias against (mean) stepsize $\Delta t$ for the two temperatures and coordinates.
While SamAdams and BAOAB show comparable errors for small stepsizes, BAOAB becomes unstable at $\Delta t=0.0025$. SamAdams can still be run at $\langle \Delta t \rangle =0.01$ without any decline in accuracy. If a modest decline in accuracy is acceptable, it can be run at substantially larger steps. In fact, while the BAOAB curves in Fig. \ref{fig:beale_errors} stop at the last stable stepsize,  SamAdams remained stable until the last examined $\Delta \tau$.  
\begin{center}
\begin{figure}[!htb]
\includegraphics[width=0.85\textwidth]{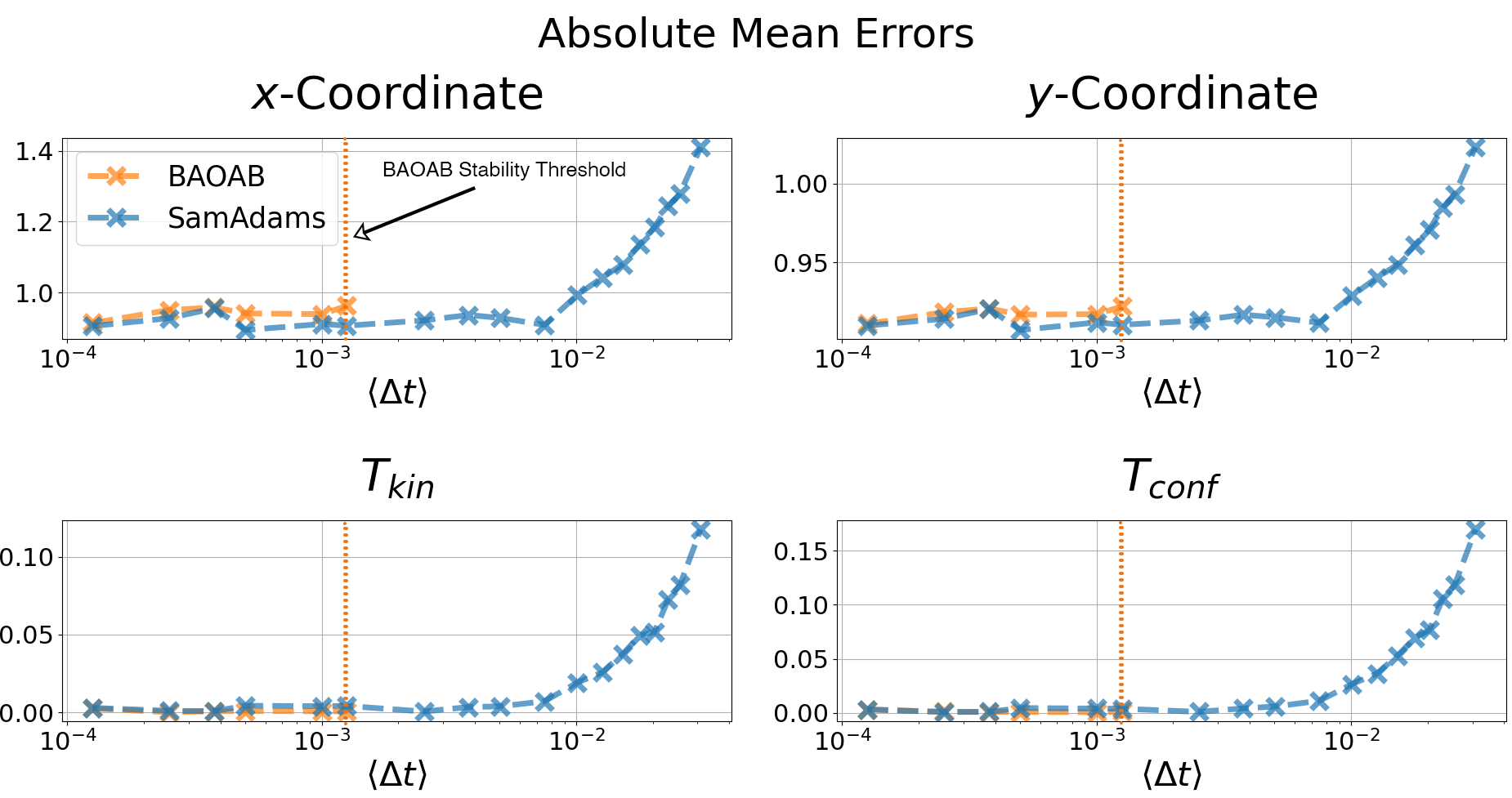}
\caption{\label{fig:beale_errors}Absolute errors of the means of four observables on the Beale potential against (mean) stepsize $\Delta t$. The ground truths for the coordinates were obtained via numerical quadrature. Simulation settings: $\gamma=1, \ T=3, \ \alpha=1, \ g(x,p)=0.1\|\nabla U\|^2, \ \psi=\psi^{(1)} \text{ with } m=0.1, \ M=10, \ r=0.25.$ See text for more information. }  
\end{figure}
\end{center}

\subsubsection{Auxiliary Variable Control Dynamics} \label{sec:alpha_omega_study}
Until now we have not yet explored the role of the hyperparameter $\alpha$ in \eqref{eq:zeta_dynamics_final} nor the influence of the scale of the monitor function $g(x,p)$. In the case of $g(x,p)=\Omega^{-1}\| \nabla U(x)\|^s$ (our choice for all of our experiments in this article), the scaling factor $\Omega$ governs how strongly $\zeta$ reacts to a given force value $\nabla U(x)$. Since the range of typical values of $\nabla U$ strongly depends on the problem at hand, the scaling parameter $\Omega$ is useful to adjust the range of $g(x,p)$ and hence the range of $\zeta$ and $\Delta t$. For the one- and two-dimensional examples considered thus far, setting $\Omega=1$ serves as a good starting point, whereas on the classification tasks of the the next section we use $\Omega=N_{\text{D}}$ with $N_{\text{D}}$ the size of the dataset. Since the force $\nabla U$ is written as a sum over the model likelihoods of all data points, this choice for $\Omega$ encourages $\zeta$-values close to the fixed point of the Sundman transform $\psi$. The hyperparameter $\alpha$ governs the damping of the $\zeta$-dynamics as well as its moving average behavior, i.e., how quickly past force values are 'forgotten' (see Appendix \ref{sup:sec:stepsize_variation} for more details).

As we have shown thus far, one fundamental benefit of the adaptive stepsize is the ability to use overall large stepsizes while using small stepsizes selectively. If the resulting $\langle \Delta t \rangle$ is larger than the maximum stable stepsize of a conventional scheme (assuming the accuracies remain acceptable), SamAdams leads to an improvement of stability and computational efficiency. An important question is then how difficult it is to tune the SamAdams-specific hyperparameters $\alpha$ and $\Omega$.
To examine this, we chose the example of the star potential.   We create a grid of different $(\alpha, \Omega)$-values and run SamAdams trajectories for each grid point. For each grid point, we experimentally find the stability threshold of SamAdams, i.e., the largest $\langle \Delta t\rangle$ that still leads to stable trajectories. We do this by running 100 independent trajectories for successively increasing values of $\Delta \tau$, the base stepsize, until the trajectories become unstable. We vary $\Delta \tau $ in $[0.04, 0.08]$, a range in which the resulting $\langle \Delta t \rangle $-values become roughly comparable to the stability threshold of BAOAB for most ($\alpha, \Omega$)-combinations \footnote{BAOAB's threshold was found to be $\Delta t=0.01275$, defined as the smallest stepsize for which 100 independent trajectories remain stable for $5\cdot 10^6$ iterations.}. The number of iterations per tested $\Delta \tau$ is scaled via $N=N_0\Delta \tau_0/\Delta \tau $ with $N_0=2\cdot 10^6$ and $\Delta \tau_0=0.06$. If an $(\alpha, \Omega)$-combination leads to unstable trajectories even at the smallest $\Delta \tau$ tested, we deem the combination `highly unstable'. The results are shown in Fig. \ref{fig:star_grid_study}.

\begin{figure}[!htb]
\includegraphics[width=0.85\textwidth]{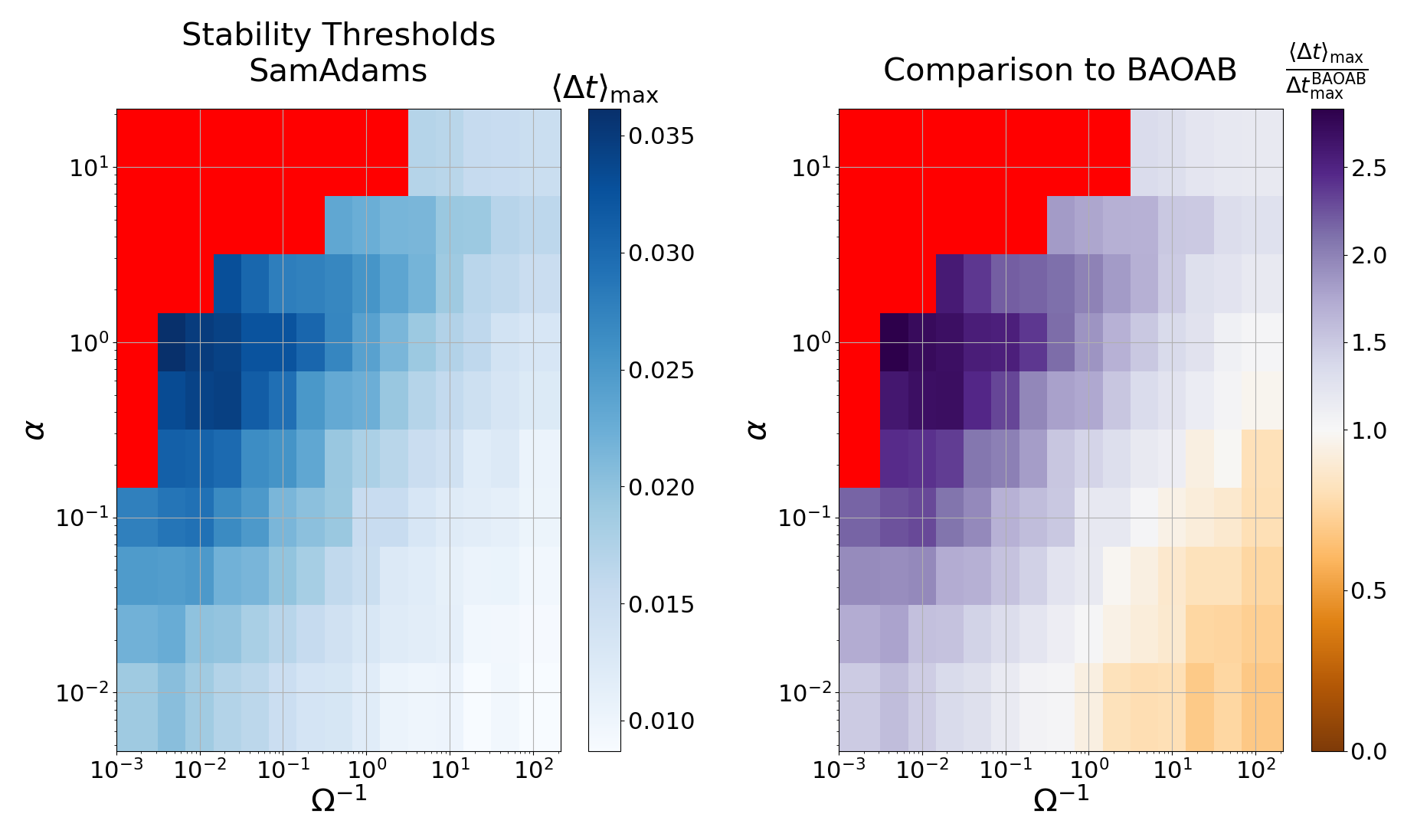}
\caption{\label{fig:star_grid_study} 
SamAdams stability thresholds $\langle \Delta t \rangle_{\text{max}}$ in dependency of  $\alpha$ (attack rate) and $\Omega$ (scale coefficient of force norms) in the case of the star potential. \textbf{Left:} Plain $\langle \Delta t \rangle_{\text{max}}$. A well defined optimal zone appears around $\alpha = 1$, $\Omega = 100$. \textbf{Right:} Fraction of SamAdams threshold and BAOAB's threshold $\Delta t^{\text{BAOAB}}_{\text{max}}$. The brown areas show parameter regions where BAOAB is more stable than SamAdams, the purple area regions where SamAdams is more stable. The red areas denote 'highly unstable' regions (see main text). Other settings: $T=\gamma=1, \ s=2, \ \psi=\psi^{(1)} \text{ with } m=0.1, \ M=10, \ r=0.25$.} 
\end{figure} 
From the left-hand figure, we see how SamAdams' stability threshold varies with $(\alpha, \Omega)$, with an optimum region  visible for $\alpha \simeq 1$ and $\Omega^{-1} \simeq 0.01$. The highly-unstable area (red) is mainly due to too large values of $\alpha$. As shown in Appendix \ref{sup:sec:stepsize_variation}, too large $\alpha$ will lead to $\Delta t = M\Delta \tau $. For our choice $M=10$ and the smallest $\Delta \tau$-value tested, $\Delta \tau=0.04$, this would correspond to a constant-stepsize scheme running at stepsize 0.4, more than 30 times larger than BAOAB's stability threshold. Note that even for these highly unstable ($\alpha, \Omega)$-values, one could still enforce stability by picking $\Delta \tau$ and $M$ such that $M\Delta \tau$ is smaller than BAOAB's threshold. On the right-hand side in Fig. \ref{fig:star_grid_study}, we plot the same grid but color according to the fraction of SamAdams' stability threshold and BAOAB's threshold. We observe that there is a large area in parameter space (shaded purple) in which SamAdams is more stable than BAOAB, with a remarkable improvement of up to 300\%.

While these experiments show the need to pick admissible $(\alpha,\Omega)$-values, they also demonstrate that there is a wide range of settings in which SamAdams is more stable than BAOAB, reducing the need for time-consuming hyperparameter finetuning.

\subsection{Higher-dimensional Models}
We now explore several higher-dimensional cases using fixed stepsize Langevin as well as the SamAdams scheme. Our goal here is to show that there is a good basis for believing the method will be effective for applications in statistics and machine learning.  

\subsubsection{Neal's Funnel (multi-dimensional case).}
The 2D funnel example of the last subsection is a simplified version of a model with multiple latent variables that was originally proposed as a surrogate for problems in Bayesian hierarchical modeling with nonlinear dependencies \cite{Neal2001}; for a related model arising in ecology, see \cite{Mo2017_growth}.  We test SamAdams on the original 9-dimensional model modified with a confining prior with a large variance ($\sigma_x^2 = 20$) to ensure sufficiently frequent trips into the funnel neck.  We used the following setup in our simulations: 
\[
\theta \sim {\mathcal{N}}(0,3), \hspace{0.2in} \left . x_i \right | \theta \sim  {\mathcal{N}}(0,\exp(\theta))\times \mathcal{N}(0,\sigma_x^2),
\]
with p.d.f.
\[
p(\theta, x_1,x_2, \ldots, x_n) = \frac{1}{\sqrt{(2\pi)^n \cdot 6}} \exp\left(-\frac{\theta^2}{6}\right) \cdot \exp\left(-\frac{n}{2}\theta - \sum_{i=1}^n x_i^2\left (\frac{1}{2e^\theta}+\frac{1}{2\sigma_{x}^2}\right)\right ).
\]

\begin{center}
\begin{figure}[!htb]
\includegraphics[width=1.0\textwidth]{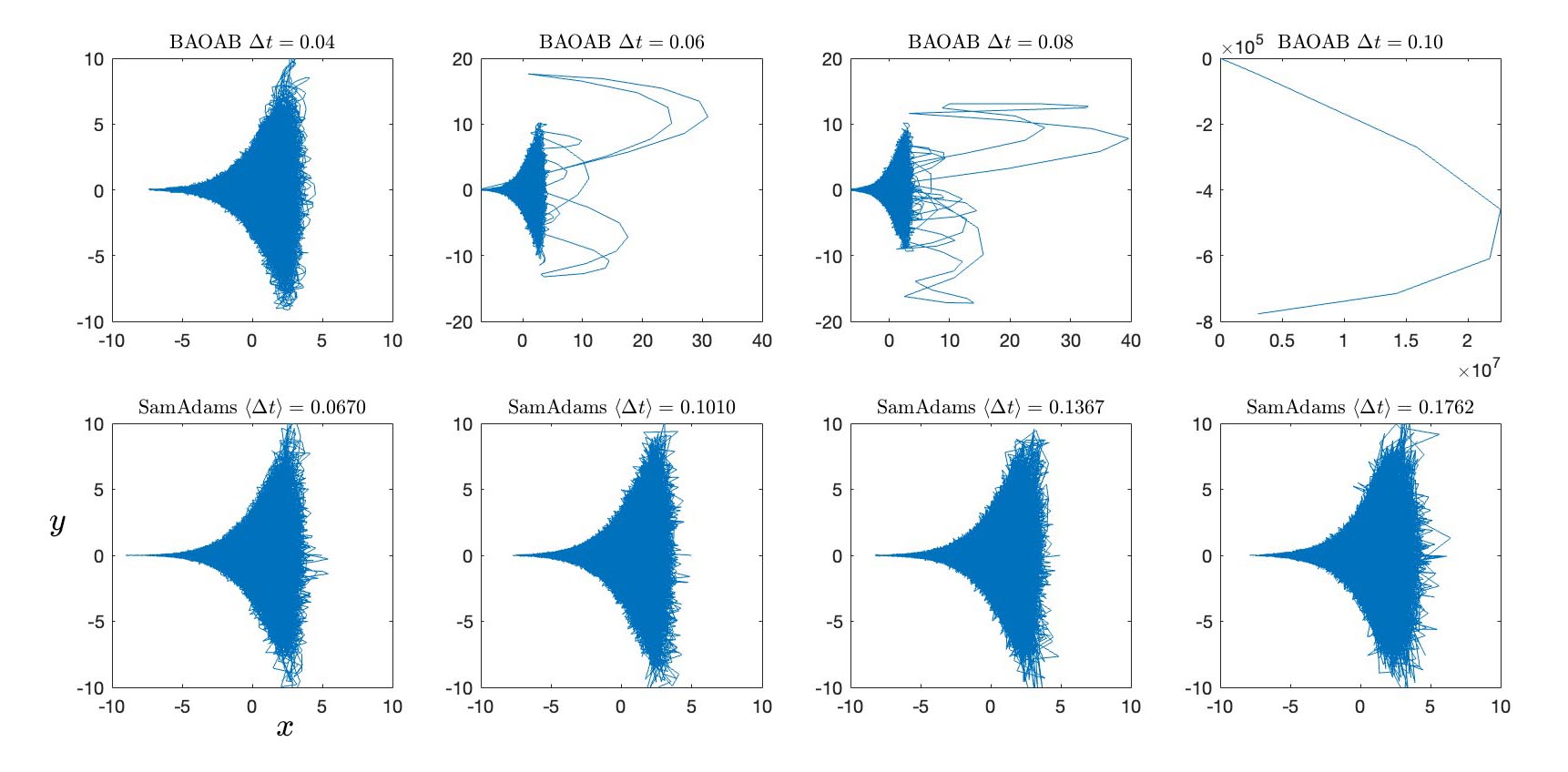}
\caption{\label{fig:multiD_compare} Comparison of trajectory graphs in the $\theta,x_1$ projection obtained for different methods and stepsizes.  \textbf{Top:} fixed stepsize BAOAB runs with $\Delta t$ increasing left to right until explosion is encountered at $\Delta t=0.1$. \textbf{Bottom:} results of different variable stepsize runs using SamAdams. While all these trajectories remain bounded, the last one shows some visible thickening in the $x$ direction, indicative of discretization bias.}   
\end{figure}
\end{center}
We examined the stability of fixed stepsize (BAOAB) Langevin dynamics by running trajectories with a series of increasing timesteps $\Delta t= 0.04, 0.06, 0.08, 0.1$. The system was initialised with a large enough value of $\theta$ ($\theta_0=5$) and all $x_i=0$ so as to avoid any initial high gradient, and we sample at $\gamma=1$.  The graphs of the $(\theta,x_1)$-projection are shown in the upper row of Fig. \ref{fig:multiD_compare}. All simulations involved $N=10^7$ steps.  We then ran SamAdams with various $\Delta \tau$, generating the figures shown in the second row, also using $10^7$ steps for each. The SamAdams monitor function is given by  $g(x,p) \equiv \frac{1}{100}\|\nabla U(x)\|$, and the filter kernel is $\psi=\psi^{(1)}$ with $m=0.01,\ M=1.0,\ r=1$. We use $\alpha=1$ for the $\zeta$ dissipation and set $\zeta_0=0$.
We label these runs by the mean stepsize used, which is reciprocally related to the computational work to reach a given fixed time.  We observe that BAOAB only leads to accurate sampling at its lowest stepsize. Given the mild confining potential, the BAOAB excursions to large positive $x$ at stepsize greater than $0.04$ represent high energy (far from equilibrium) events; in a more complicated model those trajectories would lead to blow-up, or, perhaps more seriously, subtle degradation of cumulative averages; here the confining potential quickly returns them to the domain of interest. By contrast, SamAdams is stable and retains the trajectories within a similar size and shape region of the $(\theta,x_1)$-plane, except for a slight increase in the width of the sampled region at the largest mean stepsize ($\langle \Delta t \rangle = 0.1762$). This is noteworthy given the much larger stepsizes SamAdams is able to use.
Since the largest stable mean stepsize of SamAdams is approximately four times the largest stable stepsize for Langevin dynamics with fixed stepsize, SamAdams promises substantial computational speedups. 

In Fig. \ref{fig:stepsizes_funnel}, we show a SamAdams trajectory with the points colored by the stepsize used.  We see that the smallest stepsizes are only used at the narrowest part of the funnel neck and the stepsize quickly resets as we leave these locations.  Stepsize distributions are shown in the right panel of Fig. \ref{fig:stepsizes_funnel}, with the mean stepsizes used in the four SamAdams runs of Fig. \ref{fig:multiD_compare}.   The parameters $m$ and $M$ partly govern the shape of these distributions, which are also defined by features of the problem itself.

\begin{figure}[!htb]
\includegraphics[width=0.32\textwidth]{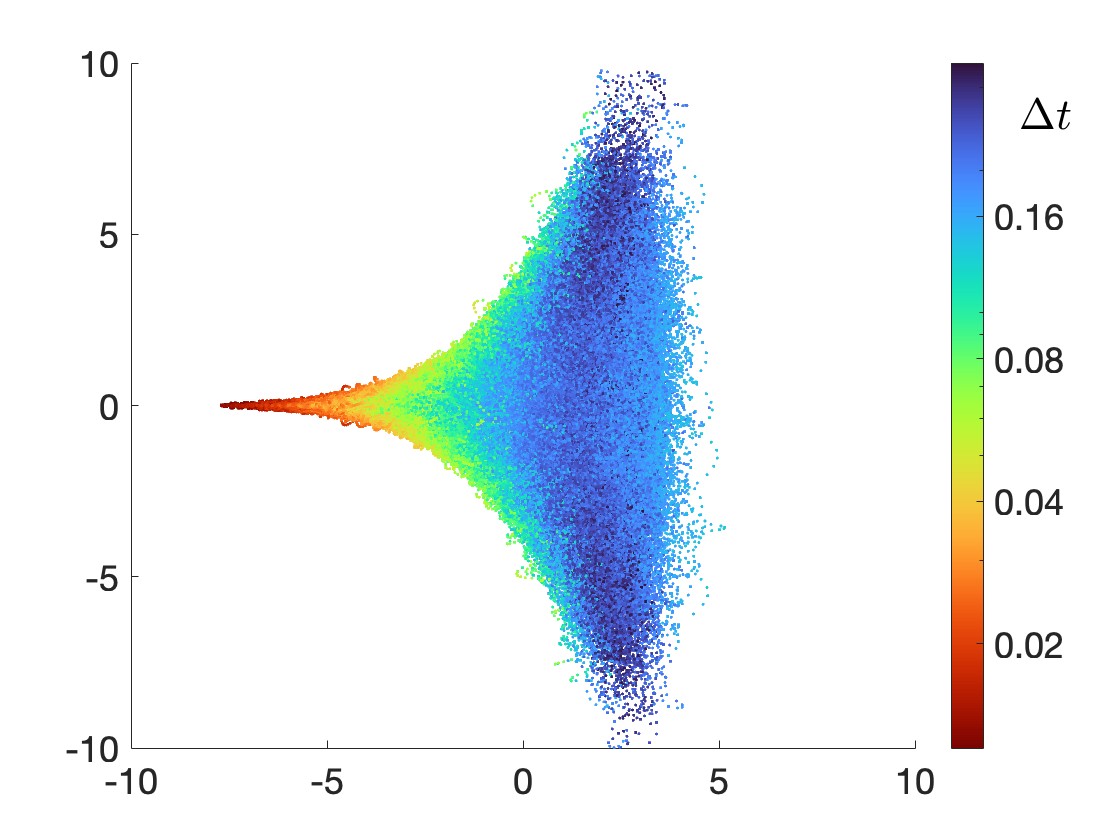}
\includegraphics[width=0.6\textwidth]{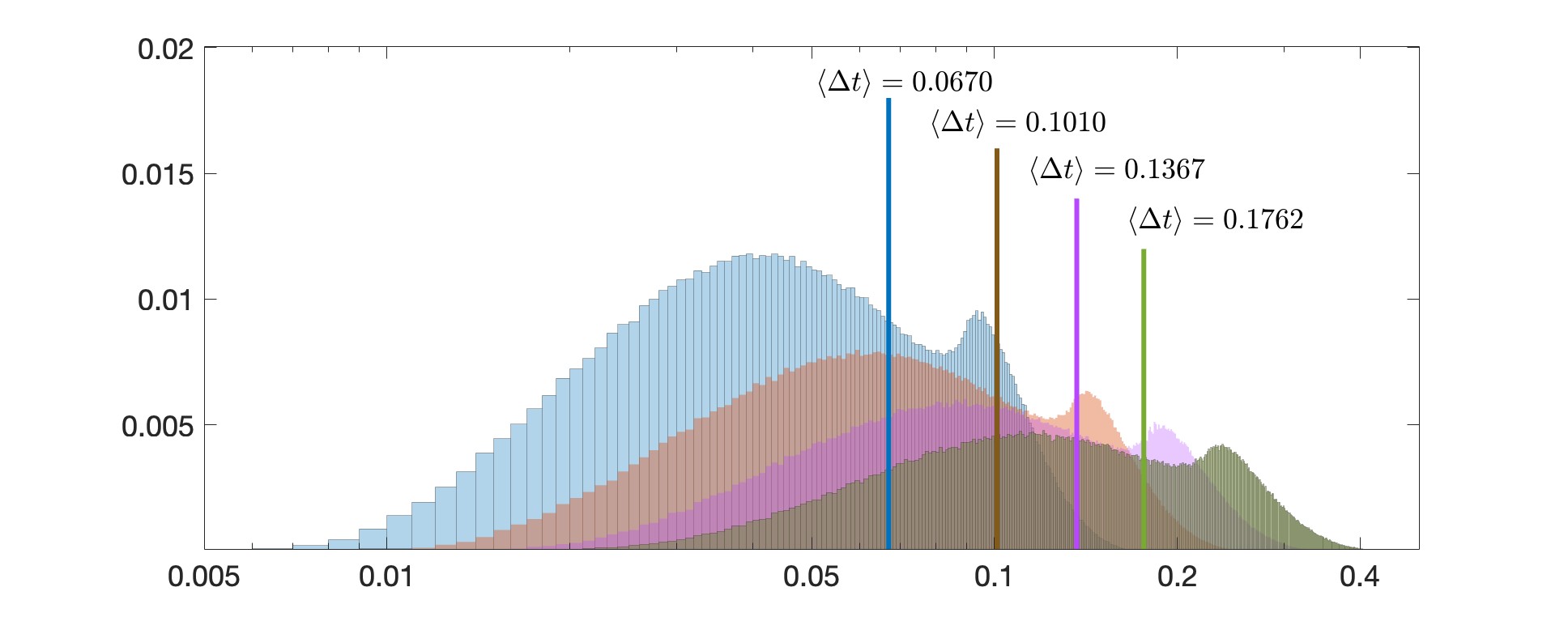}
\caption{\label{fig:stepsizes_funnel}\textbf{Left:} points along a trajectory of the 9D Neal funnel (here for $\langle \Delta t\rangle =0.1367$) are colored by stepsize used.  \textbf{Right:} the actual stepsize distributions for four SamAdams trajectories are shown; note that the stepsize scale at right is logarithmic, meaning that there is less density associated to the smaller stepsizes than is apparent from the areas of the respective bins.} 
\end{figure} 

An important question is whether the time-series corresponding to different timesteps explore the space with similar efficiency.  We settle this for the 9D funnel example in Fig. \ref{fig:funnel_autocorrelation} where we see that relative to the elapsed time, the trajectories diffuse at similar rates.\footnote{Computing autocorrelation functions and ESS requires first interpolating the non-uniformly spaced time-series data to a uniform mesh; this interpolation can be avoided when computing expectations by using the method of Section \ref{sec:computing_averages}.}    Effective sample sizes  per sample \cite{MCMC_practice} are as shown in Table \ref{tab:esss}.    These results indicate that the trajectories are similar in terms of the rates of exploration per unit time, but that the large mean stepsize SamAdams trajectories are much more efficient in terms of  the samples gained per timestep taken.

Mean log posterior values (from 10M step runs) for the different stepsizes are also given in the table and suggest that bias may be becoming more noticeable at the larger stepsize.  The result $-10.46$ for mean log posterior is accurate, verified using a small stepsize of $0.01$ and 1 billion steps.  Accuracy (and ESS) appear to fall off at the largest fixed stepsize, with $\Delta t=0.02$ being approximately optimal.  This can be compared to SamAdams with a mean stepsize of $\langle \Delta t\rangle = 0.0998$, indicating an improvement of over 400\% in sampling efficiency without any reduction in accuracy.

Note that with small stepsize $0.01$, fixed stepsize runs of 10M steps as in the table generate similar size errors (in this case due to Monte Carlo error due to the higher correlation of samples) to the largest variable stepsize integration (in that case,  due to  sampling bias), thus demonstrating a clear trade-off between sampling error and bias.
\renewcommand{\arraystretch}{1.2} 
\begin{table}
 \begin{tabular}{|l|c|c|c|c|c|c|c|}
 \hline
 & \multicolumn{3}{c|}{BAOAB} &
 \multicolumn{4}{c|}{SamAdams}\\
 \hline
 mean $\Delta t$ & 0.01  & 0.02  & 0.04  & 0.0664  & 0.0998  & 0.1336& 0.1676 \\
 \hline
 ESS/sample ($\times 10^{-4}$) & 6.6 & 13.2 & 22.4& 41.2 & 62.9 & 84.1 & 108.9\\
 \hline
mean log posterior ($10^7$ samples) & -10.55 & -10.47 & -10.42 & -10.46  & -10.46 & -10.50 & -10.53 \\ 
 \hline
 \end{tabular} 
 \caption{\label{tab:esss} Table showing the effective sample size per sample and the expected log posterior, for different variable stepsizes.}
\end{table}

\begin{figure}[!htb]
\includegraphics[width=0.49\textwidth]{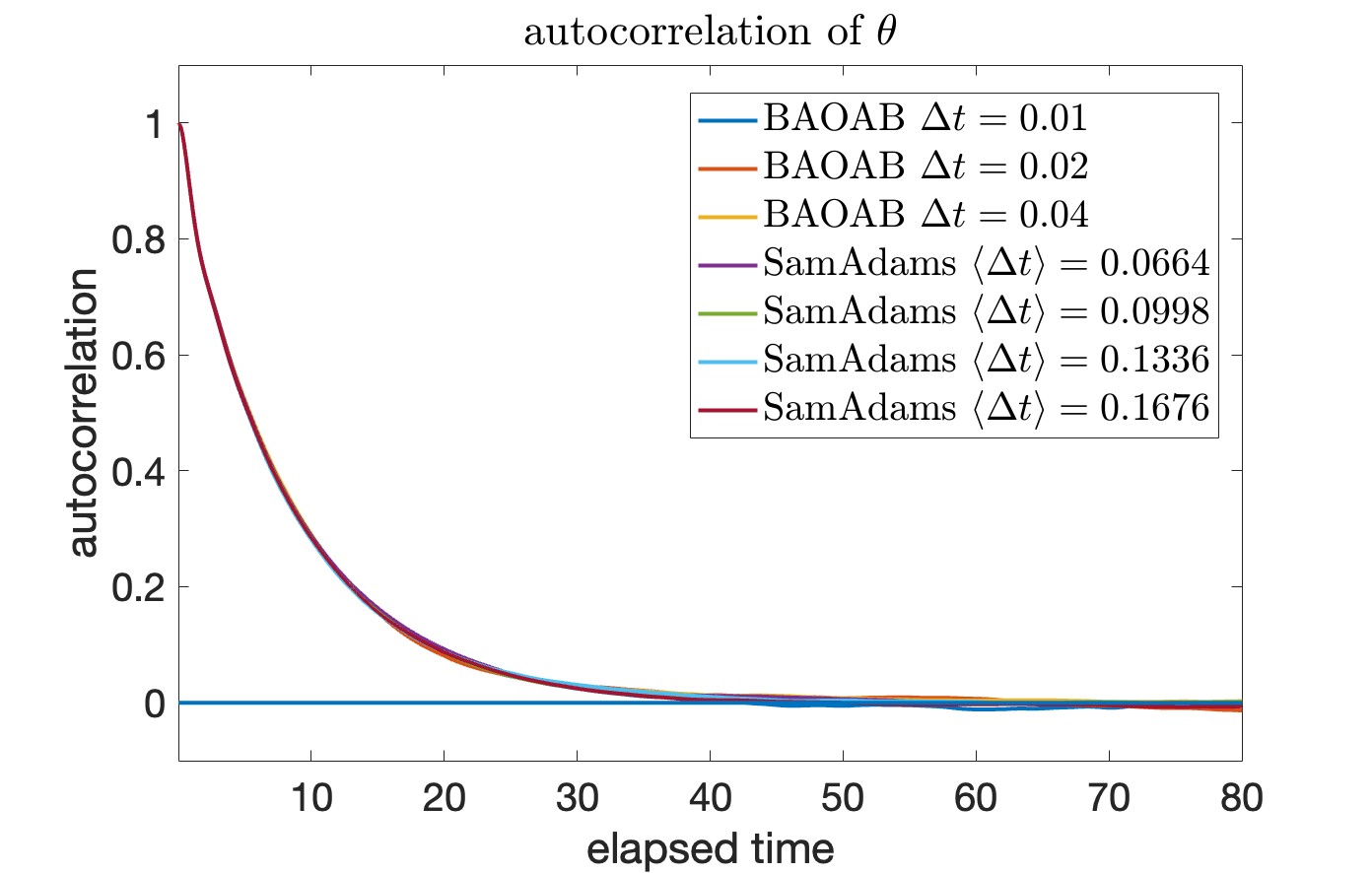}
\includegraphics[width=0.49\textwidth]{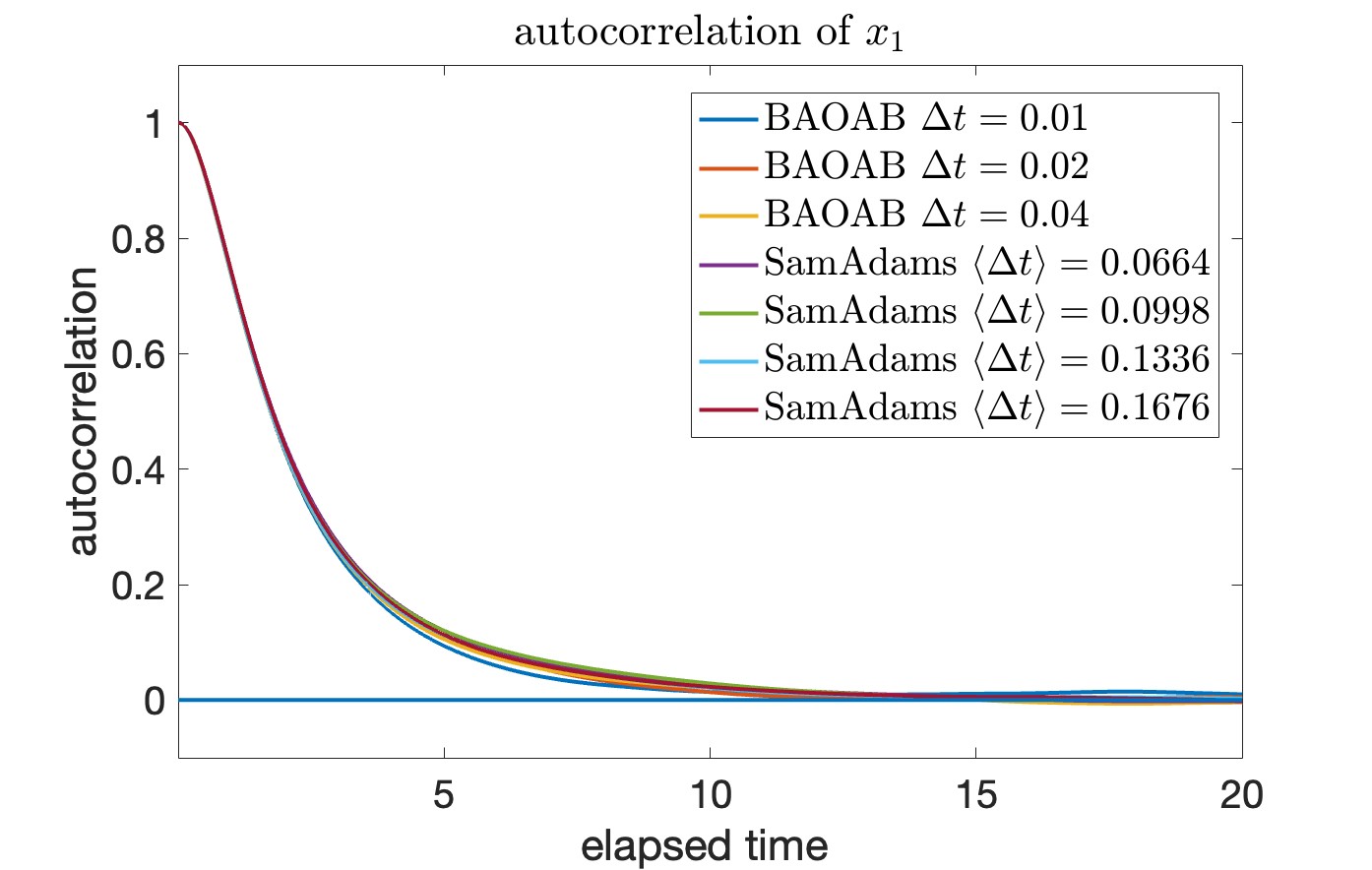}
\caption{\label{fig:funnel_autocorrelation} Autocorrelation functions with respect to time generated by different mean stepsizes.  \textbf{Left:} autocorrelation of $\theta$; \textbf{Right:} autocorrelation of $x_1$.  For each variable, the decay rates corresponding to different schemes are similar, regardless of the stepsize (and are similar for both the fixed and variable stepsize runs).} 
\end{figure}

\subsubsection{MNIST Image Classification on an MLP}
We next apply our new scheme to the MNIST digit classification dataset \cite{mnist}, a standard benchmark of computer vision applications consisting of 60,000 training and 10,000 test examples of handwritten digits, stored as $28 \times 28$ pixel grayscale images. The goal in this study is to illustrate the potential benefits of using adaptive-stepsize methods when exploring neural network loss landscapes compared to constant-stepsize methods.  As is customary, we normalize the data to mean 0.5 and standard deviation 0.5. We use a multi-layer perceptron (MLP) with two hidden layers with 800 and 300 nodes, respectively. While conventional neural network training is typically done with small batches, we use large batch sizes ($B=10,000$) in the experiments reported here to prevent additional gradient noise from obscuring the effect of stepsize adaptation. (For study of the effect of varying batch size in logistic regression, refer to Appendix \ref{sup:sec: LogisticRegression}.) 

 For the experiments in this and the next subsection, we use transform kernel $\psi^{(1)}$ with $m=0.1$, $\ M=10$, $\ r=0.25$.  The monitor function is taken as $g(x,p)=N_{D}^{-1}\|\nabla U(x) \|^2$ with $N_{D}=60,000$ the number of examples in the training dataset. We use $\Delta \tau=0.0002$, $\alpha=50$, and sample at $T=\gamma=1$. All trajectories are initialized through PyTorch's default (the momenta being drawn from their invariant Gaussian measure), with $\zeta_0=g(0,0)$.  
 
 Fig. \ref{fig:MNIST_spike_reduction} shows typical trajectories for BAOAB and SamAdams, where the former was run at the mean adaptive stepsize used by the latter.  We observe a large loss spike in the constant-stepsize scheme which is not present for the adaptive scheme.  Before and after the spike the dynamics seem to align to high degree, which implies that it is indeed the force-sensitive stepsize adaptation of SamAdams that prevents the spike from forming.
\begin{figure}[H]
\begin{centering}
\includegraphics[width=0.8\textwidth]{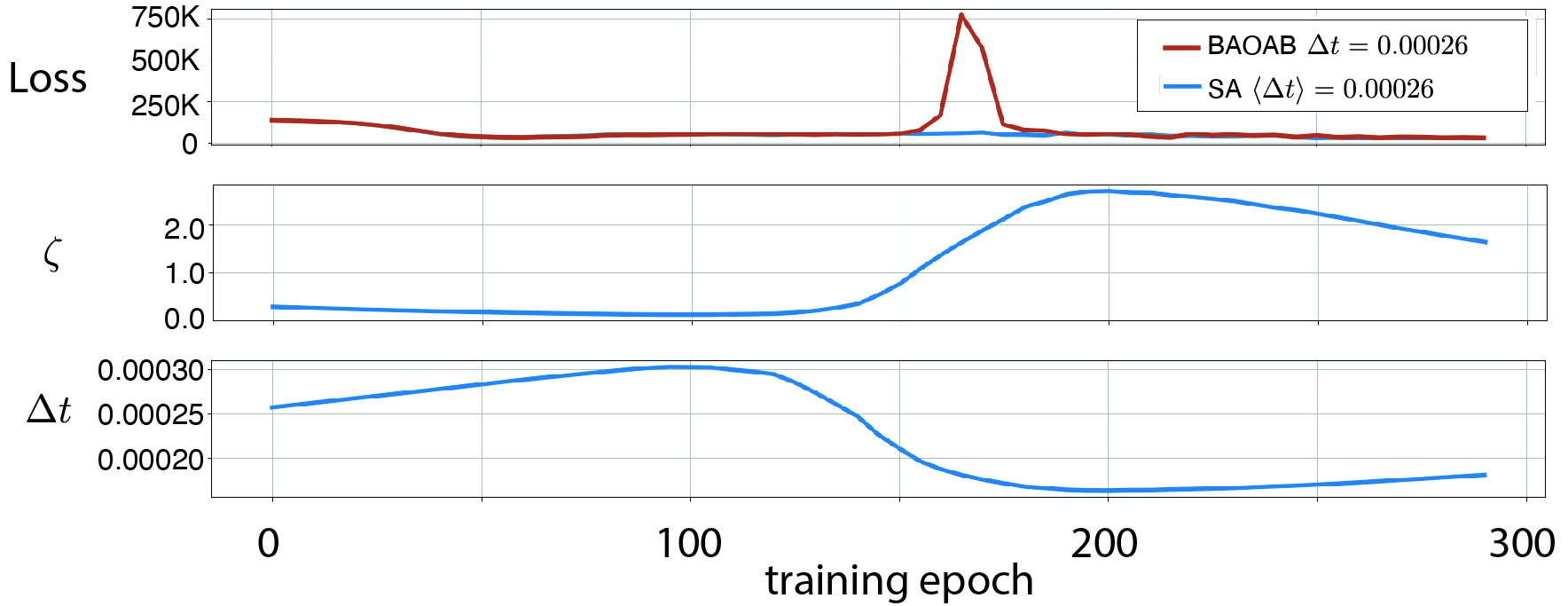} 
\end{centering}
\caption{\label{fig:MNIST_spike_reduction} Single trajectory results for an MLP on MNIST. \textbf{Top:} Training loss for both BAOAB and SamAdams. The constant stepsize scheme BAOAB becomes unstable during training, observable as a large spike. \textbf{Middle and bottom:} Corresponding $\zeta$-dynamics and resulting adaptive stepsize $\Delta t$ of SamAdams. }
\end{figure}

This is also clearly evident from the sudden increase of $\zeta$ and corresponding decrease of $\Delta t$ at that point. Observe also that we pick a comparatively large value of $\alpha$ on this example, which before and after the spike leads to a continuous increase of the stepsize because the forces in these regions are sufficiently small. 

While even the constant-stepsize scheme is able to recover from the instability, the appearance of spikes like this can lead to decreased performance in actual posterior sampling experiments, in which one averages over many different trajectories. To demonstrate this point, we draw 100 independent trajectories for both SamAdams and BAOAB, and average the resulting final accuracies. For SamAdams we use the same settings as in Fig. \ref{fig:MNIST_spike_reduction}. For BAOAB, we perform two experiments. In the first round, we set its learning rate to the mean of the stepsizes adopted by SamAdams, pooled from all 100 trajectories. In the second round, we take the mean of the pooled stepsizes again, but only consider the first 10\% of $\Delta t$ samples adopted by SamAdams on each trajectory. The latter setup is based on the idea that, for a fair comparison, the choice of stepsize for BAOAB should only be allowed to rely on information of the early stage of the SamAdams run. In contrast, using SamAdams' mean adaptive stepsize uses information of all the stepsizes used by SamAdams, i.e., knowledge of the force evolution across the whole trajectory. This information would usually not be available to someone wanting to set the stepsize of a constant-stepsize scheme, so using the mean adaptive stepsize for the BAOAB runs gives the benefit of the doubt to the fixed-stepsize method. 
At the same time, using the mean adaptive stepsize for BAOAB is the choice that leads to similar computational cost when run for the same number of iterations, which can also be seen as the basis for a 'fair' comparison. 

The final mean train and test accuracies together with 95\%-confidence intervals are given by Table \ref{tab:MNIST_table}.\footnote{Note: these runs were performed without time-averaging.}  We observe that SamAdams significantly outperforms both BAOAB setups. 
\renewcommand{\arraystretch}{1.2} 
\begin{table}[H]
    \centering
    \resizebox{1\textwidth}{!}{
    \begin{tabular}{|c|c|c|c|}
        \hline
                            & SamAdams & BAOAB (mean of all $\Delta t$) &   BAOAB (mean of first 10\% of $\Delta t$)\\ \hline
        Train Accuracy (\%) & $\textbf{94.0}\pm \textbf{0.1}$     &  $92.3\pm0.7$                           &  $92.9\pm0.5$   \\ \hline
        Test Accuracy  (\%) & $\textbf{93.6}\pm \textbf{0.1}$     &  $91.0\pm0.7$                           &  $91.5\pm0.6$    \\ \hline
    \end{tabular}
    }
    \caption{Final accuracies averaged over 100 independent trajectories. Mean values and 95\%-confidence intervals. For the hyperparameters used and a description of the difference of the two BAOAB runs, see main text.}
    \label{tab:MNIST_table}
\end{table}

\subsubsection{MNIST Image Classification on a CNN}\label{sec: MNIST_CNN}
As image classifiers usually adopt convolutional neural networks (CNN) rather than fully-connected ones, we repeat the experiment from the previous section on a simple CNN using three convolutional layers. The network architecture is given in Appendix \ref{sup:sec:CNN_architecture}. Unless explicitly restated, the hyperparameters are the same as in the previous section.

Fig. \ref{fig:MNIST_CNN_loss} (left) motivates the use of variable stepsizes.  It shows the results of a single SamAdams trajectory compared to three different BAOAB trajectories, each one corresponding to a different stepsize: (i) the smallest stepsize adopted by SamAdams, (ii) the mean stepsize, and (iii) the largest stepsize. This way of choosing the stepsizes of BAOAB to compare it to SamAdams is different from that used in the last section, where we took the point of view that a fair comparison between adaptive-stepsize and constant-stepsize schemes is made by considering the first 10\% of the obtained $\Delta t$ as well as the mean of $\Delta t$. 
\begin{figure}[htbp]
    \centering
    \includegraphics[width=\textwidth]{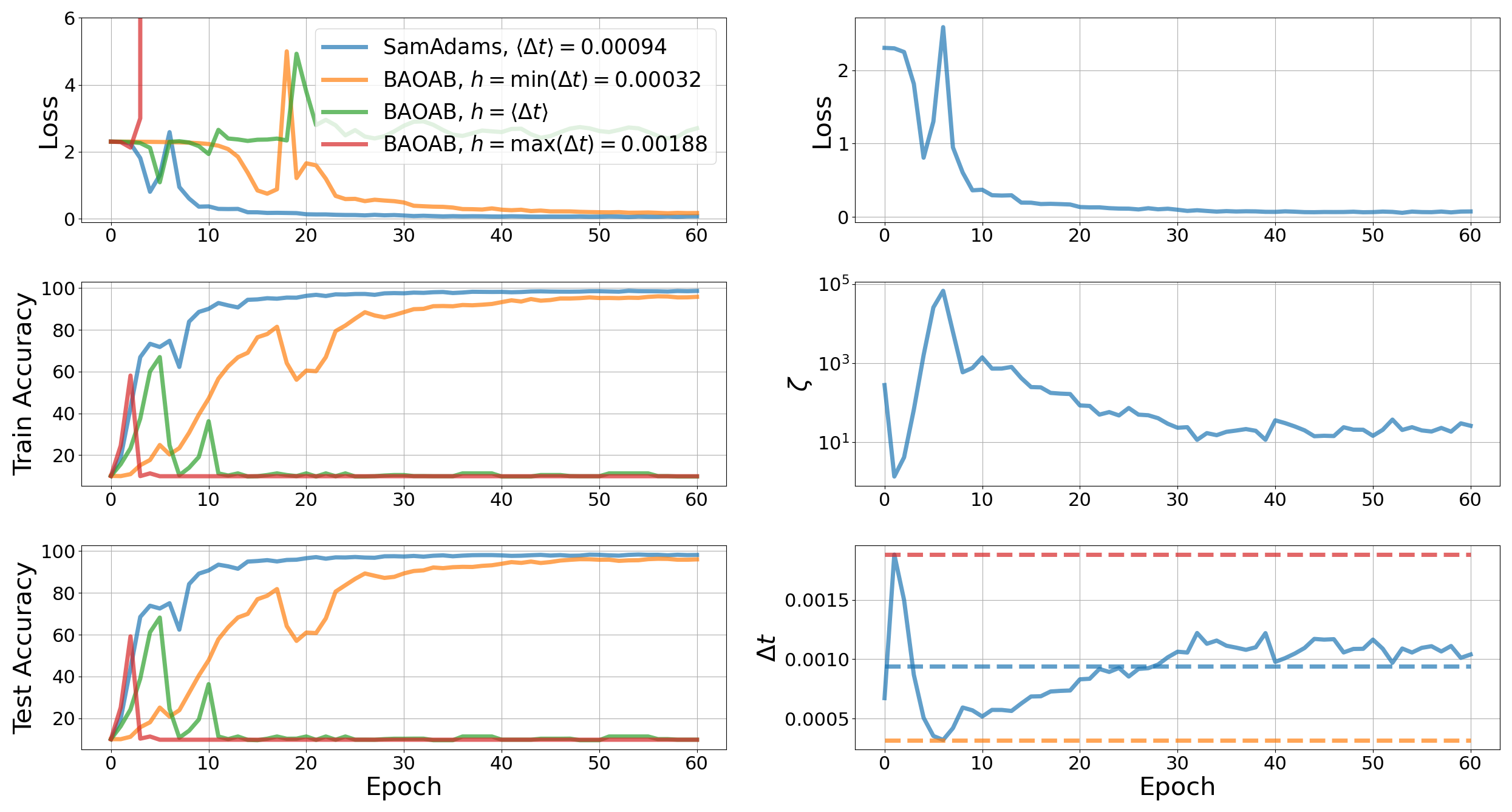}
    \caption{Training of a CNN on MNIST. \textbf{Left:} Train loss, train- and test accuracies. BAOAB was run at three different stepsizes: the smallest, largest, and mean stepsize used by SamAdams. \textbf{Right:} SamAdams results for loss (same as on the left), $\zeta$, and $\Delta t$. The dashed lines correspond to the stepsizes used by BAOAB. Hyperparameters: $\Delta \tau=0.002$, $\alpha=500$.}
    \label{fig:MNIST_CNN_loss}
\end{figure}
The two larger BAOAB stepsizes fail completely, while the smallest stepsize leads to reasonable results, but with a convergence speed (as measured in number of epochs, i.e., compute time)  substantially reduced in comparison to SamAdams, implying enormous computational speed-ups when using the latter. Fig. \ref{fig:MNIST_CNN_loss} (right) shows the evolution of $\zeta$ and $\Delta t$ of the SamAdams run compared to the loss. One observes that the algorithm reacts to the instabilities visible in the loss during the early training phase by rapidly reducing the stepsize $\Delta t$, well below the level of the learning rates used by the two unstable BAOAB runs. Only the BAOAB run using the smallest $\Delta t$ adopted by SamAdams remained stable, implying that small stepsizes are necessary to make it through the early stage of training. However, from epoch 5 onward SamAdams slowly increases its stepsize again, until it stabilizes at $\sim 0.0011$, more than three times the size of the stepsize used by the stable BAOAB run. This explains SamAdams faster convergence in loss and accuracy. 
Looking at the loss curves and the obtained $\Delta t$ by SamAdams, it appears that the trajectories start on a plateau (allowing for a rapid increase in $\Delta t$ at the very beginning), then descend through an irregular landscape (leading to rapid damping in $\Delta t$ and breakdown of two of the three BAOAB runs), before reaching a widened basin (allowing for moderate increase of $\Delta t$ again). 
This challenges the conventional wisdom in deep learning to use large learning rates at the start of training and to decrease of learning rates during later phases \cite{decreasing_learning_rates1,dlr2,dlr3}. Potentially, one could use adaptive stepsize schemes in place of conventional learning rate schedulers; we leave the detailed study of this intriguing possibility for future work.

We now examine accuracies obtained by the two schemes via posterior sampling and averaging. As in the previous section, we run 100 independent trajectories, initialized as before. The hyperparameters are the same as in Fig. \ref{fig:MNIST_CNN_loss}. Each trajectory is run for 60 epochs. Since the two larger fixed stepsize choices in Fig. \ref{fig:MNIST_CNN_loss} led to failure, we run BAOAB at three smaller stepsizes: One given by the mean of the smallest 10\% of the SamAdams stepsizes pooled from all 100 runs, $h=0.00046$, and two even smaller ones given by $h=0.0003$ and $h=0.0002$\footnote{We use $h$ here to denote BAOAB's stepsize to avoid confusion with the variable stepsize of SamAdams.}.  Fig. \ref{fig:MNIST_CNN_Ntraj_vs_t} shows the individual trajectories (without average), where we plot loss and accuracies against dynamics time $t$ and accuracy also against epoch count. The plots against time $t$ can be used to examine whether the spikes and instabilities in the trajectories for different stepsizes (see Fig. \ref{fig:MNIST_CNN_loss} again) occur at the same time, which would imply that the trajectories follow qualitatively similar paths, independently of the stepsize used. From this it would follow that the ability to use overall larger stepsizes in a stable manner would lead to efficiency gains. 
\begin{figure}[htbp]
    \centering
    \includegraphics[width=\textwidth]{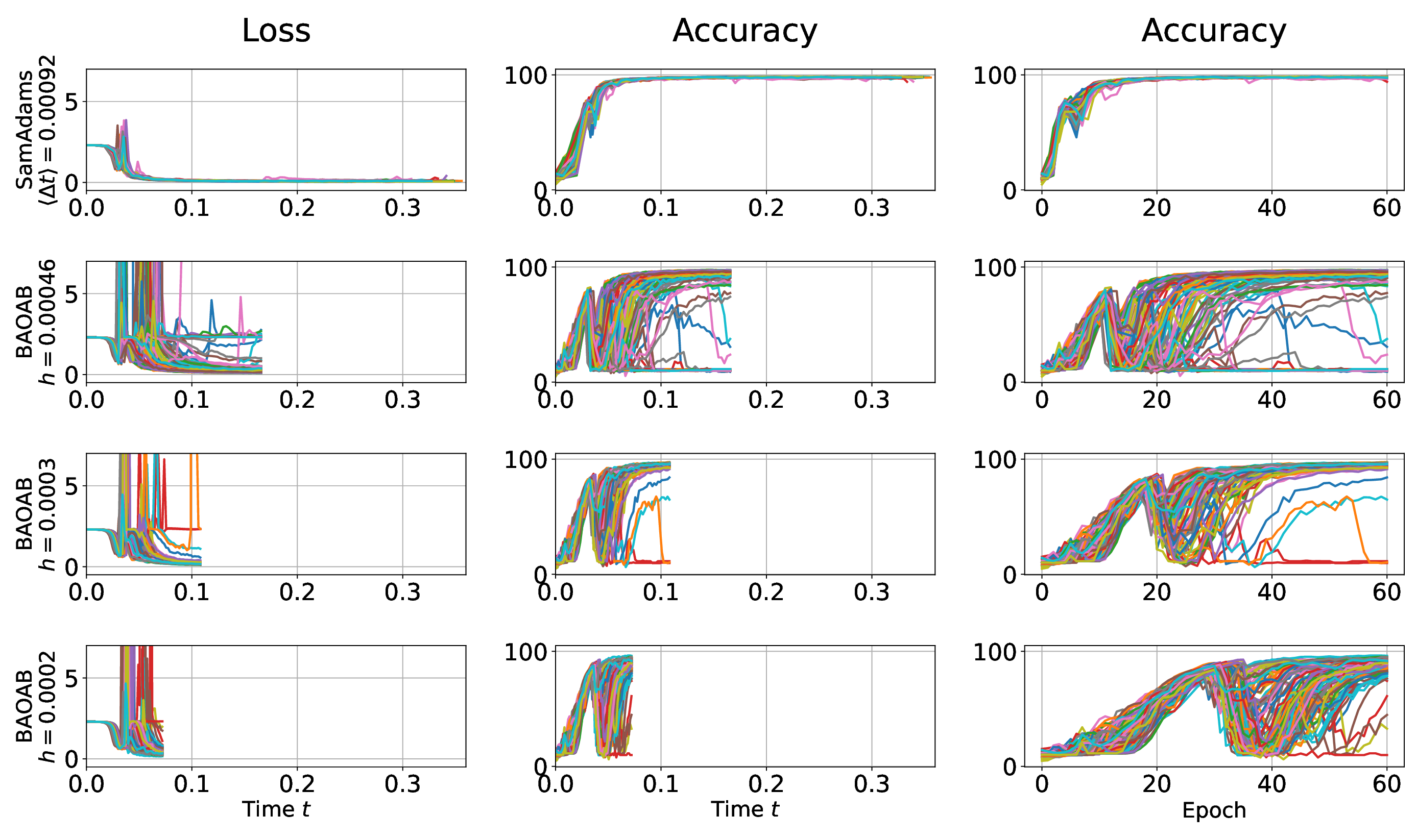}
    \caption{Training of a CNN on MNIST. The top row shows SamAdams results. The bottom three rows show BAOAB results for different stepsizes. Each figure shows 100 trajectories. \textbf{Left:} Loss vs. dynamics time $t$. \textbf{Middle:} (Test) accuracy vs. dynamics time $t$. \textbf{Right:} (Test) accuracy vs. epochs.}
    \label{fig:MNIST_CNN_Ntraj_vs_t}
\end{figure}
The first two columns in Fig. \ref{fig:MNIST_CNN_Ntraj_vs_t} show that all trajectories of both samplers and all examined stepsizes experience instabilities during the early stages of training, at $t \approx 0.03$.  This shows that the trajectories indeed follow similar paths\footnote{To be more precise, they encounter similar landscape topologies at similar times. They don't actually follow similar paths as they are initialized randomly and independently.}, roughly independently of the stepsize. Naturally, this independency will only hold for a certain stepsize range. The BAOAB runs in particular become unstable; while many of them are able to recover from this and begin to converge to high accuracies, some trajectories fail completely. The fraction of trajectories that are able to recover and their convergence speed both decrease with increasing stepsize. In contrast, SamAdams is able to  mitigate these issues as evident by the much smaller spikes in the curves, and all trajectories converge to high accuracies, barely being affected by the instabilities. This is particularly impressive given its mean stepsize being twice as large as the largest tested BAOAB stepsize. In order for BAOAB to traverse the unstable early-training region as unobstructed as SamAdams, it would require a smaller stepsize than the ones used in the figure, which would come at the cost of needing to execute many more iterations/epochs. The plots against epoch demonstrate the efficiency gains from using SamAdams, which provides high accuracy as early as epoch 20. 

We now compute posterior averages using the samples obtained from the trajectories. Fig. \ref{fig:MNIST_CNN_Nraj_accus} shows the accuracies of the different samplers and stepsizes at certain training epochs. At each such epoch, a time average is performed over the previous 10 epochs (where each epoch yields one sample), followed by the average over the 100 trajectories. The shaded areas denote 95\%-confidence intervals (based on t-statistics). 
\begin{figure}[htbp]
    \centering
    \includegraphics[width=\textwidth]{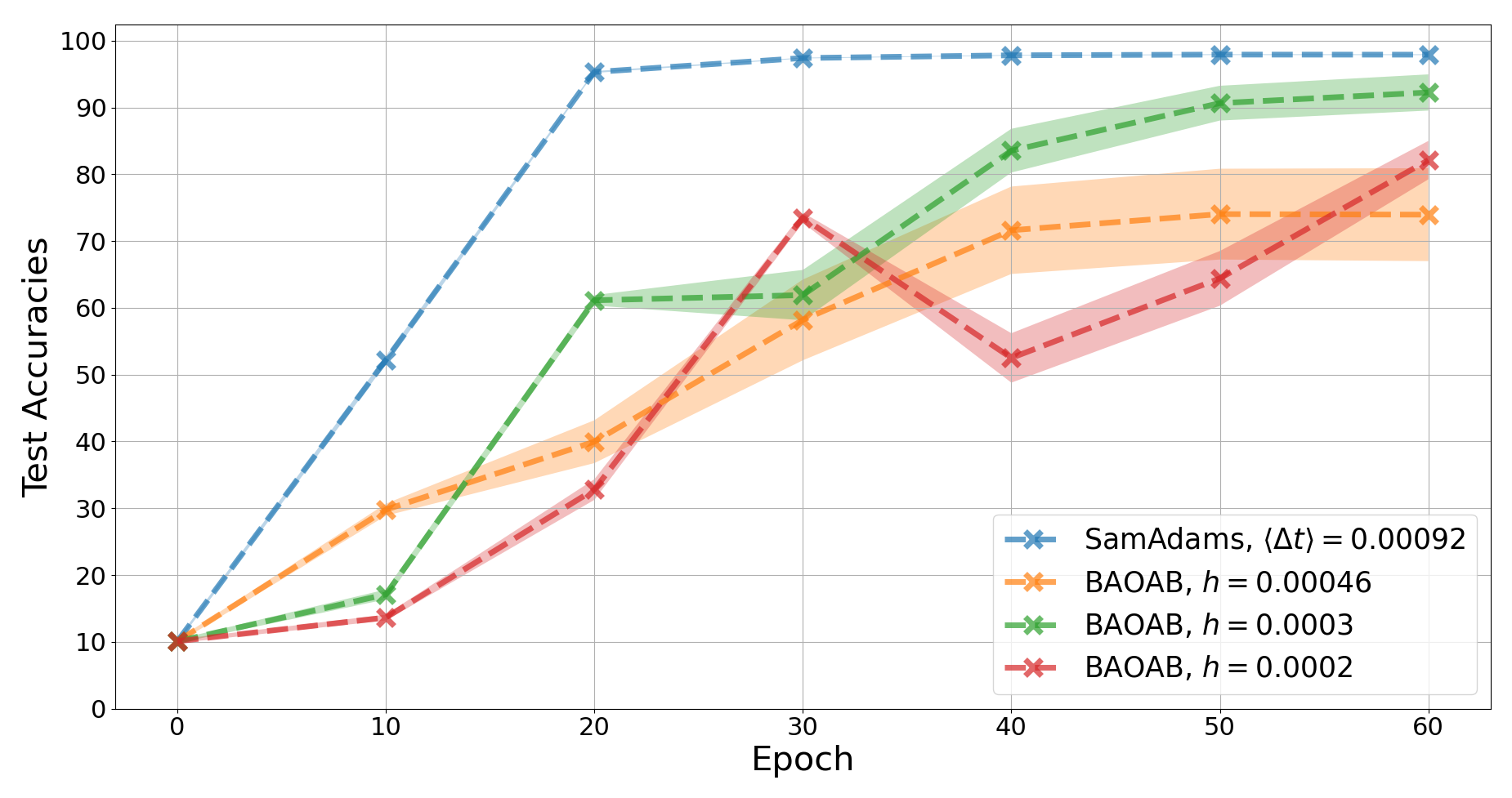}
    \caption{Test accuracies of a CNN on MNIST obtained through Bayesian sampling. Every 10 epochs, the samples of the last 10 epochs were averaged (time average) and the results were averaged across 100 independent trajectories for each of the curves. The shaded areas denote 95\%-confidence intervals.}
    \label{fig:MNIST_CNN_Nraj_accus}
\end{figure}
We see that SamAdams converges to high accuracies much faster than all BAOAB runs, none of which reach comparable accuracies even at the end of training. Furthermore, due to several BAOAB trajectories failing (as shown in Fig. \ref{fig:MNIST_CNN_Ntraj_vs_t}) the variance of the BAOAB results are much larger than those of SamAdams; the extremely narrow confidence interval, the shaded area behind the blue curve, is almost too narrow to see. 
The final accuracies and confidences are shown by Table \ref{tab:MNIST_CNN_table}.
\renewcommand{\arraystretch}{1.2} 
\begin{table}[htbp]
    \centering
    \resizebox{0.5\textwidth}{!}{
    \begin{tabular}{|@{\hspace{2pt}}l|c@{\hspace{2pt}}|}
        \hline
        Sampler, Stepsize                               & Test Accuracy  (\%) \\
        \hline
        SamAdams, $\langle\Delta t \rangle=0.00092$     & $\textbf{97.90}\pm\textbf{0.06}$ \\
        BAOAB, $h=0.00046$                              & $73.95\pm6.96$ \\
        BAOAB, $h=0.0003$                               & $92.24\pm2.71$ \\
        BAOAB, $h=0.0002$                               & $82.11\pm2.87$ \\
        \hline
    \end{tabular}
    }
    \caption{Mean accuracies and 95\%-confidence intervals obtained by sampling the posterior of a CNN on the MNIST dataset. Values correspond to the epoch 60 results in Fig. \ref{fig:MNIST_CNN_Nraj_accus}. }
    \label{tab:MNIST_CNN_table}
\end{table}
The particularly large variance for BAOAB at $h=0.00046$ comes from the large number of trajectories that fail as a consequence of the early-stage instabilities, see again Fig. \ref{fig:MNIST_CNN_Ntraj_vs_t} and also Fig. \ref{fig:MNIST_CNN_histos} which shows the histograms of obtained accuracies after 60 epochs for that BAOAB setting compared to SamAdams.

These experiments demonstrate that the enhanced stability and computational efficiency of SamAdams observed on the artificial planar problems of the previous sections extend to complex models on real-world data, and even establish that the mechanism is similar.

We note that the benefits of the adaptive stepsizes when sampling the posterior of the CNN of this section mainly stem from the early stages of training, during which the sampler equilibrates from the random initial conditions to lower regions of the loss landscape. There are alternative approaches to BNN sampling. For example, one can perform BNN experiments by first running an optimizer to find a local mode, then running an MCMC scheme starting from within that mode to sample the mode vicinity. This can either be done by adding a confining potential as in \cite{SymmetricMiniBatchSplitting} or by using low temperatures or stepsizes \cite{cold_posteriors,zhang2020cyclicalstochasticgradientmcmc}. Our adaptive stepsize framework can easily be used in conjunction with the ideas in those works. In our experiments, from looking at the comparably small fluctuation of $\zeta$ and $\Delta t$ in the later stages of training (see Fig. \ref{fig:MNIST_CNN_loss}), the necessity for adaptive stepsizes in the mode vicinity is much smaller than during the early stages of equilibration. It is easily conceivable, however, that more complex models and datasets lead to basin topologies which would benefit from adaptive stepsizes as well. We leave the exploration of larger models and different approaches to BNN sampling with adaptive stepsizes for future work.

\begin{figure}[htbp]
    \centering
    \includegraphics[width=\textwidth]{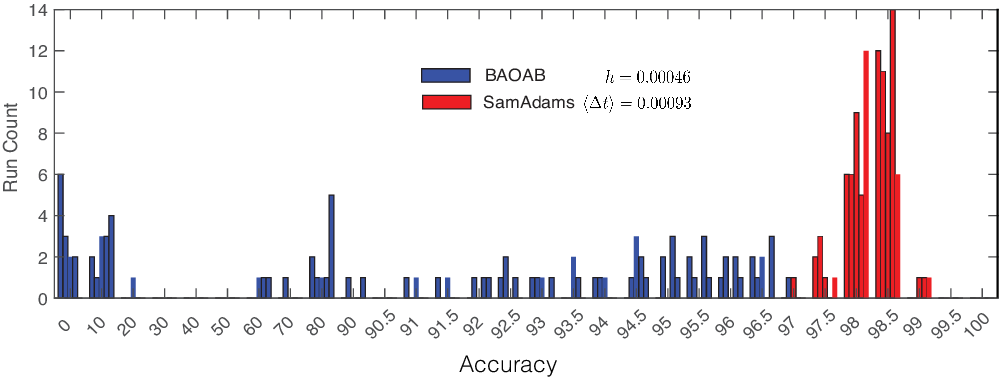}
    \caption{Histograms of time-averaged train accuracies for 100 independent trajectories of SamAdams (red) and BAOAB (blue) corresponding to the results in Table \ref{tab:MNIST_CNN_table}. BAOAB was run at $h=0.00046$.}
    \label{fig:MNIST_CNN_histos}
\end{figure}

\section{Conclusion}
We have presented a flexible integration framework for adaptive-stepsize Langevin sampling algorithms based on an auxiliary monitor variable. In particular, we have shown how the SamAdams algorithm, inspired by the Adam optimizer, shows superior behavior in terms of both stability and convergence speed compared to fixed stepsize alternatives. While we have provided various numerical experiments, we believe that there are many more settings in which the force-sensitive stepsize adaptation can greatly enhance sampling performance. 

The method can be adapted to large-scale Bayesian machine learning, and is likely to show advantages in  relation to models that are typically trained using Adam and its derivatives in the deep learning context \cite{training_survey},  whether for natural language processing \cite{openai}, diffusion models \cite{training_diffusion_models} or some other type of machine learning application.  As sampling and BNN frameworks are introduced to address a wider range of AI challenges, we expect algorithms such as the one described here will be in high demand.  Manual adjustment of the stepsize (learning rate scheduling) is often used in machine learning applications \cite{decreasing_learning_rates1,dlr2,dlr3}; the flexible nature of our framework suggests the possibility of  an automated approach which can simplify training workflows if the right choice of monitor function and other aspects can be identified (which may not always be as simple as the norm of the gradient). The strong relation between the stepsize used in training, the batch size (amount of gradient noise), and the generalization error makes the new method interesting for settings with batch sizes that vary in time, such as active learning \cite{active_learning}. Although it is not the main target of this work, we believe that SamAdams (or a similar method based on the adaptation framework presented in this article) might be of interest to computational scientists simulating physical models in which forces may increase preciptously during integration,  often requiring the use of small stepsizes compared to the long simulation times that have to be realized.

Finally, we note that the SamAdams framework can easily be combined with other sampling procedures based on SDE discretization, since, as we have written it in Algorithm \ref{alg:AdamSampler}, the timestep adaptation is implemented separately from the propagation of state variables. It could also be combined with debiasing techniques (see \cite{UBUBU}) to produce unbiased estimates from the target measure whilst still avoiding Metropolis-Hastings accept-reject steps.
\section*{Acknowledgements}
This research was supported by the MAC-MIGS Centre for Doctoral Training (grant EPSRC EP/S023291/1).
The authors wish to acknowledge Katerina Karoni for helpful discussions at an early stage of the project and Daniel Paulin for advice on the neural network studies.  We  thank Gilles Vilmart for generous assistance in establishing the weak convergence of our method and Michael Tretyakov for helpful input on numerical stability of SDE discretizations.

\bibliographystyle{plain}
\bibliography{Arxiv.bbl}
\appendix
\section{General Form of the SamAdams Algorithm}\label{sup:sec:general_SamAdams}
It is possible to SamAdams-ize any integrator by wrapping it in Z-steps.   For a standard form SDE
\begin{equation} \label{eq:gen_sde}
{\rm d} x_t = a(x_t, t) {\rm d} t + \sigma(x_t,t) {\rm d}W_t,
\end{equation}
we can introduce auxiliary variable $\zeta$ controlled by
\[
\frac{{\rm d} {\zeta}_{\tau}}{{\rm d} \tau} = f(x_{\tau}, \zeta_{\tau}),
\]
with, for example, $ f(x_{\tau},\zeta_{\tau}) = -\alpha \zeta_{\tau} + g(x_{\tau})$, then introduce a Sundman transformation 
\[
{\rm d} t = \psi(\zeta) {\rm d}\tau,
\]
with $\psi$ a suitable uniformly positive, bounded, smooth function.  Finally the rescaled equations corresponding to  (\ref{eq:gen_sde}) become
\begin{eqnarray} \label{eq:gen_sde_rescaled-1}
{\rm d} x_{\tau} & = & \psi(\zeta_{\tau}) a(x_{\tau}, t(\tau)) {\rm d}{\tau}  + \sigma(x_{\tau},t(\tau))\sqrt{\psi(\zeta_{\tau})} {\rm d}W_{\tau},\\
{\rm d} \zeta_{\tau} & = & f(x_{\tau}, \zeta(\tau)){\rm d} \tau,\\
{\rm d} t & = & \psi(\zeta_{\tau}) {\rm d}\tau.
\label{eq:gen_sde_rescaled-3}
\end{eqnarray}

In case the simple relaxation equation $f(x,p,\zeta)=-\alpha \zeta + g(x)$ is adopted, one could adopt any fixed stepsize integrator $\hat{\Phi}_{\Delta t}$, as the basic method and turn it into a variable stepsize scheme which might be denoted Z$\hat{\Phi}$Z outputting states $\{x_n\}$ and weights $\{\mu_n\}$ computed via the following step sequence:
\begin{eqnarray} \label{eq:ZPhiZ_1}
\zeta_{n+1/2} & = & \rho^{1/2} \zeta_n + \alpha^{-1}(1-\rho^{1/2})g(x_n),\\
\Delta t_{n+1} & = & \psi(\zeta_{n+1/2}) \Delta \tau,\\
x_{n+1} & = & \hat{\Phi}_{\Delta t_{n+1}}(x_n),\\
\zeta_{n+1} & = & \rho^{1/2} \zeta_{n+1/2} + \alpha^{-1}(1-\rho^{1/2})g(x_{n+1}),\\
\mu_{n+1} & = & \psi(\zeta_{n+1}) \label{eq:ZPhiZ_5}, 
\end{eqnarray}
with $\rho = \exp(-\alpha \Delta \tau)$ (compare Algorithm \ref{alg:AdamSampler}). 
In case $g$ depends on a force, that force calculation performed at the end of a step can be reused in the following step during the initial $Z$-half-step and in the state propagation.  Like Alg. \ref{alg:AdamSampler}, this method uses the two half Z-steps to produce more accurate weights at the step endpoints.
\section{Adam Dynamics}\label{sup:sec:adam_derivation}
In \cite{da2020general} and \cite{KaterinaThesis} it was demonstrated that the
Adam optimizer can be interpreted as the Euler discretization of a
certain system of ODEs, given by
\begin{align} 
\diff x_{t} &= \frac{p_{t}}{\sqrt{\zeta_{t} + \epsilon}} \diff t          \label{eq:x_dot_adam}, \\ 
\diff p_{t} &= - \nabla U(x_{t})\diff t - \gamma p_{t} \diff t                   \label{eq:p_dot_adam}, \\ 
\diff \zeta_{t} &= [\nabla U(x_{t})]^2\diff  t - \alpha \zeta_{t} \diff t .         \label{eq:zeta_dot_adam} 
\end{align} 
Here we have $x_{t},p_{t},\zeta_{t}\in\mathbb{R}^{d}$,
$\alpha>0$, $\gamma>0$, and the algebraic operations are to be understood
elementwise. Applying an Euler discretization with step size ${\Delta t}$
leads to 
\begin{align}   
p_{n+1}       & =  (1- \gamma {\Delta t}) p_n- {\Delta t} \nabla U(x_n),  \\ 
\zeta_{n+1}   & =  (1- \alpha {\Delta t}) \zeta_n + {\Delta t} [\nabla U(x_n)]^2, \\
x_{n+1}       & =   x_n + {\Delta t} \bigg(\frac{p_{n+1}}{\sqrt{\zeta_{n+1} + \epsilon}}\bigg) .
\end{align}
Setting $\beta_{1}:=1-\gamma{\Delta t}\Rightarrow$ ${\Delta t}=(1-\beta_{1})/\gamma$
and $\beta_{2}:=1-\alpha{\Delta t}\Rightarrow{\Delta t}=(1-\beta_{2})/\alpha$,
this becomes 
\begin{align}
p_{n+1}     & = \beta_1 p_n - \frac{1-\beta_1}{\gamma} \nabla U(x_n),  \\ 
\zeta_{n+1} & = \beta_2 \zeta_n + \frac{1-\beta_2}{\alpha}  [\nabla U(x_n)]^2, \\
x_{n+1}     & = x_n + {\Delta t} \bigg(\frac{p_{n+1}}{\sqrt{\zeta_{n+1} + \epsilon}}\bigg). 
\end{align} 
Multiplying the $p$-equation by $\gamma$ and the $\zeta$-equation
by $\alpha$, setting $\tilde{p}:=-\gamma p$ and $\tilde{\zeta}:=\alpha\zeta$,
we obtain
\begin{align} 
\tilde{p}_{n+1}     & = \beta_1 \tilde{p}_n + (1-\beta_1) \nabla U(x_n) \label{sup:eq:p_n+1_adam},  \\ 
\tilde{\zeta}_{n+1} & = \beta_2 \tilde{\zeta}_n + (1-\beta_2) [\nabla U(x_n)]^2 \label{sup:eq:zeta_n+1_adam}, \\ 
x_{n+1}             & = x_n - \widetilde{{\Delta t}} \bigg(\frac{\tilde{p}_{n+1}}{\sqrt{\tilde{\zeta}_{n+1} + \tilde{\epsilon}}}\bigg) \label{sup:eq:x_n+1_adam},
\end{align}
where we also set $\widetilde{{\Delta t}}:=\frac{{\Delta t}\sqrt{\alpha}}{\gamma}$
and $\tilde{\epsilon}:=\alpha\epsilon$. Note that the new momentum variable is sign-flipped which leads to a plus sign in front of the force in the  $\tilde{p}_{n+1}$-equation and a minus sign in front of the momentum in the $x_{n+1}$ equation, which is conventionally reversed for Langevin dynamics-based schemes. Apart from an additional scaling of $\tilde{p}_{n+1}$ and $\tilde{\zeta}_{n+1}$ (see the end of this section), \eqref{sup:eq:p_n+1_adam}-\eqref{sup:eq:x_n+1_adam} form the Adam scheme as introduced in \cite{Adam} and employed by widely used machine learning frameworks such as PyTorch \cite{pytorch} or Tensorflow \cite{tensorflow}. 
According to \eqref{sup:eq:zeta_n+1_adam}, the $n$-th iterate of $\tilde{\zeta}$ is given by 
\begin{equation}\label{sup:eq:adam_average_zeta}
\tilde{\zeta}_n=\beta_2^n\tilde{\zeta}_0+(1-\beta_2)\sum_{i=0}^{n-1}\beta_2^{n-1-i}[\nabla U(x_i)]^2, 
\end{equation}
which takes the form of an exponentially weighted average over the squared gradient components, where smaller weights are assigned to values further in the past.  Since Adam is an optimizer, not a sampler, it is inherently deterministic. However, if one assumes the loss landscape itself is subject to noise (e.g., from dataset subsampling), such that in any given iteration, $U(x_t)$ is an unbiased estimator of the true loss at that point, $\tilde{U}(x_t)$, one observes that \eqref{sup:eq:adam_average_zeta} estimates the second moments of the gradients, i.e., the uncentered variances\footnote{This is only approximately true as the distribution of $U(x_t)$ will be different for different points $x_t$ along the trajectory. However, as mentioned in Sec. 3 in \cite{Adam}, due to the decaying weights assigned to gradients further in the past in the sum in \eqref{sup:eq:adam_average_zeta}, the error can be assumed to be small.}. Similarly, the  momentum accumulates an estimate of the first moment, i.e., the expectation of the gradient. Without gradient noise, the moving averages $\tilde{p}_n$ and $\tilde{\zeta}_n$ can be interpreted as estimates of the averaged loss gradient and gradient variance.
In both cases, the parameter update \eqref{sup:eq:x_n+1_adam} approximates
\begin{equation}
x_{n+1}             = x_n - \widetilde{{\Delta t}} \bigg(\frac{\mathbb{E}\Big[\nabla U(x_n)\Big]}{\sqrt{\text{Var}_0(\nabla U(x_n)) + \tilde{\epsilon}}}\bigg),
\end{equation}
where the $\text{Var}_0(X)$ denotes the uncentered variance of $X$. One therefore obtains larger steps in regions of large gradients but small gradient variances (curvatures). 

From \eqref{sup:eq:x_n+1_adam} it can be seen that the adaptive stepsize in step $n$ given by 
\begin{equation}
\Delta t_n=\frac{\widetilde{{\Delta t}}}{\sqrt{\tilde{\zeta}_{n}+\tilde{\epsilon}}}, 
\end{equation}
such that  larger values of the moving averages of the squared gradients lead to smaller adaptive stepsizes.

We note that Adam typically contains two additional steps: Before inserting the momentum $\tilde{p}_{n+1}$ and stepsize-scaling variable $\tilde{\zeta}_{n+1}$ into \eqref{sup:eq:x_n+1_adam}, they are rescaled according to 
\begin{equation}
\hat{p}_{n+1}=\frac{\tilde{p}_{n+1}}{1-\beta_1^{n+1}}, \hspace{1cm} 
\hat{\zeta}_{n+1} = \frac{\tilde{\zeta}_{n+1}}{1-\beta_2^{n+1}}.
\end{equation}
This is done in order to remove the bias due to the initial conditions from the estimates of the gradient moments (see Algorithm 1 and Sec. 3 in \cite{Adam}). Since both $\beta_1^n\to0$ and $\beta_2^n\to0$ for $n \to \infty$, these steps can often be skipped in practice, which is why we will not consider them in the rest of this work.

\section{Further Details of Stepsize Control Mechanism}\label{sup:sec:stepsize_variation}
The stepsize adaptation mechanism used by the SamAdams process presented in Sec. \ref{sec: SamAdams} in the main text is based on three components. The first is the evolution of the force-sensitive $\zeta$-variable, given by 
\begin{equation}\label{sup:eq:zeta_dynamics}
\diff \zeta_{\tau}= - \alpha \zeta_{\tau} \diff \tau + \Omega^{-1} \|\nabla U(x_{\tau}) \|^s\diff \tau,  
\end{equation}
with the two hyperparameters $\alpha>0$ and $\Omega>0$, whose role will be expanded upon below. The second component is given by the choice of a Sundman transform kernel $\psi$ as a function of $\zeta$. For example, we set $\psi(\zeta)\equiv\psi^{(1)}(\zeta)$ with $\psi^{(1)}$ from Sec. \ref{sec: SamAdams}, given by
\begin{equation}
    \psi^{(1)}(\zeta)=m\frac{\zeta^r+M}{\zeta^r+m}, \hspace{10pt} \text{with } \hspace{2pt} 0<m<M<\infty.
\end{equation}
As mentioned in the main text, this form resembles the term used by Adam but allows for more flexibility via the scaling hyperparameter $r>0$ and stability due to its boundedness, $\psi^{(1)}(\zeta)\in (m,M)$ for all $\zeta>0$ (note that once $\zeta$ is initialized to $\zeta_0>0$, it will always remain positive due to  \eqref{sup:eq:zeta_dynamics}). The third component is the time rescaling relationship $\Delta t=\psi(\zeta)\Delta \tau$, which  scales the constant stepsize in virtual time, $\Delta \tau$,  with the transform kernel evaluated at $\zeta$  to yield the adaptive stepsize in real time, $\Delta t$. Fig. \ref{sup:fig:sundman_transform}
 shows the transform kernel $\psi^{(1)}(\zeta)$ with $m=0.1$ and $M=10$ (the values that were used in most of our experiments) and for various $r$. Since $m$ and $M$ are the bounds of $\psi^{(1)}$, they also specify the bounds on the adaptive stepsize, $\Delta t \in (m\Delta \tau, M\Delta \tau)$. As denoted by the black arrows in the figure, larger forces tend to increase the value of $\zeta$ which in turn decreases the value of $\psi^{(1)}(\zeta)$, leading to a decrease in adaptive stepsize.
 The red line denotes the value of $\zeta$ for $\psi^{(1)}(\zeta)=1$, i.e., where $\Delta t=\Delta \tau$. It thus gives the boundary between the $\zeta$-regions where $\Delta t$ is smaller or greater than $\Delta \tau$. For $m=1/M$, this boundary is exactly at $\zeta=1$, i.e., $\psi^{(1)}(\zeta)=1$ (the red dot). Note that this changes for different $m$ and $M$. The exponent $r$ governs the sensitivity of the overall mechanism.
\begin{figure}[htb!]
\begin{center}
\includegraphics[width=5in]{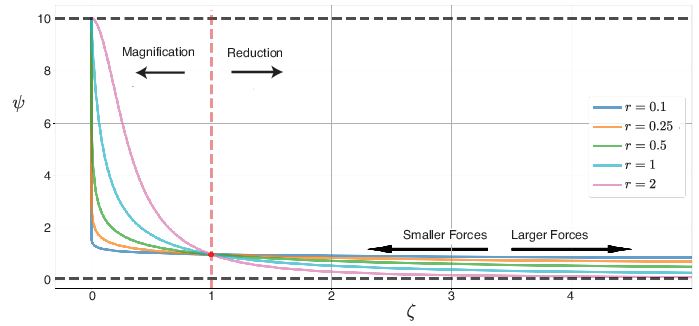}
\caption{Time transform kernel $\psi (\zeta)\equiv \psi^{(1)}(\zeta)$ as a function of $\zeta$ for $m=0.1$,  $M=10$, and different exponents $r$. The black dashed lines denote the bounds $m$ and $M$. The red point at (1,1) and red dashed line separate the $\zeta$-region in which the basic stepsize $\Delta \tau$ is magnified from the region where it is reduced (i.e., the regions of $\psi^{(1)}(\zeta)>1$ and $\psi^{(1)}(\zeta)<1$, resp.). The arrows denote the direction of the  $\zeta$-evolution dependent on the experienced forces $\|\nabla U \|^s$.}
\label{sup:fig:sundman_transform}
\end{center}
\end{figure}

How large the forces need to be to lead to a decrease of $\Delta t$ below $\Delta \tau$ (for a fixed set of transform kernel parameters $r,m,M$) can be controlled with the hyperparameters $\alpha$ and $\Omega$ in \eqref{sup:eq:zeta_dynamics}. From looking at the solution of \eqref{sup:eq:zeta_dynamics} given by (see\eqref{eq:zeta_solution} in the main text) 
\begin{equation}
\zeta(\tau)=e^{-\alpha\tau}\zeta(0)+\Omega^{-1}\int_{0}^{\tau}e^{-\alpha(\tau-s)}\|\nabla U(x_s)\|^s{\rm d}s,\label{sup:eq:zeta_solution}
\end{equation}
we see that that $\zeta$ is identical to an exponentially weighted moving average over the past force magnitudes raised to the power $s$, where $\alpha$ governs how strongly past values are suppressed. Pulling $\Omega^{-1}$ into the integral, it is clear that we actually average over $\Omega^{-1}\|\nabla U\|^s$,  which means that $\Omega^{-1}$ linearly scales the obtained $\zeta$ values, directly influencing the size of $\Delta t$. For a more concrete view on the influence of the two parameters, we look at the discretized version of \eqref{sup:eq:zeta_solution} employed by our splitting integrators described in Sec. \ref{sec: integration} in the main text, given by 
\begin{equation}\label{sup:eq:Z-step}
\zeta_{n+1}=\Phi_{\Delta\tau}^{{\rm Z}}(\zeta_n,x_n)=e^{-\Delta\tau\alpha}\zeta_n+\frac{1}{\Omega \alpha}(1-e^{-\Delta\tau\alpha})\| \nabla U(x_n)\|^s.
\end{equation}
From this, it follows that the $n$-th iterate is given by
\begin{equation}\label{sup:eq:zeta_iterates}
\zeta_n=p^n\zeta_0+q\bigg(\sum_{i=1}^np^{n-i} \|\nabla U(x_i)\|^s\bigg),
\end{equation}
with $p:=\exp(-\Delta \tau \alpha)$,  $q:=\frac{1}{\Omega \alpha}(1-p)$. This form is equivalent to the one used in Adam, see \eqref{sup:eq:adam_average_zeta}. As mentioned in the main text, the fundamental difference is that in Adam the $\tilde{\zeta}$ in \eqref{sup:eq:adam_average_zeta} is a vector and the squaring of the force is an elementwise operation (leading to an adaptive stepsize per degree of freedom), which in our case is replaced by the Euclidean norm of the force, leading to a scalar $\zeta$ (and hence a single adaptive stepsize for all degrees of freedom). 
We confirm again that $\alpha$ through $p$ governs the influence of past forces, suppressing the ones further in the past more strongly. Larger values for $\alpha$ lead to smaller weights assigned to past forces leading to less "memory" in the $\zeta$-dynamics. 
The parameter $q$ linearly scales the contribution of the sum and thus the overall magnitude of $\zeta$ (and thus the obtained adaptive stepsize $\Delta t$). 
Note that $q$ depends on both $\alpha$ and $\Omega$, which means varying $\alpha$ will not only change the influence of past force values but also change the overall magnitude of $\zeta$ . If we simply wanted to change the moving average behavior without changing the average $\Delta t$-level, we have to adjust $\alpha$ such that $\mathbb{E}(\zeta_n)$ is kept constant, where the expectation is with respect to the (unknown) evolving law of $\zeta_n$. We care only about what happens at long times. Assume we have $(x_0,p_0,\zeta_0)$ sampled from the (unknown) invariant measure of the SamAdams dynamics. 
Since our proposed integrator (Algorithm 1 in the main text) approximates the invariant measure, we may assume  $\mathbb{E}(\zeta_n)\approx E_\zeta$  and $\mathbb{E}\Big[\|\nabla U(x_n)\|^s\Big] \approx E_{U}$ for all $n$.  
Taking the expectation of \eqref{sup:eq:zeta_iterates}, we then have
\begin{equation}
E_\zeta \approx p^nE_{\zeta}+q\bigg(\sum_{i=1}^np^{n-i} E_{U}\bigg).
\end{equation}
Using properties of the geometric series and the definition of $q$, we obtain 
\begin{equation}
E_\zeta \approx \frac{1}{\Omega \alpha}E_U.
\end{equation}
Thus, if we change $\alpha$ to influence the amount of memory in the system but we want to preserve the mean adaptive stepsize $\mathbb{E}(\Delta t)$, we need to set

\begin{equation}\label{sup:eq:change_a1_a2}
\Omega^{\text{new}}=\frac{\alpha^{\text{old}}}{\alpha^{\text{new}}}\Omega^{\text{old}},
\end{equation}
keeping the product $\Omega \alpha$ constant. 
Meanwhile, if we simply want to change the average magnitude of $\zeta$ (and hence $\Delta t$) without changing the weights of the moving average, we simply change $\Omega$ while keeping $\alpha$ fixed.  
In general terms, we obtain the following rules of thumb.
\begin{enumerate}
\item For fixed $\alpha$, smaller values of $\Omega$ lead to larger $\zeta$ and hence smaller $\Delta t$.
\vspace{0.2cm}
\item For fixed $\Omega$, larger values of $\alpha$ lead to smaller $\zeta$ and hence larger $\Delta t$, and also to less memory in $\zeta$.
\vspace{0.2cm}
\item Larger values of $\alpha$ with $\Omega$ scaled according to \eqref{sup:eq:change_a1_a2} leads to less memory in $\zeta$ while keeping $\mathbb{E}(\Delta t)$ fixed.
\end{enumerate}
In Fig. \ref{sup:fig:dt_rules}, we demonstrate the effects of changing $\alpha$ and $\Omega$ on the obtained stepsizes on the 2-dimensional star potential (see Fig. \ref{fig:star_trajectory} in the main text).
\begin{figure}[htb!]
\begin{centering}
\includegraphics[width=1\textwidth]{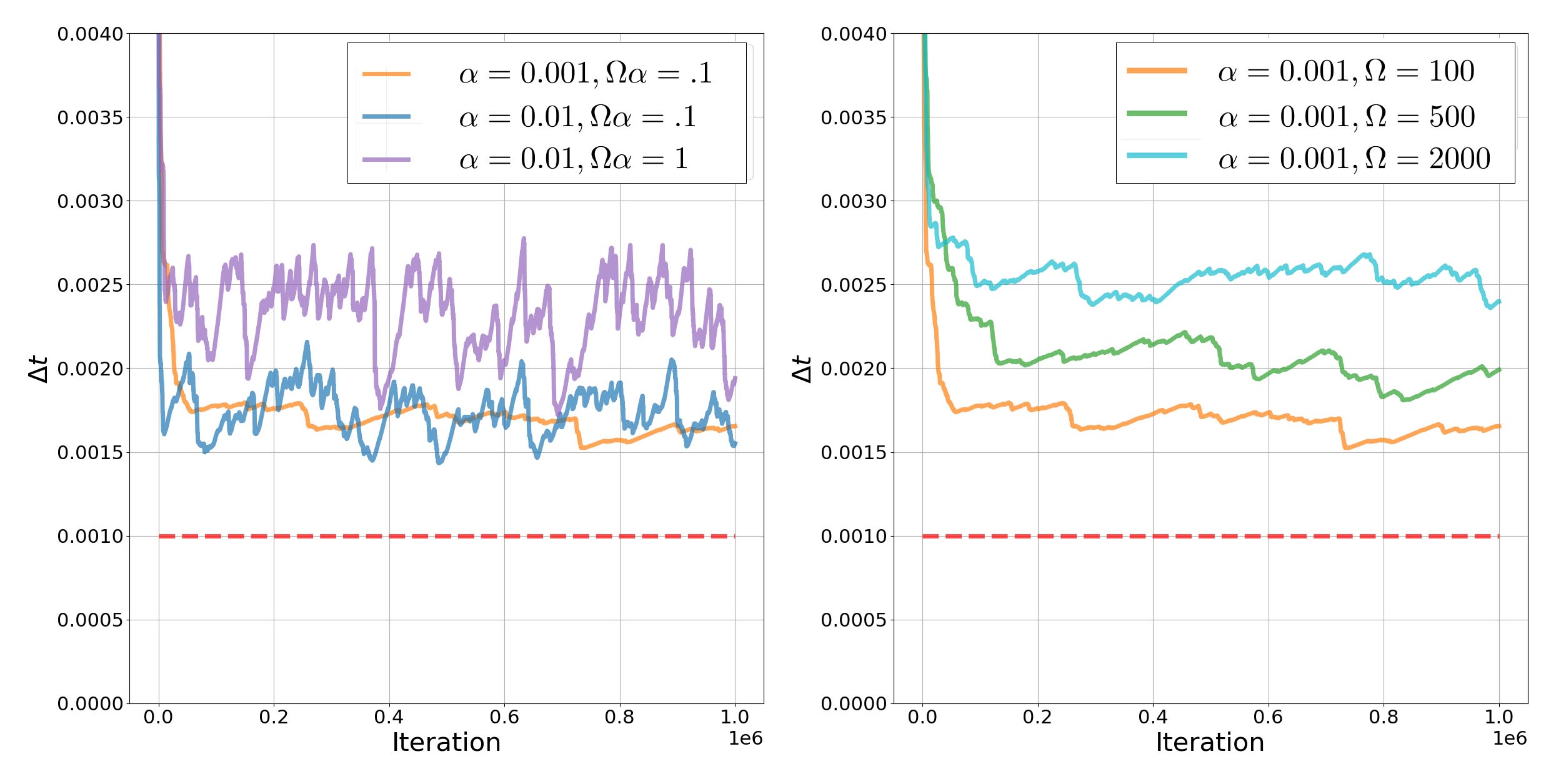} 
\end{centering}
\caption{\label{sup:fig:dt_rules} Adaptive stepsize obtained by SamAdams on a 2D test problem. \textbf{Left:} Changing $\alpha$ changes the smoothness of $\Delta t$. If $\Omega \alpha$ is identical for two different values of $\alpha$, $\mathbb{E}(\Delta t)$ is roughly identical as well. \textbf{Right:} Changing $\Omega$ while keeping $\alpha$ fixed changes  $\mathbb{E}(\Delta t)$ without influencing its smoothness. The red dashed lines denote the minimum admissible stepsize, i.e., $\Delta t=m\Delta \tau$ (here $\Delta \tau=0.01, \ m=10$.) }
\end{figure}
In the left-hand plot, we start with the orange curve obtained by $\alpha_1=0.001$, $\Omega=100$, and hence $\Omega \alpha=0.1$. Increasing the value of $\alpha$ by a factor of 10 and simultaneously changing $\Omega^{-1}$ by the same factor, such that $\Omega \alpha$ remains the same, we decrease the memory in the $\zeta$-dynamics leading to more spontaneous changes in $\Delta t$ while keeping the overall magnitude of $\Delta t$ fixed (blue curve). Increasing $\alpha$ without decreasing $\Omega$ will also remove memory from the system but in a less controlled way, as one also obtains different  $\mathbb{E}(\Delta t)$ (purple curve). On the right-hand side, starting again with the orange curve, we see that an increase of $\Omega$ while keeping $\alpha$ fixed leads to a systematic increase of $\Delta t$ without changing the moving average properties of the system. These mechanisms also work on more complex examples such as neural networks, see Fig. \ref{sup:fig:spiral_dt_variation}.
\begin{figure}[H]
\begin{centering}
\includegraphics[width=1\textwidth]{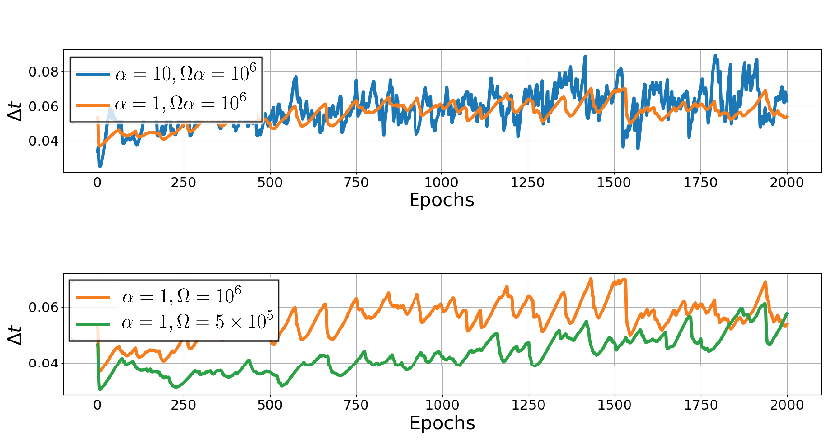} 
\end{centering}
\caption{\label{sup:fig:spiral_dt_variation} Adaptive stepsize trajectories of SamAdams for a simple fully connected neural network (one hidden layer)  on a spiral/Swiss-roll-type binary classification task. \textbf{Top:} Effect of changing $\alpha$ on $\Delta t$ if $\Omega \alpha$ remains fixed. \textbf{Bottom:} Effect of changing $\Omega$ if $\alpha$ remains fixed. Hyperparameters: $T=\gamma=1$, $\Delta \tau=0.03$, $m=0.1$, $M=10$, $r=0.25$, $s=2$. }
\end{figure}
Note that the rules (1)-(3) above can only be regarded as rules of thumb. They do not strictly hold near the stability threshold (close to which anything can happen). Even for stepsizes well below the threshold, the rules only hold for changes in $(\alpha, \Omega)$ that don't lead to qualitatively different trajectories. For example, increasing $\Omega$ while keeping $\alpha$ fixed will tend to increase $\Delta t$. A trajectory with higher $\Delta t$ might then experience larger forces compared to before, which will then lead to a decrease of $\Delta t$ again by virtue of larger obtained $\zeta$-values.

We further remark that on data science problems where the force $\nabla U$  is given by a sum over the data points, we find that $\Omega$ should often be chosen as $\mathcal{O}(N)$ or $\mathcal{O}(N^s)$ with $N$  the number of data points. This prevents $\zeta$ from drifting off to too large values which would lead to $\Delta t_n=m\Delta \tau\ \text{for all }n.$ Note that the loss gradient is often normalized by the number of data points by default in common machine learning packages, which might render this point obsolete.

Finally, it is also instructive to consider the limiting cases of \eqref{sup:eq:Z-step}. We have:
\begin{equation}
\lim_{\alpha\to\infty}\zeta_{n+1}=0,
\end{equation}
which leads to $\Delta t_{n+1}=M\Delta \tau$  for all $n$ (maximum $\Delta t$),
\begin{equation}
\lim_{\alpha \to 0}\zeta_{n+1}=\zeta_n + \Omega^{-1}\Delta \tau \|\nabla U(x_n) \|^s, 
\end{equation}
such that $\zeta_n \to \infty$ for $n\to \infty$ and hence $\Delta t_n\to m\Delta \tau$ (minimum $\Delta t$),
\begin{equation}
\lim_{\Omega\to 0}\zeta_{n+1}=\infty,
\end{equation}
and hence  $\Delta t_{n+1} = m\Delta \tau$ for all $n$ (minimum $\Delta t$), and
\begin{equation}
\lim_{\Omega \to \infty}\zeta_{n+1}=e^{-\Delta \tau \alpha}\zeta_n,
\end{equation}
leading to $\zeta_n\to0$ for $n\to \infty$ and hence $\Delta t_n \to M\Delta \tau$.

An interesting case arises for the limit $\alpha\to\infty$ such that $\Omega \alpha$ is kept constant,
\begin{equation}
\lim_{\substack{\alpha \to \infty, \\ \Omega \alpha = \text{const}}}\zeta_{n+1}=\frac{1}{\Omega \alpha}\| \nabla U(x_n) \|^s.
\end{equation}
This is a case with no memory but finite $\zeta$-values, where the adaptive stepsize is obtained via 
\begin{equation}
\Delta t = \psi(\| \nabla U(x)\|^{sr})\Delta \tau,
\end{equation}
which corresponds to the method used in \cite{alix}.
 
\FloatBarrier
\section{Symmetric Z-Steps}\label{sup:sec:symm_Z}
Algorithm \ref{sec:algorithm} in the main text consists of a symmetric integrator of the time-rescaled Langevin dynamics \eqref{eq: full_framework1}-\eqref{eq: full_framework2}, named ZBAOABZ, that adds two Z-steps as defined in Sec. \ref{sec: integration} before and after a constant-stepsize LD integrator, here BAOAB. Right after each of the Z-steps, the stepsize $\Delta t$ is updated. The BAOAB iteration is then executed with a stepsize given by the one obtained from the first Z-step.  While the symmetry of the LD integrator used, BAOAB, has beneficial impact on the integration error, the symmetric placing of two Z-steps in combination with two evaluations of $\psi(\zeta)$ is less obvious. Since $\hat{\Phi}^{Z}_{\frac{a}{2}}\circ \hat{\Phi}^{Z}_{\frac{a}{2}}=\hat{\Phi}^{Z}_a$, the dynamics would not change if the two Z-steps were merged and the Sundman transform $\psi$ was only evaluated once (namely right before the BAOAB iteration to update $\Delta t$). However, as described in Sec. \ref{sec:computing_averages}, the values $\psi(\zeta)$ are needed to reweight the collected samples of an observable $\phi(x_n,p_n)$ in order to obtain averages with respect to the correct measure. It is not obvious which value of $\psi(\zeta)$ should be used to reweight a sample generated by a BAOAB iteration: the value of $\psi(\zeta)$ before the BAOAB iteration, i.e., the value that governs the stepsize used by BAOAB, or the value of $\psi(\zeta)$ after the BAOAB iteration, namely the one that is obtained by updating $\zeta$ through the freshly obtained $(x,p)$-samples. Using two half-steps in $\zeta$ is a compromise between the two approaches which we found to outperform either of the two, see Fig. \ref{sup:fig:symm_Z_steps}.
\begin{figure}[]
\begin{centering}
\includegraphics[width=1\textwidth]{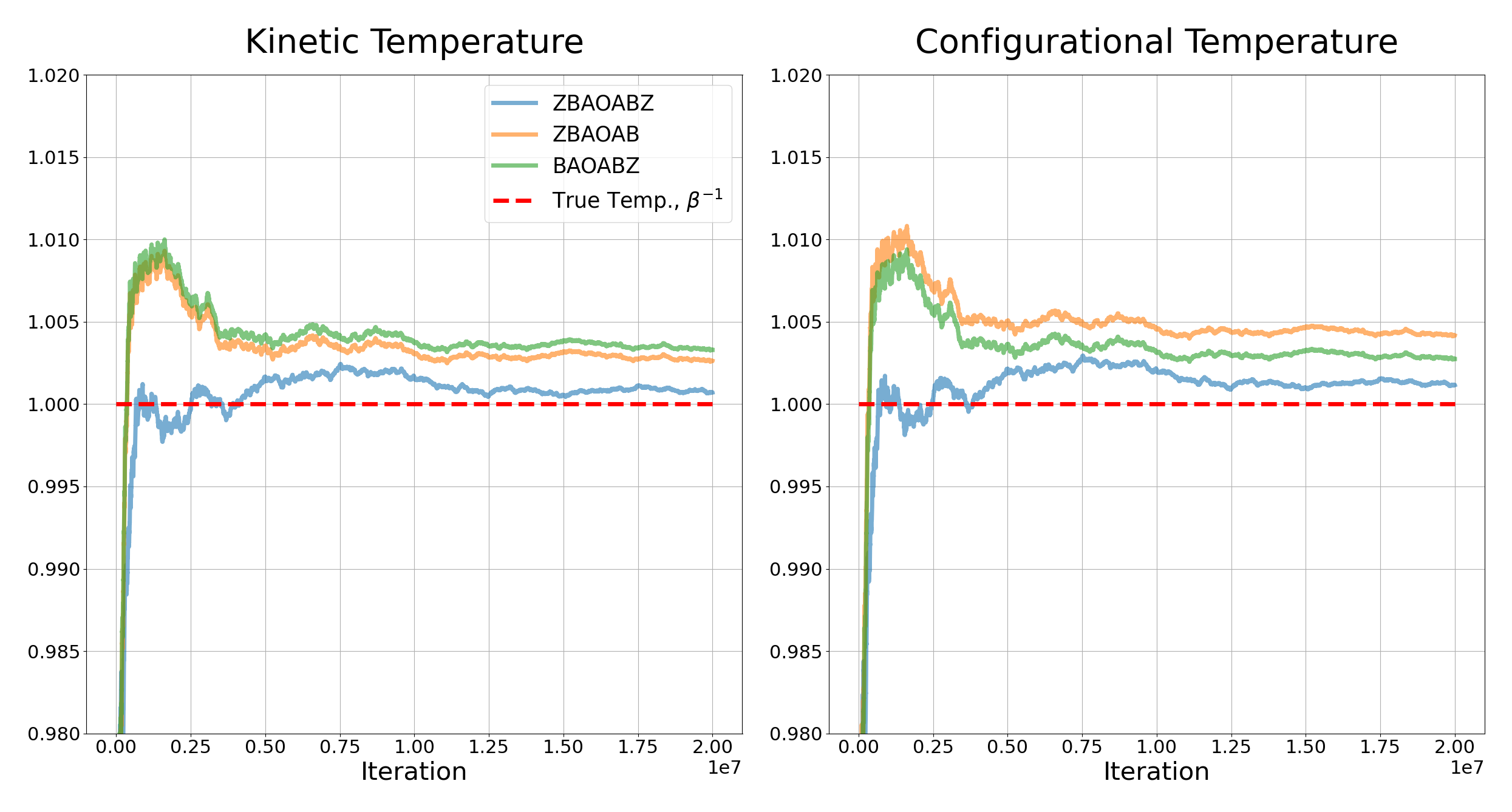}
\end{centering}
\caption{\label{sup:fig:symm_Z_steps} Comparison of three different Z-step placings concerning the computation of canonical averages of temperature observables. The symmetric placing, ZBAOABZ, corresponds to Alg. \ref{sec:algorithm} in the main text. The results were obtained on the star potential (see main text), and averaged over time and 200 trajectories. Simulation settings: $\beta^{-1}=\gamma=\alpha=1,\  \Delta \tau=0.01,\  g(x,p)=\|\nabla U(x) \|^2,\  \psi=\psi^{(1)} \text{ with } m=0.1,\  M=10,\ r=0.25$. }
\end{figure}

\FloatBarrier
\section{ZBAOABZ Observed Integration Order}\label{sup:sec:integration_order}
While the integrator ZBAOABZ (see Sec. \ref{sec: integration}) can be interpreted as solving Langevin Dynamics with variable stepsizes in time $t$, it is a conventional constant-stepsize scheme in time $\tau$ (the process given by \eqref{eq: full_framework1}-\eqref{eq: full_framework2}), for which conventional error analysis may be used. As mentioned in the main text and shown in \cite{LeMa2013}, the BAOAB integrator is of weak order 2 in the ergodic limit \footnote{The weak order even becomes 4 when considering configurational observables and working in the high-friction limit, $\gamma \to \infty$.}. This means that for a given observable $\phi(q)$, $q:=(x,p)$,  and stepsize $\Delta t$, we have
\begin{equation}
\lim_{n\to\infty}\Big|\mathbb{E}\big[\phi\big(q_n\big)\big]-\mathbb{E}\big[\phi\big(q(n\Delta t)\big)\big] \Big| < C\Delta t^2,
\end{equation}
with constant $C>0$ and $q_n$ the iterates obtained through the integrator.
The error analysis of ZBAOABZ is more complicated than BAOAB because of the presence of multiplicative noise, but from our analysis (see Sec. 3) we expect weak second order in the ergodic limit.
Here we verify this hypothesis numerically. We run ZBAOABZ for different values of stepsize $\Delta \tau$ and plot the weak errors of several observables against $\Delta \tau$ on the 2D star potential (see main text for its definition). For each $\Delta \tau$, we run $N_{tr}$ independent trajectories for $N_{iter}$ iterations, with $(N_{tr},\ N_{iter})=(300,\ 25\cdot 10^6)$ for $\Delta \tau>0.02$ and $(N_{tr},\ N_{iter})=(600,\ 100 \cdot 10^6 )$ for smaller $\Delta \tau$. The means were then obtained by first computing the time-average per trajectory (using the reweighting process described in Sec. \ref{sec:computing_averages}) and then by averaging over trajectories. The results are shown in Fig. \ref{sup:fig:integrator_order} and confirm the second order accuracy.

\begin{figure}[]
\begin{centering}
\includegraphics[width=1\textwidth]{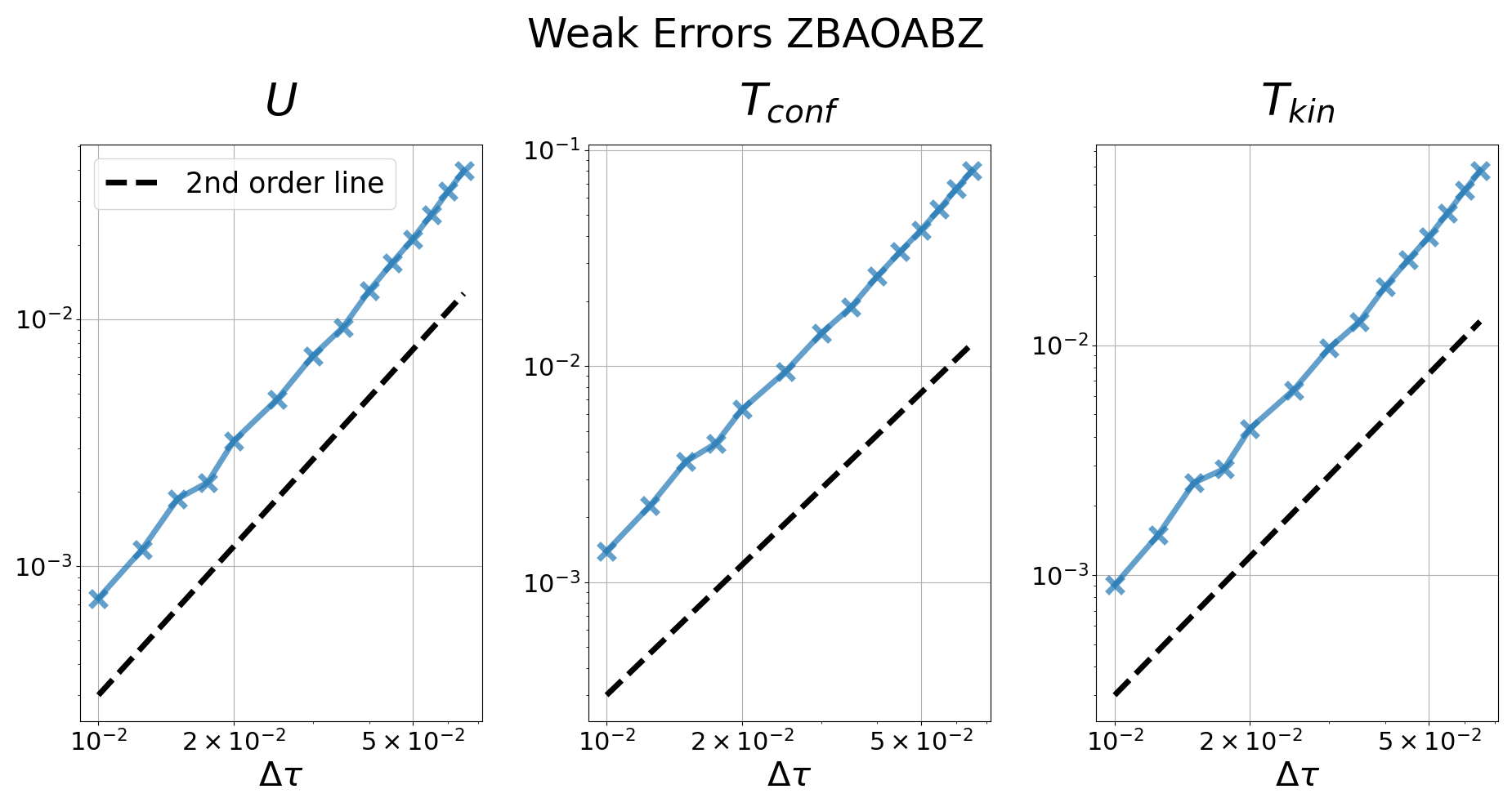}
\end{centering}
\caption{\label{sup:fig:integrator_order} Weak error of ZBAOABZ against stepsize $\Delta \tau$, obtained by averaging over time and independent trajectories. The ground truth of the potential energy was obtained via numerical quadrature. Simulation settings: $\gamma=0.1,\ \beta^{-1}=\alpha=1,\ g(x,p)=\|\nabla U(x) \|^2,\ \psi=\psi^{(1)}$ with $m=0.1,\ M=10,\ r=0.25$. }
\end{figure}

\FloatBarrier
\section{Example with Gradient Noise in Logistic Regression}\label{sup:sec: LogisticRegression}
We sample from the posterior of a logistic regression model on the forest covertype dataset \cite{covertype_31}. This time, we examine the effect of using stochastic gradients with various batch sizes.
We run SamAdams with various batch size and measure the resulting training accuracies. To compare it with BAOAB, we measure the mean adaptive stepsize of SamAdams on the full-batch run and execute all BAOAB runs at this fixed stepsize. This setting captures how the two samplers cope with changing batch sizes (leaving all other hyperparameters fixed). Fig. \ref{fig:LogReg_train_accuracies} shows the results for 4 different values of $\alpha$. 

\begin{figure}[h]
\begin{centering}
\includegraphics[width=1\textwidth]{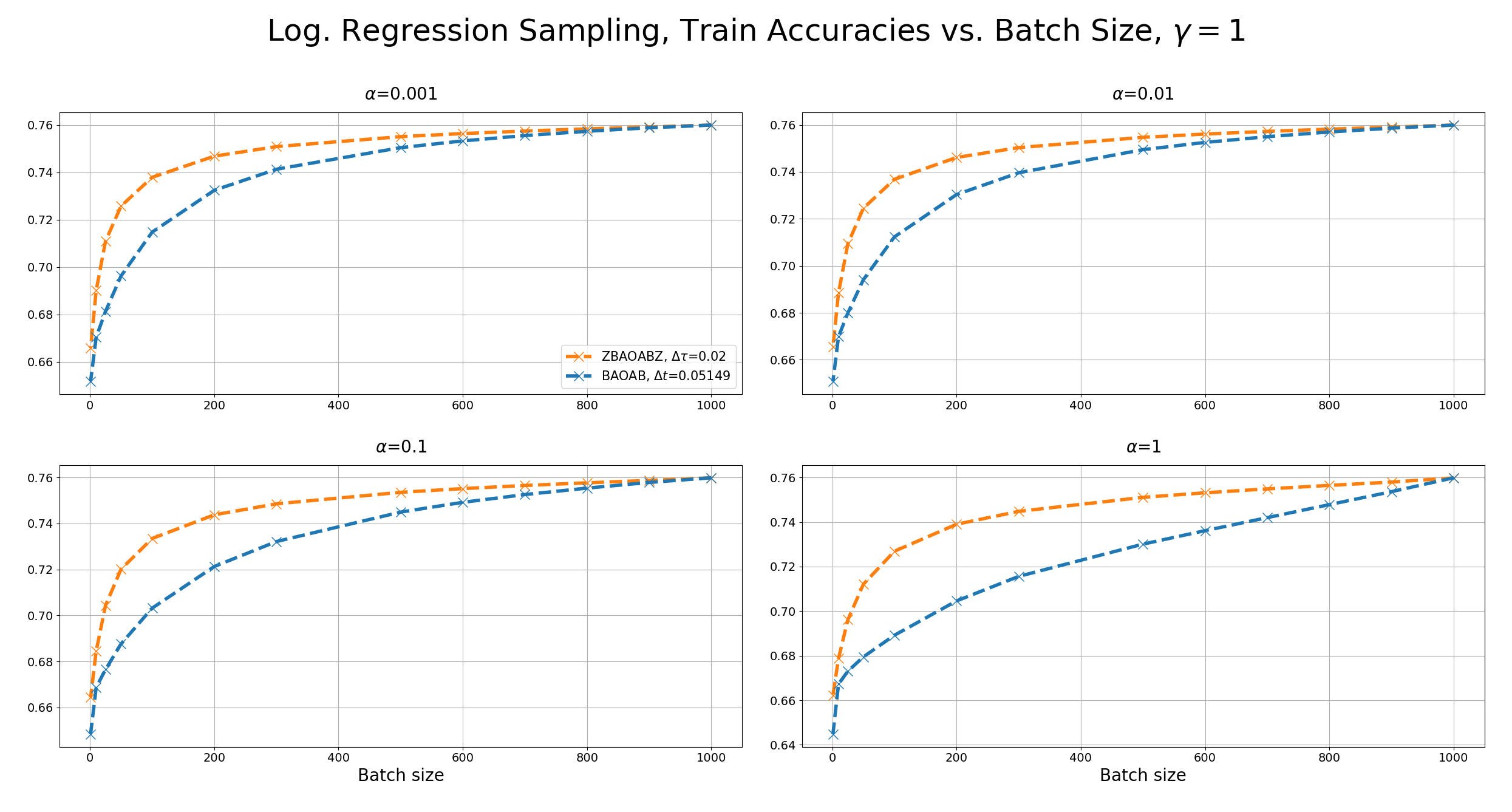}
\end{centering}
\caption{\label{fig:LogReg_train_accuracies} Training accuracies against stochastic gradient batch size $B$ for 4 different values of SamAdams hyperparameter $\alpha$. The stepsize of BAOAB was chosen to be identical to the mean of the adaptive stepsizses adopted by SamAdams on the full batch run on the given $\alpha$. 
}
\end{figure}
As the batch size decreases, the accuracy in both methods drops. However, due to the gradient noise, the otherwise simple dynamics\footnote{The logistic regression posterior is convex.} becomes more unstable, leading to automatic stepsize reduction in SamAdams. The sampling bias introduced by gradient noise from data subsampling scales with the stepsize, so a reduction of stepsize for decreasing batch size can restore accuracy. Note that the performance of BAOAB can be restored by picking a smaller stepsize as well. We merely wish to highlight the benefit of SamAdams to automatically react to induced gradient noise. Fig. \ref{fig:LogReg_dt_histograms} shows the batch size-dependent $\Delta t$-histograms obtained by SamAdams. 
\begin{figure}[]
\begin{centering}
\includegraphics[width=1\textwidth]{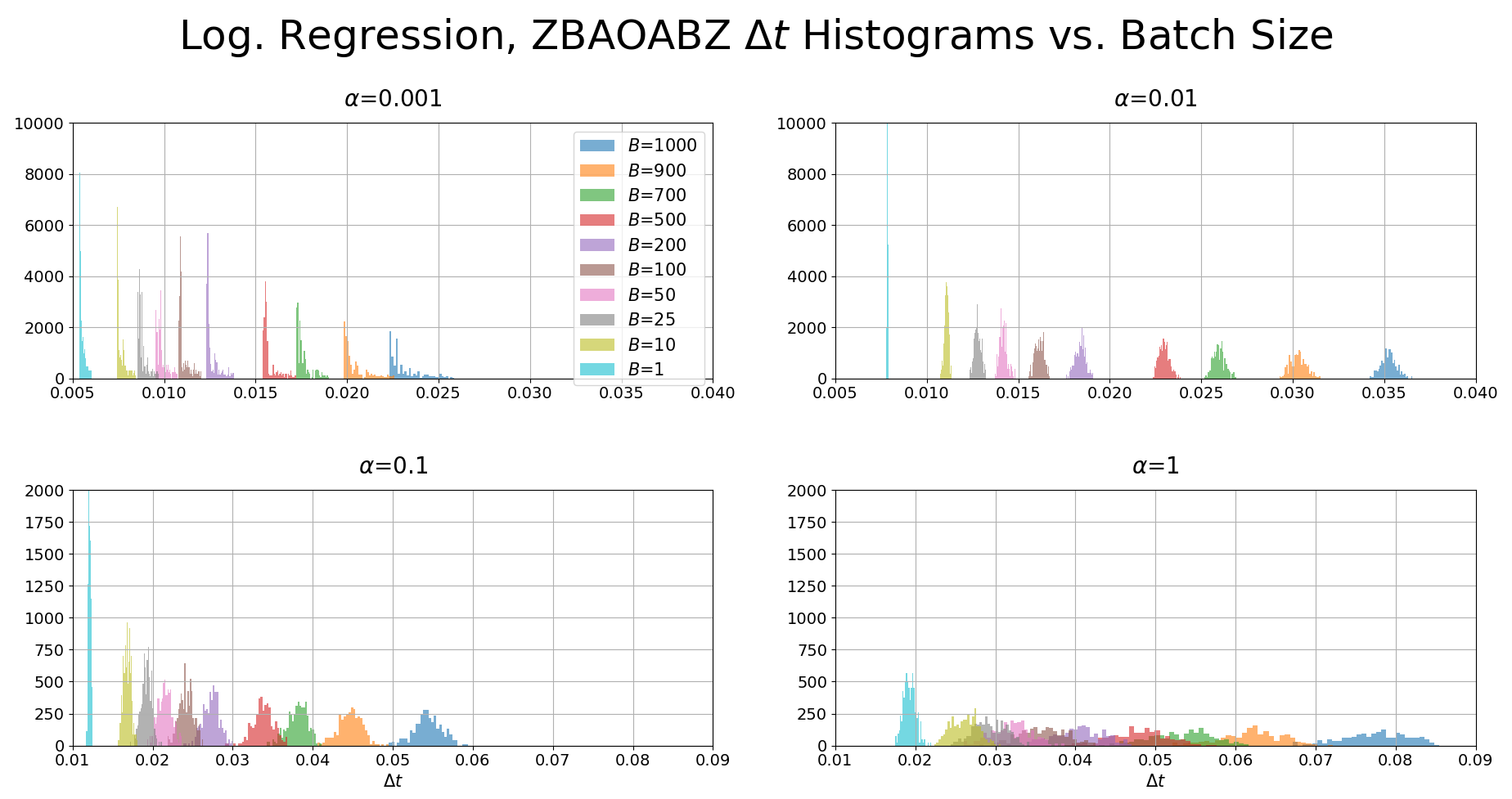}
\end{centering}
\caption{\label{fig:LogReg_dt_histograms} SamAdams $\Delta t$ histograms against batch size for the experiment of Fig. \ref{fig:LogReg_train_accuracies}.}
\end{figure}
We see a decrease of the average stepsize with decreasing batch size, consistent across all $\alpha$ values. The mean and variance (and even the type of the distribution) depends on the hyperparameters $\alpha$ and $\Omega$ (with $\Omega$ kept constant here).
To illustrate the sensitivity of the stepsize adaptation on gradient noise, we dynamically switch the batch size between 1 and 1,000 samples and inspect the behavior of the stepsize $\Delta t$. Fig. \ref{fig:LogReg_dt_alternating_B} shows the result, together with curves for constant batch size $B=1$ and $B=1,000$.
\begin{figure}[]
\begin{centering}
\includegraphics[width=1\textwidth]{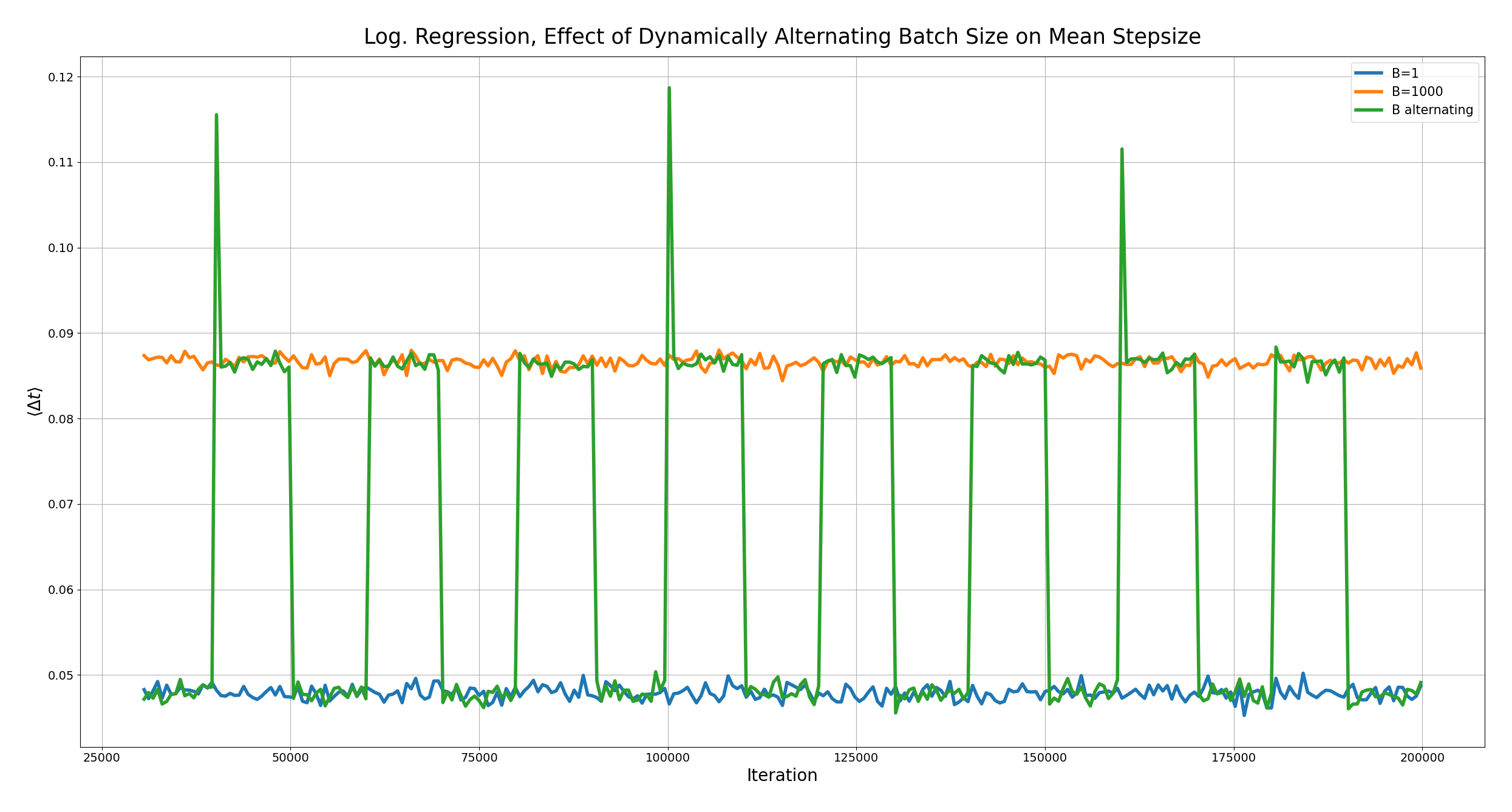}
\end{centering}
\caption{\label{fig:LogReg_dt_alternating_B} Adaptive stepsize of SamAdams when run with different batches. The green curve uses alternating batch sizes with $B$ changing between 1 and 1,000. The curves were averaged over 100 independent trajectories.}
\end{figure}
The stepsize adaptation is able to change $\Delta t$ dynamically and almost instantaneously in accordance with the batch size. While we did not find significant differences in the resulting accuracies of SamAdams and BAOAB when using dynamically changing batch sizes, we believe that this is a consequence of the simplicity of the problem. Logistic regression on comparably small datasets does not require stochastic gradients, neither for efficiency nor to escape local minima. It would be interesting to examine the combined impact of stepsize adaptation and gradient noise on large-scale deep learning models, where gradient subsampling is quintessential for efficient training.  For example, there have been several recent works in different contexts about the relationship between stepsize and batch size and about dynamically changing batch sizes (see, e.g., \cite{batch_size_schedule,large_batch_training,dynamic_transformer_batch_size,active_learning}). A potential direction of future research could then aim to answer the question whether the ideas in those works would benefit from a batch size sensitive stepsize adaptation.

\FloatBarrier
\section{CNN Architecture}\label{sup:sec:CNN_architecture}

The following code specifies the simple convolutional neural network used for the MNIST experiments in Sec. \ref{sec: MNIST_CNN} in the main text, implemented in PyTorch.
\begin{lstlisting}
class SimpleCNN(nn.Module):
   def __init__(self):
      super(SimpleCNN, self).__init__()
      self.conv1 = nn.Conv2d(in_channels=1, out_channels=32, kernel_size=3, padding=1)
      self.pool  = nn.MaxPool2d(kernel_size=2, stride=2)
      self.conv2 = nn.Conv2d(in_channels=32, out_channels=64, kernel_size=3, padding=1)
      self.conv3 = nn.Conv2d(in_channels=64, out_channels=128, kernel_size=3, padding=1)

      self.fc_input_size = 128 * 3 * 3   # based on MNIST image dimension.

      self.fc1 = nn.Linear(self.fc_input_size, 512)
      self.fc2 = nn.Linear(512, 256)
      self.fc3 = nn.Linear(256, 10)

   def forward(self, x):
      x = self.pool(nn.ReLU()(self.conv1(x)))
      x = self.pool(nn.ReLU()(self.conv2(x)))
      x = self.pool(nn.ReLU()(self.conv3(x)))
      x = x.view(x.size(0), -1)
      x = nn.ReLU()(self.fc1(x))
      x = nn.ReLU()(self.fc2(x))
      x = self.fc3(x)
      x = torch.log_softmax(x, dim=1)
      return x
\end{lstlisting}

\FloatBarrier
\end{document}